\newtheorem{theorem}{Theorem}
\newtheorem{lemma}{Lemma}
\newtheorem{proposition}{Proposition}
\newtheorem{corollary}{Corollary}
\newtheorem{remark}{Remark}
\newtheorem{example}{Example}
\newtheorem{definition}{Definition}
\newcommand{\Alg}{\ensuremath{\mathcal A}}
\newcommand{\Fun}{\ensuremath{\mathcal F}}
\newcommand{\Ora}{\ensuremath{\mathcal O}}
\newcommand{\Pro}{\ensuremath{\mathcal P}}
\newcommand{\cB}{\ensuremath{\mathcal B}}
\newcommand{\cU}{\ensuremath{\mathsf U}}
\newcommand{\cX}{\ensuremath{\mathsf X}}
\newcommand{\cY}{\ensuremath{\mathsf Y}}
\newcommand{\cZ}{\ensuremath{\mathsf Z}}
\newcommand{\Algs}{\ensuremath{\mathfrak A}}
\newcommand{\Prob}{\ensuremath{{\mathbb P}}}
\newcommand{\ProbQ}{\ensuremath{{\mathbb Q}}}
\newcommand{\E}{\ensuremath{\mathbb E}}
\newcommand{\var}{\operatorname{var}}
\newcommand{\ind}[1]{{\bf 1}_{\{#1\}}}
\newcommand{\Naturals}{\ensuremath{\mathbb N}}
\newcommand{\Reals}{\ensuremath{\mathbb R}}
\def\disc{d}
\def\Dim{n}
\def\eps{\varepsilon}
\def\err{\operatorname{err}}
\def\Err{\overline{\err}}
\def\Lip{{\rm Lip}}
\def\SC{{\rm SC}}
\newcommand{\tr}{\ensuremath{{\scriptscriptstyle\mathsf{T}}}}
\def\Normal{{\mathcal N}}
\newcommand{\deq}{\triangleq}
\def\wh#1{\ensuremath{\hat{#1}}}
\def\vol{\operatorname{vol}}
\def\argmin{\operatornamewithlimits{arg\,min}}
\begin{document}

\title{Information-Based Complexity, Feedback\\
and Dynamics in Convex Programming}

\author{Maxim Raginsky,~\IEEEmembership{Member,~IEEE},
and Alexander Rakhlin
\thanks{The work of M. Raginsky was supported in part by the NSF under grant CCF-1017564. The work of A. Rakhlin was supported by the NSF CAREER award DMS-0954737. A preliminary version of this work was presented at the 47th Annual Allerton Conference on Communication, Control and Computing, Monticello, IL, September/October 2009.}

\thanks{M.~Raginsky is with the Department of Electrical and Computer Engineering, Duke University, Durham, NC 27708, USA. E-mail: m.raginsky@duke.edu.}
\thanks{A.~Rakhlin is with the Department of Statisics, Wharton School of Business, University of Pennsylvania, Philadelphia, PA 19104, USA. E-mail: rakhlin@wharton.upenn.edu.}
}

\maketitle
\thispagestyle{empty}

\begin{abstract}
	We study the intrinsic limitations of sequential convex optimization through the lens of feedback information theory. In the oracle model of optimization, an algorithm queries an {\em oracle} for noisy information about the unknown objective function, and the goal is to (approximately) minimize every function in a given class using as few queries as possible. We show that, in order for a function to be optimized, the algorithm must be able to accumulate enough information about the objective. This, in turn, puts limits on the speed of optimization under specific assumptions on the oracle and the type of feedback. Our techniques are akin to the ones used in statistical literature to obtain minimax lower bounds on the risks of estimation procedures; the notable difference is that, unlike in the case of i.i.d.\ data, a sequential optimization algorithm can gather observations in a {\em controlled} manner, so that the amount of information at each step is allowed to change in time. In particular, we show that optimization algorithms often obey the law of diminishing returns: the signal-to-noise ratio drops as the optimization algorithm approaches the optimum. To underscore the generality of the tools, we use our approach to derive fundamental lower bounds for a certain active learning problem. Overall, the present work connects the intuitive notions of ``information'' in optimization, experimental design, estimation, and active learning to the quantitative notion of Shannon information.
\\ \\
\begin{IEEEkeywords}Convex optimization, Fano's inequality, feedback information theory, hypothesis testing with controlled observations, information-based complexity, information-theoretic converse, minimax lower bounds, sequential optimization algorithms, statistical estimation.
\end{IEEEkeywords}
\end{abstract}

\section{Introduction}

\PARstart{M}{any} problems arising in such areas as communications and signal processing, contrtol, machine learning, economics, and many others require solving mathematical programs of the form
\begin{align}\label{eq:conv_program}
\min \{ f(x) : x \in \cX \},
\end{align}
where $f : \Reals^\Dim \to \Reals$ is a convex objective function and $\cX$ is a compact, convex subset of $\Reals^\Dim$. Therefore, it is important to have a clear understanding of the {\em fundamental limits} on the efficiency of convex programming methods.

A systematic study of these fundamental limits was initiated in the 1970's by Nemirovski and Yudin \cite{NemYud83}. In their framework, an optimization algorithm is a sequential procedure that repeatedly queries a black-box {\em oracle} for information about the function being optimized, each query depending on the past information. The oracle may be deterministic (for example, giving the value of the function and its derivatives up to some order at any point) or stochastic. This leads to the notion of {\em information-based complexity}, i.e., the smallest number of oracle calls needed to minimize any function in a given class to a desired accuracy. The results in \cite{NemYud83} are very wide in scope and cover a variety of convex programming problems in Banach spaces; finite-dimensional versions are covered in \cite{Nes04} and \cite{NJLS09}.

For deterministic oracles, Nemirovski and Yudin derived lower bounds on the information complexity of convex programming using a ``counterfactual" argument: given any algorithm that purports to optimize all functions in some class $\Fun$ to some degree of accuracy $\eps$ using at most $T$ oracle calls, one explicitly constructs, for a particular history of queries and oracle responses, a function in $\Fun$ which is consistent with this history, and yet cannot be $\eps$-minimized by the algorithm using fewer than $T$ oracle calls (see also \cite{Nes04}). A similar approach was also used for stochastic oracles.

Proper application of this {\em method of resisting oracles} requires a lot of ingenuity. In particular, the stochastic case involves fairly contrived noise models, unlikely to be encountered in practice. In this paper, which expands upon our preliminary work \cite{RagRak09}, we will show that the same (and many other) lower bounds can be derived using a much simpler information-theoretic technique reminiscent of the way one proves minimax lower bounds in statistics \cite{Yu97,YanBar99,Tsy09}. Namely, we reduce optimization to hypothesis testing with controlled observations and then relate the resulting probability of error to information complexity using Fano's inequality and a series of mutual information bounds. These bounds highlight the role of {\em feedback} in choosing the next query based on past observations. One notable feature of our approach is that it does not require constructing particularly ``strange" functions or noise models. Moreover, we derive a ``law of diminishing returns" for a wide class of convex optimization schemes, which says that the decay of optimization error is offset by the decay of the rate at which the algorithm can reduce its uncertainty about the objective function.

The idea of relating optimization to hypothesis testing is not new. For instance, Shapiro and Nemirovski \cite{ShaNem05} derive a lower bound on the information complexity of a certain class of one-dimensional linear optimization problems by reducing optimization to a binary hypothesis testing problem pertaining to the parameter of a Bernoulli random variable (the outcome of a coin toss). The reduction consists in showing that any good optimization algorithm can be converted into an accurate estimator of the coin bias based on repeated independent trials; then one can derive the lower bound on the information complexity (equivalently, the minimum necessary number of coin tosses) from the data processing inequality for divergence (or Fano's inequality). This approach was recently extended to multidimensional optimization problems by Agarwal \emph{et al.}~\cite{Aga09,Aga10}. Like the present paper, their work uses information-theoretic methods to derive lower bounds on the oracle complexity of convex optimization, and their results are qualitatively similar to some of ours. However, what sets our work apart from \cite{ShaNem05,Aga09,Aga10} is that we explicitly account for the controlled manner in which the algorithm interacts with the oracle. This, in turn, allows us to derive tight lower bounds on the rate of error decay for certain types of \emph{infinite-step} descent algorithms, which is not possible with the reduction to coin tossing.

Sequential procedures have become increasingly popular in the field of machine learning, mostly due to the abundance of data and the resulting need to perform computation on-line. Convex optimization is not the only sequential setting being studied: recent research in machine learning has also focused on such scenarios as active learning, multi-armed bandits, and experimental design, to name a few. In all these settings, one element is common: each additional ``action'' should provide additional ``information'' about some unknown quantity. Translating this intuitive notion of ``information'' into precise information-theoretic statements is often difficult. Our contribution consists in offering such a translation for convex optimization and closely related problems.

\subsection{Notation}

Given a continuous function $f : \cX \to \Reals$ on a compact domain $\cX \subset \Reals^\Dim$, we denote by $f^*$ its minimum value over $\cX$:
\begin{align*}
f^* = \inf_{x \in \cX} f(x).
\end{align*}
We will use several basic notions from nonsmooth convex analysis \cite{Hiriart}. The {\em subdifferential} of $f$ at $x$, denoted by $\partial f(x)$, is the set of all $g \in \Reals^\Dim$, such that
\begin{align*}
f(y) \ge f(x) + g^\tr (y-x), \qquad \forall y \in \Reals^\Dim.
\end{align*}
Any such $g$ is a {\em subgradient} of $f$ at $x$. For a convex $f$, the subdifferential $\partial f(x)$ is always nonempty. When $|\partial f(x)|=1$, its only element is precisely the gradient $\nabla f(x)$. By $\| x \|_p$ we denote the $\ell_p$ norm of $x \in \Reals^\Dim$; the $\ell_2$ norm will also be denoted by $\| \cdot \|$. By $B^\Dim_p$ we denote the unit ball in $\Reals^\Dim$ in the $\ell_p$ norm. The $\ell_2$-diameter of $\cX$  is defined as
\begin{align*}
D_\cX \deq \sup_{x,x' \in \cX} \| x - x' \|.
\end{align*}
The $n \times n$ identity matrix will be denoted by $I_n$.

All abstract spaces are assumed to be standard Borel (i.e.,~Borel subsets of a complete separable metric space), and will be equipped with their Borel $\sigma$-fields. If $\cZ$ is such a space, then $\cB_\cZ$ will denote the corresponding $\sigma$-field. All functions between such spaces are assumed to be measurable. If $\cZ_1$ and $\cZ_2$ are two such spaces, then a {\em Markov kernel} \cite{Shiryaev,Kallenberg} from $\cZ_1$ to $\cZ_2$ is a mapping $P : \cB_{\cZ_2} \times \cZ_1 \to [0,1]$, such that for any $z_1 \in \cZ_2$ $P(\cdot|z_1)$ is a probability measure on $(\cZ_2,\cB_{\cZ_2})$ and for any $B \in \cB_{\cZ_2}$ $P(B|\cdot)$ is a measurable function on $\cZ_1$. We will use the standard notation $P(dz_2|z_1)$ for such a kernel.

We will work with the usual information-theoretic quantities, which are well-defined in standard Borel spaces \cite{GrayIT}. Given two (Borel) probability measures $\Prob$ and $\ProbQ$ on $\cZ$, their {\em divergence} is
\begin{align*}
	D(\Prob \| \ProbQ) \deq \begin{cases}
	 \displaystyle \int_\cZ \left( \log \frac{d\Prob}{d\ProbQ}\right) d\Prob, & \text{if } \Prob \ll \ProbQ\\
	+ \infty, & \text{otherwise}
\end{cases}
\end{align*}
where the notation $\Prob \ll \ProbQ$ means that $\Prob$ is {\em absolutely continuous} w.r.t.\ $\ProbQ$, i.e., $\ProbQ(B) = 0$ for any $B \in \cB_\cZ$ implies that $\Prob(B) = 0$ as well. If $\cZ$ is a product space, $\cZ = \cZ_1 \times \cZ_2$, then the {\em conditional divergence} between two probability distributions $\Prob$ and $\ProbQ$ on $\cZ$ given $\Prob_{Z_1}$ (the $\cZ_1$-marginal of $\Prob$) is
\begin{align}\label{eq:cond_div}
&	D(\Prob_{Z_2|Z_1} \| \ProbQ_{Z_2|Z_1} | \Prob_{Z_1}) \nonumber\\
&\deq \int_{\cZ_1}\Prob_{Z_1}(dz_1) D\big(\Prob_{Z_2|Z_1}(\cdot|z_1) \big\| \ProbQ_{Z_2|Z_1}(\cdot|z_1)\big),
\end{align}
where $\Prob_{Z_2|Z_1}$ and $\ProbQ_{Z_2|Z_1}$ are any versions of the regular conditional probability distributions of $Z_2$ given $Z_1$ under $\Prob$ and $\ProbQ$, respectively. This definition extends in the obvious way to situations when $\cZ_1$ or $\cZ_2$ are themselves product spaces. Thus, if $\Prob$ and $\ProbQ$ are two probability distributions for a random triple $(Z_1,Z_2,Z_3)$ taking values in a product space $\cZ = \cZ_1 \times \cZ_2 \times \cZ_3$, such that, under $\ProbQ$, $Z_2$ and $Z_3$ are conditionally independent given $Z_1$, i.e., $\ProbQ_{Z_3|Z_1,Z_2} = \ProbQ_{Z_3|Z_1}$ $\ProbQ$-a.s., then we will write
\begin{align}\label{eq:cond_div_CI}
& D(\Prob_{Z_3|Z_1,Z_2} \| \ProbQ_{Z_3|Z_1,Z_2} | \Prob_{Z_1,Z_2}) \nonumber\\
& \qquad = D(\Prob_{Z_3|Z_1,Z_2} \| \ProbQ_{Z_3|Z_1} | \Prob_{Z_1,Z_2}).
\end{align}	
Given a random couple $(Z_1,Z_2) \in \cZ$ with probability distribution $\Prob$, the {\em mutual information} between $Z_1$ and $Z_2$ is
\begin{align*}
	I(Z_1; Z_2) \deq D( \Prob \| \Prob_{Z_1} \otimes \Prob_{Z_2}) \equiv D( \Prob_{Z_2|Z_1} \| \Prob_{Z_2} | \Prob_{Z_1}).
\end{align*}
Given a random triple $(Z_1,Z_2,Z_3) \in \cZ_1 \times \cZ_2 \times \cZ_3$, the {\em conditional mutual information} between $Z_2$ and $Z_3$ given $Z_1$ is
\begin{align}
	I(Z_2; Z_3 | Z_1) &\deq D\big( \Prob_{Z_2,Z_3|Z_1} \big \| \Prob_{Z_2|Z_1} \otimes \Prob_{Z_3|Z_1} \big | \Prob_{Z_1} \big) \nonumber \\
	&\equiv D \big( \Prob_{Z_3|Z_1,Z_2} \big\| \Prob_{Z_3|Z_1} \big| \Prob_{Z_1,Z_2} \big), \label{eq:cond_MI}
\end{align}
where \eqref{eq:cond_MI} follows from Bayes' rule and from \eqref{eq:cond_div_CI}. In other words, the conditional mutual information $I(Z_2; Z_3 |Z_1)$ is given by the conditional divergence between the joint distribution of $Z_1,Z_2,Z_3$ and the distribution under which $Z_2$ and $Z_3$ are conditionally independent given $Z_1$.

\section{Sequential optimization algorithms and their information-based complexity}
\label{sec:IBC}

The work of Nemirovski and Yudin \cite{NemYud83} deals with fundamental limitations of sequential optimization algorithms in the real-number model of computation. The basic setting is as follows. We have a class $\Fun$ of convex functions $f: \cX \to \Reals$ on some compact convex domain $\cX \subset \Reals^\Dim$. We seek an ``optimal" algorithm that would solve the optimization problem \eqref{eq:conv_program} with a given guarantee of accuracy regardless of which $f \in \Fun$ were to be optimized. The algorithms of interest operate by repeatedly querying an {\em oracle} for information about the unknown objective $f$ at appropriately selected points in $\cX$ and then combining the accumulated information to form a solution. The notion of optimality of an algorithm pertains to the number of queries it makes before producing a solution, without regard to the {\em combinatorial complexity} of computing each query. In other words, we are interested in the {\em information-based complexity} (IBC) \cite{Traub,Plaskota} of convex optimization problems.

The theory of IBC is concerned with intrinsic difficulty of computational problems in terms of the minimum amount of information needed to solve every problem in a given class with a given guarantee of accuracy. The word ``information" here does not refer to information in the sense of Shannon, but rather to what is known {\em a priori} about the problem being solved, as well as what an algorithm is allowed to {\em learn} during its operation. There are three aspects inherent in this notion of information --- it is {\em partial}, {\em noisy}, and {\em priced}. Let us explain informally what these three terms mean in the context of optimization by means of a simple example.

Let $\cX = [0,1]$, and consider the function class
\begin{align}\label{eq:simple_function_class}
\Fun = \left\{ f_\theta(x) \deq \frac{1}{2} |x-\theta|^2 : \theta \in \cX \right\}.
\end{align}
We wish to design an algorithm that minimizes every $f = f_\theta \in \Fun$ to a given accuracy $\eps > 0$. At the outset, the only {\em a priori} information available to the algorithm consists of the problem domain $\cX$, the function class $\Fun$, and the desired accuracy $\eps$. The algorithm is allowed to query the value and the derivative of $f$ at any finite set of points $\{x_1,\ldots,x_T\} \subset \cX$ before arriving at a solution, which we denote by $x_{T+1}$. The queries are answered by an {\em oracle}, i.e., a (possibly stochastic) device that knows the function $f$ (or, equivalently, the parameter $\theta$) and responds to any {\em query} $x \in \cX$ with $Y(\theta,x,\omega) \in \Reals^2$, where $\omega$ is a random element from some probability space $(\Omega,\cB,\Prob)$ that represents oracle noise. The random variable $Y(\theta,x,\omega)$ is assumed to be a noisy observation of the pair $(f_\theta(x),f'_\theta(x))$. For concreteness, let us suppose that
\begin{align}
Y(\theta,x,\omega) &= Y(\theta,x,(W,Z)) \nonumber \\
&= ( f_\theta(x) + W, f'_\theta(x) + Z) \nonumber \\
&= \left( \frac{1}{2} | x - \theta|^2 + W, x - \theta + Z \right), \label{eq:simple_Gaussian_oracle}
\end{align}
where $W$ and $Z$ are an i.i.d.\ pair of $\Normal(0,\sigma^2)$ random variables.

The interaction of the algorithm and the oracle takes place as follows. Let $\{(W_t,Z_t)\}^\infty_{t=1}$ be an i.i.d.\ sequence. At time $t = 1,2,\ldots$, the algorithm computes the query $X_t$ as a function of the past queries $X_\tau, 1\le\tau\le t-1$ and the corresponding oracle responses $Y_\tau = Y(\theta,X_\tau,(W_\tau,Z_\tau)), 1 \le \tau \le t-1$. At time $t=0$ the algorithm knows only that $f \in \Fun$; this represents the {\em a priori} information. At time $t \ge 1$, the algorithm acquires additional data $(X^t,Y^t) = ((X_1,Y_1),\ldots,(X_t,Y_t))$, and so can refine its {\em a priori} information. At every time step, the information is {\em partial} in the sense that there are (potentially infinitely) many functions consistent with it, and it is also {\em noisy} due to the presence of the additive disturbances $(W^t,Z^t)$.

Formally, for the example outlined above, an algorithm that makes $T$ queries (or a $T$-step algorithm) is a tuple $\Alg = \{ \Alg_t : \cX^{t-1} \times \cY^{t-1} \to \cX \}^{T+1}_{t=1}$, where $\cY = \Reals^2$, so that, for $1 \le t \le T$, $X_t = \Alg_t(X^{t-1},Y^{t-1})$ is the query at time $t$, and $X_{T+1} = \Alg_{T+1}(X^T,Y^T)$ is the solution. We assume that information is {\em priced} in the sense that the algorithm is charged some fixed cost $c > 0$ for every query it makes. Thus, it is desired to keep the number of queries to a minimum. With this in mind, we can define the IBC for a given accuracy $\eps > 0$ as
\begin{align*}
&\text{IBC}(\eps) = \inf \Bigg\{ T \ge 1: \exists \Alg = \{\Alg_t\}^{T+1}_{t=1} \nonumber\\
& \qquad \qquad \text{ s.t. } \sup_{\theta \in \cX} \left[\E f_\theta(X_{T+1}) - f^*_\theta\right] \le \eps \Bigg\},
\end{align*}
where the expectation is taken w.r.t.\ the noise process $\{(W_t,Z_t)\}^\infty_{t=1}$. For this particular problem it can be shown that
\begin{align}\label{eq:IBC_bound}
\text{IBC}(\eps) &= \begin{cases}
1, & \sigma^2 = 0, \eps \in [0,1/2) \\
\Theta \left( \frac{\sigma^2}{\eps}\right), & \sigma^2 > 0, \eps \in (0,1/2) \\
0, &  \eps \ge 1/2
\end{cases}
\end{align}
The first entry $(\sigma^2=0,0 \le \eps < 1/2)$ follows because the algorithm can just query $x_1 = 0$, obtain the response $y_1 = ((1/2)\theta^2, -\theta)$, and immediately compute $x_2 = \theta$; the last entry $(\sigma^2 > 0 ,\eps \ge 1/2)$ follows because the maximum value of any $f_\theta \in \Fun$ on $\cX$ is at most $1/2$. The intermediate regime $(\sigma^2 > 0, \eps \in (0,1/2))$ is more involved. The main contribution of the present paper is a unified information-theoretic framework for deriving lower bounds on the IBC of arbitrary sequential algorithms for solving convex programming problems.

\subsection{Formal definitions}
\label{ssec:formal_defs}

The above discussion can be formalized as follows:

\begin{definition} A {\em problem class} is a triple $\Pro = (\cX, \Fun, \Ora)$ consisting of the following objects:
\begin{enumerate}
\item A compact, convex {\em problem domain} $\cX \subset \Reals^\Dim$; \item An {\em instance space} $\Fun$, which is a class of convex functions $f : \cX \to \Reals$;
\item An {\em oracle} $\Ora= (\cY, P)$, where $\cY$ is the {\em oracle information space} and  $P(dy|f,x), dy \in \cB_\cY, f \in \Fun, x \in \cX$, is a Markov kernel\footnote{Recall that $\Fun$ is a subset of $C(\cX)$, the space of all continuous real-valued functions on $\cX$. Equipped with the usual sup norm, $C(\cX)$ is a separable Banach space, so a Markov kernel from $\Fun \times \cX$ into $\cY$ is well-defined.}.
\end{enumerate}
\end{definition}

\noindent Some restrictions must be imposed in order to exclude oracles that are ``too informative," an extreme example being $\cY = \Fun \times \cX$ and $P(dy|f,x) = \delta_{f,x}(dy)$. One way to rule this out is to require the oracle in question to be {\em local} \cite{NemYud83}:

\begin{definition}\label{def:local_oracle}
We say that an oracle $\Ora$ is {\em local} if for every $x \in \cX$ and every pair $f,f' \in \Fun$ such that $f = f'$ in some open neighborhood of $x$, we have
\begin{align*}
P(dy|f,x) = P(dy|f',x), \qquad \forall dy \in \cB_\cY.
\end{align*}
\end{definition}

\noindent It is easy to see that the oracle described right before the definition is not local. Indeed, fix a point $x \in \cX$ and consider any two functions $f,f' \in \Fun$ that agree on some open neighborhood of $x$, but are not equal outside this neighborhood. Then $P(dy|f,x) = \delta_{f,x}(dy)$, but $P(dy|f',x) = \delta_{f',x}(dy)$, which violates locality. Most oracles encountered in practice are local (see, for instance, the examples in Section~\ref{sec:examples}).

To gain more insight into stochastic oracles, we can appeal to the basic structural result for Markov kernels: If $\cZ_1$ and $\cZ_2$ are standard Borel spaces, then any Markov kernel $P(dz_2|z_1)$ from $\cZ_1$ to $\cZ_2$ can be realized in the form $Z_2 = \Phi(z_1,W)$, where $W$ is a random variable uniformly distributed on $[0,1]$ and $\Phi : \cZ_1 \times [0,1] \to \cZ_2$ is a measurable mapping \cite[Lemma~3.22]{Kallenberg}. Thus, for any stochastic oracle $P(dy|f,x)$ we can find a {\em deterministic} oracle $\psi : \Fun \times \cX \to \cU$ with some information space $\cU$ and a measurable mapping $\Phi : \cU \times [0,1] \to \cY$, such that $P$ can be realized as \begin{align}\label{eq:SLO}
Y = \Phi(\psi(f,x),W)
\end{align}
with $W$ as above. Thus, $P$ will be local in the sense of Definition~\ref{def:local_oracle} whenever its ``deterministic part" $\psi$ is local.

Next, we make the notion of an optimization algorithm precise. In this paper, we deal only with deterministic algorithms, although all the results can be easily extended to cover randomized algorithms as well (cf.~\cite{NemYud83} for details):

\begin{definition}A {\em $T$-step algorithm} for a given $\Pro = (\cX,\Fun,\Ora)$ is a sequence of mappings $\Alg = \{\Alg_t : \cX^{t-1} \times \cY^{t-1} \to \cX \}^{T+1}_{t=1}$. The set of all $T$-step algorithms for $\Pro$ will be denoted by $\Algs_T(\Pro)$.
\end{definition}

\noindent The interaction of any $\Alg \in \Algs_T(\Pro)$ with $\Ora$, shown in Figure~\ref{fig:feedback_optimization_new}, is described recursively as follows:

\begin{enumerate}

\item At time $t=0$, a problem instance $f \in \Fun$ is selected by Nature and revealed to $\Ora$, but not to $\Alg$.

\item At each time $t = 1,2,\ldots,T$:
\begin{itemize}
\item  $\Alg$ queries $\Ora$ with $X_t = \Alg_t(X^{t-1},Y^{t-1})$, where $(X_\tau,Y_\tau) \in \cX \times \cY$ is the algorithm's query and the oracle's response at time $\tau \le t-1$.
\item $\Ora$ responds with a random element $Y_t \in \cY$ according to $P( dY_t | f,X_t)$.
\end{itemize}
\item At time $t = T+1$, $\Alg$ outputs the candidate minimizer $X_{T+1} = \Alg_{T+1}(X^T,Y^T)$.
\end{enumerate}

\begin{figure}[htbp]
	\centering
		\includegraphics[width=0.8\columnwidth]{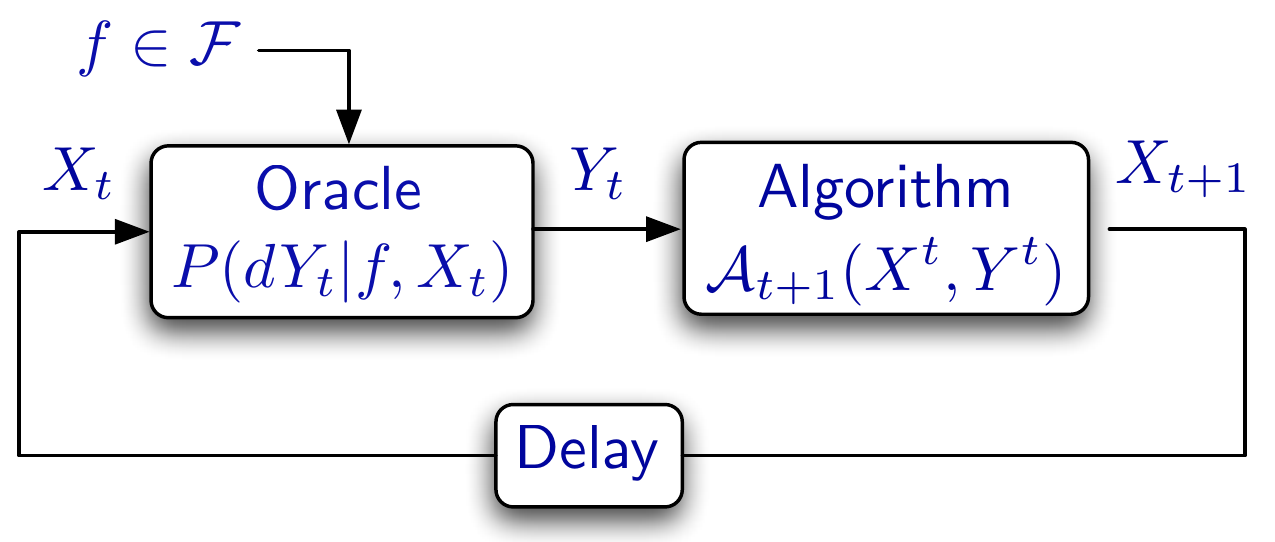}
	\caption{\label{fig:feedback_optimization_new} Interaction of an algorithm $\Alg$ and an oracle $\Ora$.}
\end{figure}

\noindent We can view the set-up of Figure~\ref{fig:feedback_optimization_new} as a discrete-time {\em stochastic dynamical system} with an unknown ``parameter'' $f \in \Fun$, input sequence $\{X_t\}$, and output sequence $\{Y_t\}$. The objective is to drive the system as quickly as possible to an $\eps$-minimizing state, i.e., any $x \in \cX$ such that $f(x) - f^* < \eps$, for every $f \in \Fun$. We are interested in the {\em fundamental limits} on the speed with which this can be done. Defining the {\em error} of $\Alg \in \Algs_T(\Pro)$ on $f \in \Fun$ by
\begin{align*}
\err_\Alg(T,f) \deq f(X_{T+1}) - \inf_{x \in \cX} f(x) = f(X_{T+1}) - f^*,
\end{align*}
we introduce the following definition:

\begin{definition} Fix a problem class $\Pro = (\cX,\Fun,\Ora)$. For any $r \ge 1$, $\eps > 0$, and $\delta \in (0,1)$, we define the $r$th-order {\em $(\eps,\delta)$-complexity} and the {\em $\eps$-complexity} of $\Pro$, respectively, as
\begin{align*}
& K^{(r)}_{\Pro}(\eps,\delta) \deq \inf \Big\{ T \ge 1: \exists \Alg \in \Algs_T(\Pro) \nonumber\\
& \qquad \qquad \emph{\,\,s.t.\,} \sup_{f \in \Fun} \Pr\big(\err^r_\Alg(T,f) \ge \eps\big) \le \delta \Big\}; \\
& K^{(r)}_{\Pro}(\eps) \deq  \inf \Big\{ T \ge 1: \exists \Alg \in \Algs_T(\Pro)  \nonumber\\
& \qquad \qquad \emph{\,\,s.t.\,} \sup_{f \in \Fun} \E \err^r_\Alg(T,f) < \eps \Big\}.
\end{align*}
When the underlying problem class $\Pro$ is clear from context, we will write simply $K^{(r)}(\eps,\delta)$ and $K^{(r)}(\eps)$. Moreover, when $r=1$ we will simply write $K_\Pro(\cdot)$ or $K(\cdot)$.
\end{definition}
\noindent The following is immediate from definitions (the proof is in Appendix~\ref{app:proofs}):
\begin{proposition}\label{prop:prob_det_bounds} For any $\Pro, r \ge 1, \eps > 0, \delta \in (0,1)$,
\begin{align*}
K^{(r)}_\Pro(\eps/\delta,\delta) \le K^{(r)}_\Pro(\eps).
\end{align*}
\end{proposition}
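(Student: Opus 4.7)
The plan is to observe that this is a straightforward consequence of Markov's inequality applied to the nonnegative random variable $\err^r_\Alg(T,f)$, so no new construction is needed: the \emph{same} algorithm that achieves the expected-error guarantee also achieves the tail guarantee.

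First, dispose of the trivial case: if $K^{(r)}_\Pro(\eps) = +\infty$, there is nothing to prove. Otherwise, set $T^\ast \deq K^{(r)}_\Pro(\eps)$. By definition of $K^{(r)}_\Pro(\eps)$, there exists an algorithm $\Alg \in \Algs_{T^\ast}(\Pro)$ such that
\begin{align*}
\sup_{f \in \Fun} \E\, \err^r_\Alg(T^\ast, f) < \eps .
\end{align*}
I would then fix an arbitrary $f \in \Fun$ and apply Markov's inequality to $\err^r_\Alg(T^\ast,f) \ge 0$ at the threshold $\eps/\delta$, obtaining
\begin{align*}
\Pr\bigl(\err^r_\Alg(T^\ast,f) \ge \eps/\delta\bigr) \le \frac{\delta}{\eps}\, \E\, \err^r_\Alg(T^\ast,f) .
\end{align*}
Taking the supremum over $f \in \Fun$ on both sides and using the bound above yields
\begin{align*}
\sup_{f \in \Fun} \Pr\bigl(\err^r_\Alg(T^\ast,f) \ge \eps/\delta\bigr) \le \frac{\delta}{\eps} \sup_{f \in \Fun} \E\, \err^r_\Alg(T^\ast,f) < \delta .
\end{align*}

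Hence $\Alg \in \Algs_{T^\ast}(\Pro)$ witnesses that $T^\ast$ is feasible in the definition of $K^{(r)}_\Pro(\eps/\delta,\delta)$, giving
\begin{align*}
K^{(r)}_\Pro(\eps/\delta,\delta) \le T^\ast = K^{(r)}_\Pro(\eps) .
\end{align*}
There is no real obstacle; the only points to be careful about are (i) the infinite case above, and (ii) keeping track of strict versus non-strict inequalities, which is why the definition of $K^{(r)}_\Pro(\eps)$ uses $<\eps$ rather than $\le \eps$ — this ensures the Markov bound lands at $<\delta$, hence $\le \delta$ as required by the definition of $K^{(r)}_\Pro(\eps/\delta,\delta)$.
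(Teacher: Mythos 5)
Your proof is correct and follows essentially the same route as the paper's: apply Markov's inequality to the nonnegative random variable $\err^r_\Alg(T,f)$ at threshold $\eps/\delta$, observe that the same algorithm certifying $K^{(r)}_\Pro(\eps)$ also certifies $K^{(r)}_\Pro(\eps/\delta,\delta)$. The only cosmetic difference is that the paper ranges over all feasible $T$ and takes an infimum at the end, whereas you take $T^\ast = K^{(r)}_\Pro(\eps)$ directly (valid, since the infimum over integers $T \ge 1$ is attained when finite, and you dispose of the infinite case separately).
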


	The complexities $K^{(r)}_{\Pro}(\eps,\delta)$ and $K^{(r)}_{\Pro}(\eps)$ capture the intrinsic difficulty of sequential optimization over the problem class $\Pro$ using any finite-step algorithm. However, most iterative optimization algorithms used in practice (such as stochastic gradient descent) are not run for a prescribed finite number of steps. Instead, they are run for however many steps are necessary until a desired accuracy is reached. Moreover, the error of the successive candidate minimizers produced by such an algorithm should decay monotonically with time. This observation motivates the following definitions:
	
\begin{definition}A {\em weak infinite-step algorithm} for $\Pro = (\cX,\Fun,\Ora)$ is a sequence of mappings $\Alg = \{\Alg_t : \cX^{t-1} \times \cY^{t-1} \to \cX \}^{\infty}_{t=1}$. The set of all weak infinite-step algorithms for $\Pro$ will be denoted by $\Algs_\infty(\Pro)$.	\end{definition}
	
	\begin{definition}\label{def:anytime} Given a problem class $\Pro = (\cX,\Fun,\Ora)$ and some $r \ge 1$, an algorithm $\Alg \in \Algs_\infty(\Pro)$ is {\em $r$-anytime} if
	\begin{align}\label{eq:anytime}
	\Err^{(r)}_\Alg(t,\Fun) \deq \sup_{f \in \Fun} \E \err^r_\Alg(t,f) \to 0 \qquad \text{as } t \to \infty.
	\end{align}
	\end{definition}

\noindent We can now ask about fundamental limits on the rate of convergence in \eqref{eq:anytime}:

	\begin{definition} For any problem class $\Pro$, we define the {\em $r$-anytime exponent} as
	\begin{align*}
	& \gamma^{(r)}_\Pro \deq \sup \Big\{ \gamma \ge 0 : \exists \Alg \in \Algs_\infty(\Pro) \nonumber\\
	& \qquad \qquad \emph{\,s.t.\,} \limsup_{t \to \infty} t^\gamma \cdot \Err^{(r)}_\Alg(t,\Fun) < \infty \Big\}
	\end{align*}
	\end{definition}

\noindent According to the above definitions, the candidate minimizer $X_{t+1}$ produced by a weak infinite-step algorithm $\Alg \in \Algs_\infty(\Pro)$ after $t$ queries is simultaneously the query at time $t+1$. Many algorithms used in practice, such as stochastic gradient descent, are weak infinite-step algorithms. A more general class of algorithms, which we may call {\em strong} infinite-step algorithms, would also include strategies in which the process of issuing queries (i.e., gathering information about the objective) is separated from the process of generating candidate minimizers. Stochastic gradient descent with trajectory averaging \cite{Polyak,NJLS09} is an example of such a strong algorithm. We do not consider strong infinite-step algorithms in this paper (except for a brief discussion in Appendix~\ref{app:finite_vs_infinite}), although their study is an interesting and important avenue for further research.

\section{Examples of problem classes and preview of selected results}
\label{sec:examples}

The following six examples show the variety of settings captured by our framework, ranging from ``standard" optimization problems to such scenarios as parameter estimation, sequential experimental design, and active learning.

\begin{example}\label{ex:lip}
\sloppypar {\em Given $L > 0$, let $\Fun^L_\Lip$ be the set of all convex functions $f: \cX \to \Reals$ that are $L$-Lipschitz, i.e., 
\begin{align*}
|f(x) - f(y) | \le L \| x - y \|, \qquad \forall x,y \in \cX.
\end{align*}
Let $\cY = \Reals \times \Reals^\Dim$ and let $P(dy|f,x)$ be a point mass concentrated at $(f(x),g(x))$, where, for each $x \in \cX$, $g(x)$ is an arbitrary subgradient in $\partial f(x)$. This oracle provides noiseless {\em first-order information}. When $L=1$, we will write $\Fun_\Lip$ instead of $\Fun^1_\Lip$.
}
\end{example}

\begin{example}\label{ex:lip_noisy} {\em Take $\Fun^L_\Lip$ as above, but now suppose that the oracle responds with
\begin{align*}
Y = (f(x) + W, g(x) + Z),
\end{align*}
where $W \in \Reals$ and $Z \in \Reals^\Dim$ are zero-mean random variables with finite second moments. Thus, any algorithm receives {\em noisy} first-order information, and the oracle is local.}
\end{example}

\begin{example}\label{ex:sc}
{\em Given $\kappa > 0$, let $\Fun^\kappa_\SC$ be the set of all differentiable functions $f : \cX \to \Reals$ that are $\kappa$-\emph{strongly convex}, i.e.,
\begin{align*}
f(x) \ge f(y) + \nabla f(y)^\tr (x-y) + \frac{\kappa^2}{2} \| x- y \|^2, \,\, \forall x, y \in \cX.
\end{align*}
As in the previous example, the oracle responds with
\begin{align*}
Y = (f(x) + W, g(x) + Z),
\end{align*}
where $W \in \Reals$ and $Z \in \Reals^\Dim$ are zero-mean random variables with finite second moments. When $\kappa=1$, we will write $\Fun_\SC$ instead of $\Fun^\kappa_\SC$.}
\end{example}

\begin{example}\label{ex:statistical} {\em Fix a compact convex set $\cX \subset \Reals^\Dim$ and a family of probability measures $\{ P_\theta : \theta \in \cX \}$ on $(\cY,\cB_\cY)$. Consider the class of convex functions
\begin{align}\label{eq:loss}
\Fun = \left\{ f_\theta(x) : \theta \in \cX \right\},
\end{align}
such that for every $\theta \in \cX$ $f_\theta(\theta) = \min_{x \in \cX} f_\theta(x)$. Consider also the oracle $\Ora = (\cY,P)$, defined by
\begin{align}\label{eq:stat_oracle}
P(dy|f_\theta,x) = P_\theta(dy), \qquad \forall (\theta,x) \in \cX \times \cX.
\end{align}
This oracle ignores the query $x$ and simply outputs a random element $Y \sim P_\theta$. The problem class $(\cX,\Fun,\Ora)$ thus describes the statistical problem of \emph{estimating the parameter of a probability distribution}. More generally, we can consider the function class
\begin{align}\label{eq:contrast}
\Fun = \left\{ f_\theta(\cdot) = \E_\theta [F(\cdot,Y)] \equiv \int_\cY F(\cdot,y) P_\theta(dy) : \theta \in \cX \right\},
\end{align}
where we assume that:
\begin{itemize}
\item For each fixed $y \in \cY$, the function $x \mapsto F(x,y)$ is convex
\item $f_\theta(\theta) = \min_{x \in \cX} f_\theta(x)$
\end{itemize}
The second condition says that $F : \cX \times \cY \to \Reals$ is a {\em contrast function} \cite{MCE}. Most classical problems in statistical inference, such as estimating the mean, the median, or the variance of a distribution, can be cast as minimizing a convex contrast function of the form \eqref{eq:contrast}. For instance, if $\cX \subset \Reals$, $\cY = \Reals$, $\E_\theta[Y] = \theta$ for each $\theta \in \cX$, and $F(x,y) = (x-y)^2$, then
$$
f_\theta(x) = \E_\theta[(Y-x)^2]
$$
with $f_\theta(x) \ge f_\theta(\theta) \equiv \var(P_\theta)$, so we recover the problem of estimating the mean.
}
\end{example}

\begin{example}\label{ex:design}{\em As we have just seen, the queries are of no use in statistical estimation since the samples the statistician obtains depend only on the unknown parameter $\theta$. By contrast, the setting in which the statistician's queries {\em do} affect the observations is known as {\em sequential experimental design} \cite{Fed72,MacKay92,Pan05}. Consider the case when $\cX \subset \Reals^\Dim$ is compact and convex, as in the above example. Suppose also that we have two families of probability measures on $\cY$, $\{Q_\theta : \theta \in \cX\}$ and $\{ P_{\theta,x} : (\theta,x) \in \cX \times \cX\}$. The function class is as in \eqref{eq:contrast} but with $Q_\theta$ replacing $P_\theta$, while the oracle now is defined by
\begin{align*}
	P(dy|f_\theta,x) = P_{\theta,x}(dy), \qquad \forall (\theta,x) \in \cX \times \cX.
\end{align*}
Thus, the role of $Q_\theta$ is to provide a measure of performance (or goodness-of-fit) of the final estimate of $\theta$, while $P_{\theta,x}$ describes the experimental model (i.e., the relationship between the input $X$ and the response $Y$ given the parameter $\theta$).
}
\end{example}

\begin{example}\label{ex:active} {\em Our last example is at the intersection of statistical learning theory and sequential experimental design. Let
\begin{align*}
\cX = [0,1], \,\, \cY = \{-1,+1\}, \,\, \Fun = \{ f_\theta(x) = |x-\theta| : \theta \in \cX\}.
\end{align*}
To define the oracle, suppose that there exist some $0 < c,C < 1/2$ and $\kappa \in [1,\infty)$, such that
\begin{align*}
c | x - \theta |^{\kappa - 1} \le \left| P(Y=1|f_\theta,x) - 1/2 \right| \le C | x - \theta|^{\kappa - 1},
\end{align*}
where the first inequality holds for all $x$ in a sufficiently small neighborhood of $\theta$. This oracle provides a noisy subgradient of $f_\theta$ at $x$, and the amount of noise depends on the distance between $x$ and $\theta$. This problem class is related to {\em active learning} of a threshold function on the unit interval \cite{castro2008minimax}, and will be treated in detail in Section~\ref{sec:anytime}.}
\end{example}

We now briefly discuss some of the lower bounds that arise from the techniques introduced in the paper. First, Theorem~\ref{thm:general_bound} in Section~\ref{sec:lower_bounds} implies a general lower bound of the form 
$$
\Omega\big(n^{\alpha}\log \left(1/\eps\right)\big)
$$
on the number of oracle calls required to $\eps$-minimize every function in a given class, where the exponent $\alpha > 0$ depends on the geometry of the problem domain $\cX$ and on the complexity of the instance space $\Fun$.  For convex Lipschitz functions and noiseless first-order oracles (Example~\ref{ex:lip}), or more generally for stochastic oracles that are sufficiently ``informative'' in a sense we make precise, this lower bound holds with $\alpha = 1$ (cf.~the discussion right after Theorem~\ref{thm:general_bound}). This lower bound is known to be optimal in the noiseless case \cite{NemYud83} and in certain noisy scenarios when $n=1$ \cite{BZ}; however, our techniques lead to a much more transparent proof of the bound.

For the noisy first-order oracle with zero-mean Gaussian noise of variance $\sigma^2$, we obtain lower bounds of the form
$$
\Omega\big(\sigma^2 n^{\alpha_1}(1/\eps)^{\alpha_2}\big),
$$
where the exponent $\alpha_1$ depends, as before, on the geometry of $\cX$, on the complexity of $\Fun$, as well as on whether the oracle supplies full first-order information (function value and subgradient) or just the subgradient. The exponent $\alpha_2$ depends on the details of the function class $\Fun$. More specifically:
\begin{itemize}
	\item for $\Fun_\Lip$ (Example~\ref{ex:lip_noisy}), we have $\alpha_2 = 2$ (Theorem~\ref{thm:Lip_bounds} in Section~\ref{sec:lower_bounds});
	\item for $\Fun_\SC$ (Example~\ref{ex:sc}), we have $\alpha_2 = 1$ (Theorem~\ref{thm:SC_bounds} in Section~\ref{sec:lower_bounds}).
\end{itemize}
The corresponding result for convex Lipschitz functions in $n=1$ can be found in \cite{NemYud83,ShaNem05}, yet we obtain the optimal dependence on $n$ for higher dimensions. Our lower bound for strongly convex functions seems to be new; in particular, Nemirovski and Yudin \cite{NemYud83} only consider the noiseless case, while Agarwal {\em et al.~}\cite{Aga09,Aga10} consider noisy first-order oracles, but with a different oracle model, which does not allow additive noise due to a coin-tossing construction. Ignoring the dependence on the dimension, we also obtain the error decay rate $\Omega(\sigma^2/t)$ for $\Fun_\SC$ when we restrict ourselves to anytime infinite-step algorithms (Theorem~\ref{thm:anytime_SC} in Section~\ref{sec:anytime}). To the best of our knowledge, such analysis does not appear anywhere else in the literature. The bounds of Eq.~(\ref{eq:IBC_bound}) essentially capture the fundamental limits of strongly convex programming in one dimension and can be easily deduced using our techniques (a sketch of the derivation is given in Section~\ref{ssec:illustration}). We also derive new (and tighter) lower bounds on anytime algorithms for minimizing higher-order polynomials under a second-moment error criterion (Theorems~\ref{thm:poly_arbitrary} and \ref{thm:poly_anytime} in Section~\ref{sec:anytime}).

Apart from ``standard'' optimization problems, our framework seamlessly captures several statistical problems with an optimization flavor. In particular, in Section~\ref{ssec:stats} we look at information-based complexity of statistical estimation and sequential experimental design (Examples~\ref{ex:statistical} and \ref{ex:design}, respectively). Here we do not aim at obtaining tight rates for specific settings of interest, but rather show the connections to the techniques employed in statistics. Finally, we show in Section~\ref{ssec:active} that our methodology leads to a particularly easy derivation of a lower bound for the active learning problem of Example~\ref{ex:active}. This bound was previously obtained in \cite{castro2008minimax} using a much more involved argument relying on a careful construction of a ``difficult" subset of functions.

Overall, our main contributions are the development of a general framework that captures many diverse settings with optimization flavor, as well as a novel analysis that takes into account the effect of feedback upon the dynamics of the interaction between the algorithm and the oracle.

\section{Setting the stage: optimization vs.\ hypothesis testing with feedback}
\label{sec:HTF}

We now lay down the foundations of our information-theoretic method for determining lower bounds on the information complexity of convex programming. The basic strategy is to show that the minimum number of oracle queries is constrained by the average rate at which each new query can reduce the algorithm's uncertainty about the function being optimized. 

Conceptually, our techniques are akin to the ones used in statistical literature to obtain minimax lower bounds on the risks of estimation procedures \cite{Yu97,YanBar99,Tsy09}. The main idea is this. Given a problem class $\Pro = (\cX,\Fun,\Ora)$, we construct a ``difficult'' finite subclass $\Fun' = \{f_0,\ldots,f_{N-1}\} \subset \Fun$, such that the functions in it are nearly indistinguishable from one another based on the information supplied by the oracle in response to any possible query, and yet they are sufficiently far apart from one another, so that a candidate approximate minimizer for any one of them fails to minimize all the remaining functions to the same accuracy. Once such a class is constructed, we consider a fictitious situation in which Nature selects an element of $\Fun'$ {\em uniformly at random}. Then for every $T$-step algorithm $\Alg \in \Algs_T(\Pro)$ we can construct a probability space $(\Omega,\cB,\Prob)$ with the following random variables defined on it:
\begin{itemize}
	\item $M \in \{0,\ldots,N-1\}$, which encodes the random choice of a problem instance in $\Fun'$
	\item $X^{T+1} \in \cX^{T+1}$, where $X^T$ are the queries issued by $\Alg$ and $X_{T+1}$ is the candidate minimizer
	\item $Y^T \in \cY^T$ are the responses of $\Ora$ to the queries issued by $\Alg$.
\end{itemize}
These variables describe the interaction between Nature, the algorithm, and the oracle, and thus have the causal ordering
\begin{align*}
	M,X_1,Y_1,\ldots,X_t,Y_t,\ldots,X_T,Y_T,X_{T+1},
\end{align*}
where, $\Prob$-almost surely,
\begin{align}
	&\Prob(M=i) = \frac{1}{N} \label{eq:Nature_choice} \\
	&\Prob(X_t \in A | M,X^{t-1},Y^{t-1}) = \ind{\Alg_t(X^{t-1},Y^{t-1}) \in A} \nonumber\\
	&\Prob(Y_t \in B | M, X^t,Y^{t-1}) = P(B|f_M,X_t),  \nonumber
\end{align}
for all $i \in \{0,\ldots,N-1\}, A \in \cB_\cX, B \in \cB_\cY$. In other words, $M \to (X^{t-1},Y^{t-1}) \to X_t$ and $(X^{t-1},Y^{t-1}) \to (M,X_t) \to Y_t$ are Markov chains for every $t$. 

The reason for such punctilious bookkeeping is that now we can relate the problem faced by $\Alg$ to sequential hypothesis testing with feedback, as defined by Burnashev \cite{Burnashev}. We can think of $M$ as encoding the choice of one of $N$ equiprobable hypotheses. At each time $t$, the algorithm issues a query $X_t$ and receives an observation $Y_t$ which is stochastically related to $X_t$ and $M$ via the kernel $\Prob(dY_t|M,X_t) = P(dY_t|f_M,X_t)$. The current query may depend only on the past queries and observations. At time $T+1$, the algorithm produces a candidate minimizer, $X_{T+1}$. As we will shortly demonstrate, we can use the information available to $\Alg$ at time $T+1$ to construct an {\em estimate} $\wh{M}_T$ of the true hypothesis $M$.\footnote{It is important to keep in mind that the hypothesis testing set-up is purely fictitious --- indeed, $\Alg$ may or may not know that the problem instances are drawn at random among $\{f_0,\ldots,f_{N-1}\}$, rather than arbitrarily from the entire instance space $\Fun$. The point is, though, that the average performance of $\Alg$ on $\Fun'$ cannot be better than its worst-case performance on $\Fun$. In statistical terms, the minimax risk of $\Alg$ over $\Fun$ is bounded below by the Bayes risk over any subset of $\Fun$.} Once this is done, we can analyze the {\em mutual information} $I(M; \wh{M}_T)$, which is well-defined because we have specified $\Prob$. In particular, the analysis hinges on the following observations. Suppose that $\Alg$ is such that for some $r \ge 1$, $\eps > 0$, and $\delta \in (0,1)$ we have
\begin{align*}
	\Pr\Big( \err^r_\Alg(T,f) \ge \eps\Big) \le \delta, \qquad \forall f \in \Fun
\end{align*}
where the probability is w.r.t.\ the randomness in the oracle's responses. Then, first of all,
\begin{align*}
	\Prob \Big(\err^r_\Alg(T,f_M) \ge \eps \Big) \le \delta.
\end{align*}
We will use this fact, together with the ``geometric'' distinguishability of the functions $\{f_i\}$, to show that $\Prob(\wh{M}_T \neq M) \le \delta$ and, as a consequence, that there exists some $\Psi_1(r,\eps,\delta) > 0$, such that
\begin{align}\label{eq:Psi_lower}
	I(M; \wh{M}_T) \ge \Psi_1(r,\eps,\delta).
\end{align}
In other words, a good algorithm should be able to obtain a nontrivial amount of information about the hypothesis $M$. On the other hand, by the data processing inequality, $I(M; \wh{M}_T) \le I(M; X^T,Y^T)$, and we will use statistical indistinguishability of $\{f_i\}$, as well as the structure of the oracle, to obtain an upper bound of the form
\begin{align}\label{eq:Psi_upper}
	I(M; \wh{M}_T) \le T \Psi_2(r,\eps)
\end{align}
with some $\Psi_2(r,\eps) < +\infty$. The two bounds are then combined to yield
\begin{align}\label{eq:Psi_bound}
	T \ge \frac{\Psi_1(r,\eps,\delta)}{\Psi_2(r,\eps)} \quad \Longrightarrow \quad K^{(r)}_\Pro(\eps,\delta) \ge \frac{\Psi_1(r,\eps,\delta)}{\Psi_2(r,\eps)}.
\end{align}

\subsection{An illustrative example}
\label{ssec:illustration}

To illustrate our method in action, we will sketch the derivation of the nontrivial part of the lower bound in \eqref{eq:IBC_bound}, i.e., when $\eps \in (0,1/2)$. Let
\begin{align*}
	x^*_0 &= \begin{cases}
	1/2 - \sqrt{2\eps}, & \eps \in (0,1/8)\\
	0, & \eps \in [1/8,1/2)
\end{cases} \\
	x^*_1 &= \begin{cases}
	1/2 + \sqrt{2\eps}, & \eps \in (0,1/8)\\
	1, & \eps \in [1/8,1/2).
\end{cases}
\end{align*}
It is easy to see that $x^*_0,x^*_1 \in [0,1]$. Consider two functions
$$
f_m(x) = \frac{1}{2}(x-x^*_m)^2, \qquad m \in \{0,1\}.
$$
A simple calculation shows that for any $x \in [0,1]$ such that
$$
f_0(x) - f^*_0 = f_0(x) = \frac{1}{2}(x-x^*_0)^2 < \eps
$$
we must have $f_1(x) - f^*_1 = f_1(x) > \eps$, and the same holds with the roles of $f_0$ and $f_1$ reversed. Thus, any $\eps$-minimizer of $f_0$ fails to $\eps$-minimize $f_1$, and vice versa.

On the other hand, the probability distribution of the output of the first-order Gaussian oracle \eqref{eq:simple_Gaussian_oracle} for any query $x \in [0,1]$ when $M=0$ is very close to its $M=1$ counterpart. Indeed, letting $Y \in \Reals^2$ denote the output of the oracle, we have
$$
\Prob_{Y|M=m,X=x} = \Normal\left(\frac{1}{2}(x-x^*_m)^2, \sigma^2\right) \otimes \Normal(x-x^*_m, \sigma^2).
$$
Then it is not hard to show that, for $m \in \{0,1\}$,
\begin{align}\label{eq:KL_bound}
D(\Prob_{Y|M=m,X=x} \| \Prob_{Y|M=1-m,X=x}) = O\left( \frac{\eps}{\sigma^2}\right)
\end{align}
In other words, the functions $f_0$ and $f_1$ are nearly indistinguishable from one another based on the outcome of a single query.

Now suppose that Nature selects an index $M \in \{0,1\}$ uniformly at random. Consider a $T$-step algorithm that $\eps$-minimizes every function in the class $\Fun$ defined in \eqref{eq:simple_function_class} with probability at least $1-\delta$, where $\delta \in (0,1/2)$. Let $r=1$. Then Lemma~\ref{lm:fano} in Section~\ref{ssec:info_bounds} can be used to show that the lower bound \eqref{eq:Psi_lower} holds with
$$
\Psi_1(1,\eps,\delta) = \log 2 - h_2(\delta) > 0,
$$
where $h_2(\delta) \deq - \delta \log \delta - (1-\delta) \log (1-\delta)$ is the binary entropy function. On the other hand, using Lemmas~\ref{lm:info_bounds} and \ref{lm:refined_info_bounds}, as well as Eq.~\eqref{eq:KL_bound}, we can show that the upper bound \eqref{eq:Psi_upper} holds with
\begin{align*}
\Psi_2(1,\eps) &= \max_{m \in \{0,1\}} \max_{x \in [0,1]} D(\Prob_{Y|M=m,X=x} \| \Prob_{Y|M=1-m,X=x}) \\
&= O\left( \frac{\eps}{\sigma^2}\right).
\end{align*}
Hence, according to \eqref{eq:Psi_bound}, any $T$-step algorithm that $\eps$-minimizes every function in the class $\Fun$ of Eq.~(\ref{eq:simple_function_class}) with probability at least $1-\delta$ must satisfy
$$
T = \Omega\left( \frac{\sigma^2 (\log 2 - h_2(\delta))}{\eps} \right).
$$
From this and from Proposition~\ref{prop:prob_det_bounds}, we can obtain the lower bound $\Omega(\sigma^2/\eps)$ of Eq.~(\ref{eq:IBC_bound}). The matching upper bound $O(\sigma^2/\eps)$ is achieved by stochastic gradient descent \cite{NJLS09}.

\subsection{Reduction to hypothesis testing with feedback}
\label{ssec:reduction}

We now develop our information-theoretic methodology in the general setting of Section~\ref{ssec:formal_defs}.

Let us fix a problem class $\Pro = (\cX,\Fun,\Ora)$. To set up our analysis, we first endow the instance space $\Fun$ with a ``distance'' $\disc(\cdot,\cdot)$ that has the following property: for any $x \in \cX$ and any $\eps > 0$,
\begin{align}\label{eq:pseudometric}
\disc(f,g) \ge 2\eps \text{ and } f(x) < f^* + \eps \,\, \Longrightarrow \,\, g(x) > g^* + \eps.
\end{align}
In other words, an $\eps$-minimizer of a function cannot simultaneously be an $\eps$-minimizer of a distant function. It is easy to construct a $\disc$ satisfying \eqref{eq:pseudometric} for any particular class $\Fun$ of continuous functions, although such a $d$ need not be a metric. For example, if we consider the class
\begin{align*}
 \Fun_\Theta \deq \{ f_\theta(x) = \| x - \theta \| : \theta \in \Theta \}
\end{align*}
for some $\Theta \subset \cX$, then $\disc(f_\theta,f_{\theta'}) = \| \theta - \theta' \|$ satisfies (\ref{eq:pseudometric}).  Indeed, $\| \theta - \theta' \| \ge 2\eps$ and $\| x - \theta \| < \eps$ imply $\| x - \theta' \| > \eps$ by the triangle inequality. 
 For a general $\Fun$, we can also define
\begin{align*}
 \disc(f,g) \deq \inf_{x \in \cX} [f(x) + g(x)] - [f^* + g^*],
\end{align*}
the distance-like function introduced in \cite{Aga09,Aga10}. This definition coincides with $\disc(f_\theta,f_{\theta'}) = \| \theta - \theta' \|$ for the parametric set $\Fun_\Theta$; however, (\ref{eq:pseudometric}) is the most general requirement. Note that we will often implicitly restrict our consideration to a {\em subclass} of $\Fun$ and define an appropriate $\disc$ on that subclass.
 
Let us fix the exponent $r \ge 1$ and consider any finite $\Fun' = \{f_0,\ldots,f_{N-1}\} \subset \Fun$, such that any two distinct $f_i,f_j \in \Fun'$ are at least $2\eps^{1/r}$ apart in $\disc(\cdot,\cdot)$. Given any $T \in \Naturals$ and an algorithm $\Alg \in \Algs_T(\Pro)$, we can now construct the probability space $(\Omega,\cB,\Prob)$, as described in the introduction to this section. Given $X_{T+1}$, the output of $\Alg$, we can define the ``estimator"
\begin{align}\label{eq:canonical_estimator}
\wh{M}_T(X^T,Y^T)  \deq \argmin_{i=0,\ldots,N-1} [f_i(X_{T+1}) - f^*_i],
\end{align}
which simply selects that function in $\Fun'$ for which the error of $X_{T+1}$ is the smallest. Since $X_{T+1}$ is $\sigma(X^T,Y^T)$-measurable, the estimator $\wh{M}_T$ is indeed a function only of the information available to $\Alg$ after time $T$.

\subsection{Information bounds}
\label{ssec:info_bounds}

The main object of interest will be the mutual information $I(M; \wh{M}_T)$. We first show that any ``good" $T$-step algorithm obtains a nonzero amount of information about $M$ at the end of its operation:

\begin{lemma}\label{lm:fano} Fix some $r \ge 1$, $\delta \in (0,1/2)$, and $\eps > 0$. Suppose $\Alg \in \Algs_T(\Pro)$ attains
\begin{align}\label{eq:good_alg}
\sup_{f \in \Fun} \Pr \Big( \err^r_\Alg(T,f) \ge \eps \Big) \le \delta.
\end{align}
Let $\Fun' \subset \Fun$ be a finite set $\{f_0,\ldots,f_{N-1}\}$ of functions, such that
\begin{align*}
	\disc(f_i,f_j) \ge 2\eps^{1/r}, \qquad \forall i \neq j.
\end{align*}
Let $M$ be uniformly distributed on $\{0,1,\ldots,N-1\}$, and suppose that $\Alg$ is fed with the random problem instance $f_M \in \Fun'$. If $N > 4$, then the estimator $\wh{M}_T$ defined in (\ref{eq:canonical_estimator}) satisfies the bound
\begin{align}\label{eq:fano}
I(M; \wh{M}_T) \ge (1-\delta) \log N - \log 2 > 0.
\end{align}
If $N = 2$, then
\begin{align}\label{eq:binary_fano}
I(M; \wh{M}_T) \ge \log 2 - h_2(\delta) > 0,
\end{align}
where $h_2(\delta) \deq -\delta \log \delta - (1-\delta) \log (1-\delta)$ is the binary entropy function.
\end{lemma}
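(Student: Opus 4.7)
The plan is to reduce the mutual information bound to Fano's inequality applied to the canonical estimator, after first showing that the estimator succeeds with probability at least $1-\delta$.

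First I would translate the optimization guarantee (\ref{eq:good_alg}) into a bound on the probability of error for $\wh{M}_T$. The key observation is that property (\ref{eq:pseudometric}), applied with $\eps$ replaced by $\eps^{1/r}$, says: if $\disc(f_i,f_j) \ge 2\eps^{1/r}$ and $f_i(x) - f^*_i < \eps^{1/r}$ for some $x$, then $f_j(x) - f^*_j > \eps^{1/r}$ for all $j \ne i$. Since the functions in $\Fun'$ are pairwise at least $2\eps^{1/r}$-apart in $\disc$, this means that on the event $\{\err_\Alg(T, f_M) < \eps^{1/r}\}$ (equivalently $\{\err^r_\Alg(T,f_M) < \eps\}$), the index $M$ is the unique minimizer of $i \mapsto f_i(X_{T+1}) - f^*_i$, so the estimator defined in (\ref{eq:canonical_estimator}) satisfies $\wh{M}_T = M$. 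Combining this with (\ref{eq:good_alg}), and noting that the probability appearing in (\ref{eq:good_alg}) is, under $\Prob$, precisely $\Prob(\err^r_\Alg(T,f_M) \ge \eps \mid M)$, I obtain
\begin{align*}
\Prob(\wh{M}_T \ne M) \le \Prob\bigl(\err^r_\Alg(T,f_M) \ge \eps\bigr) \le \delta.
\end{align*}

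Next I would invoke Fano's inequality in its standard form: for any estimator $\wh{M}$ of a random variable $M$ taking values in a set of size $N$, with $P_e \deq \Prob(\wh{M}\ne M)$,
\begin{align*}
H(M \mid \wh{M}) \le h_2(P_e) + P_e \log(N-1).
\end{align*}
Since $M$ is uniform on $\{0,\ldots,N-1\}$, $H(M) = \log N$, so $I(M;\wh{M}_T) \ge \log N - h_2(P_e) - P_e \log(N-1)$. Because $\delta < 1/2$ and $P_e \le \delta$, monotonicity of $h_2$ on $[0,1/2]$ and of $P_e \log(N-1)$ in $P_e$ give
\begin{align*}
I(M;\wh{M}_T) \ge (1-\delta)\log N - h_2(\delta) + \delta \log\tfrac{N}{N-1}.
\end{align*}

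Finally I would handle the two regimes separately. For $N=2$, the term $\log(N-1) = 0$ drops out and Fano's inequality simplifies to $H(M\mid \wh{M}_T) \le h_2(\delta)$, giving (\ref{eq:binary_fano}) directly; positivity follows since $\delta < 1/2$ implies $h_2(\delta) < \log 2$. For general $N$, I would crudely upper bound $h_2(\delta) \le \log 2$ and $\log(N-1) \le \log N$ to extract the bound $I(M;\wh{M}_T) \ge (1-\delta)\log N - \log 2$, which yields (\ref{eq:fano}); for this to be positive we need $(1-\delta)\log N > \log 2$, and since $\delta<1/2$ gives $1-\delta > 1/2$, it suffices that $\log N > 2\log 2$, i.e., $N > 4$, matching the hypothesis. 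The main (mild) obstacle is simply being careful that $\wh{M}_T$, though it depends on $X_{T+1}$, is a function of $(X^T,Y^T)$ via $\Alg_{T+1}$, so the mutual information $I(M;\wh{M}_T)$ is well-defined and the above inequalities apply verbatim.
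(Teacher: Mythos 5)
Your proposal is correct and follows the paper's own route: you first establish $\Prob(\wh{M}_T \ne M) \le \delta$ by using the separation property~\eqref{eq:pseudometric} to show that the good event $\{\err^r_\Alg(T,f_M) < \eps\}$ forces $\wh{M}_T = M$, and then apply Fano's inequality (the sharp binary form for $N=2$). The only cosmetic difference is that for $N>4$ you start from the standard Fano bound $H(M\mid\wh M_T)\le h_2(P_e)+P_e\log(N-1)$ and relax it, whereas the paper directly quotes the already-relaxed Han--Verd\'u form $\Prob(\wh M_T\neq M)\ge 1-\frac{I(M;\wh M_T)+\log 2}{\log N}$; both yield~\eqref{eq:fano} identically.
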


\begin{remark} {\em In the sequel, we will consider only the cases when the set $\Fun'$ is either ``rich", so that $N \gg 4$, or has only two elements, so $N = 2$.}
\end{remark}

\begin{proof} Consider an algorithm $\Alg$ with the claimed properties. Define, for each $i$, the event
\begin{align*}
	E_i \deq \left\{ \err^r_\Alg(T,f_i) \ge \eps \right\}.
\end{align*}
We first show that the event $\{ \wh{M}_T \neq i\}$ implies $E_i$. Indeed, if $E_i$ does not occur, then from the fact that $d(f_i,f_j) \ge 2\eps^{1/r}$ for all $j \neq i$ and from \eqref{eq:pseudometric} we deduce that
\begin{align*}
f_{j}(X_{T+1}) - f^*_{j} > \eps^{1/r} > f_i(X_{T+1}) - f^*_i, \qquad \forall j \neq i
\end{align*}
so it must be the case that $\wh{M}_T = i$. Therefore,
\begin{align*}
\delta 
&\ge \max_{i = 0,\ldots,N-1} \Prob(E_i | M=i) \\
&\ge \max_{i = 0,\ldots,N-1} \Prob(\wh{M}_T \neq i|M=i) \\
&\ge \Prob(\wh{M}_T \neq M).
\end{align*}
Now suppose that $N > 4$. Then we can invoke the following version of Fano's inequality \cite{HanVer94}:
\begin{align*}
\Prob(\wh{M}_T \neq M) \ge 1 - \frac{I(M; \wh{M}_T) + \log 2}{\log N}.
\end{align*}
Rearranging, we get (\ref{eq:fano}). When $N = 2$, we use a stronger form of Fano's inequality (see, e.g.,~Section~2.10 in \cite{CovTho06}):
\begin{align*}
h_2\big(\Prob(\wh{M}_T \neq M)\big) \ge \log 2 - I(M; \wh{M}_T).
\end{align*}
Since  $\delta \mapsto h_2(\delta)$ is monotone increasing on $[0,1/2]$, we get $h_2(\delta) \ge \log 2 - I(M; \wh{M}_T)$. Rearranging, we get (\ref{eq:binary_fano}).
\end{proof}

On the other hand, the amount of information $I(M; \wh{M}_T)$ cannot be too large:

\begin{lemma}\label{lm:info_bounds} Any estimator $\wh{M} : \cX^T \times \cY^T \to \{0,\ldots,N-1\}$ [and, in particular, the estimator $\wh{M}_T$ defined in (\ref{eq:canonical_estimator})] satisfies
\begin{align}\label{eq:info_upper_bound}
I(M; \wh{M}) \le \sum^T_{t=1} I(M; Y_t|X^t,Y^{t-1}).
\end{align}
\end{lemma}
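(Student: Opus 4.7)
My plan is to combine the data processing inequality with the chain rule for mutual information, exploiting the deterministic nature of the algorithm's queries.

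First, observe that since $\wh{M}$ is (by hypothesis) a measurable function of $(X^T, Y^T)$, the Markov chain $M \to (X^T, Y^T) \to \wh{M}$ holds under $\Prob$, so the data processing inequality immediately gives
\begin{align*}
I(M; \wh{M}) \le I(M; X^T, Y^T).
\end{align*}
It therefore suffices to prove $I(M; X^T, Y^T) \le \sum_{t=1}^T I(M; Y_t | X^t, Y^{t-1})$.

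Next, I would apply the chain rule for mutual information by peeling off one pair $(X_t, Y_t)$ at a time. That is,
\begin{align*}
I(M; X^T, Y^T)
&= \sum_{t=1}^T I(M; X_t, Y_t \mid X^{t-1}, Y^{t-1}) \\
&= \sum_{t=1}^T \Big[ I(M; X_t \mid X^{t-1}, Y^{t-1}) \\
&\qquad\qquad + I(M; Y_t \mid X^t, Y^{t-1}) \Big],
\end{align*}
where the second equality uses the chain rule applied to the pair $(X_t, Y_t)$ conditioned on the past.

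The key observation, which is where the algorithm's determinism enters, is that $X_t = \Alg_t(X^{t-1}, Y^{t-1})$ is a (deterministic) measurable function of the past. Formally, from the specification of $\Prob$ recorded in Section~\ref{sec:HTF}, we have $\Prob(X_t \in A \mid M, X^{t-1}, Y^{t-1}) = \Prob(X_t \in A \mid X^{t-1}, Y^{t-1}) = \ind{\Alg_t(X^{t-1},Y^{t-1}) \in A}$ $\Prob$-a.s., so $M$ and $X_t$ are conditionally independent given $(X^{t-1}, Y^{t-1})$. Hence
\begin{align*}
I(M; X_t \mid X^{t-1}, Y^{t-1}) = 0, \qquad 1 \le t \le T.
\end{align*}

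Substituting this back into the chain-rule expansion and combining with the data-processing bound yields the desired inequality
\begin{align*}
I(M; \wh{M}) \le I(M; X^T, Y^T) = \sum_{t=1}^T I(M; Y_t \mid X^t, Y^{t-1}).
\end{align*}
I expect no real obstacle here beyond the bookkeeping needed to justify that the conditional mutual information vanishes on a standard Borel space; this is essentially automatic because a deterministic kernel is supported on a single point, so the regular conditional law of $X_t$ given $(M, X^{t-1}, Y^{t-1})$ does not depend on $M$.
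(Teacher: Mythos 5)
Your proof is correct and follows exactly the same route as the paper: data processing inequality, chain rule in two stages, and vanishing of $I(M; X_t \mid X^{t-1}, Y^{t-1})$ via the Markov chain $M \to (X^{t-1},Y^{t-1}) \to X_t$. The only cosmetic difference is that the paper phrases the last step in terms of the Markov-chain property rather than spelling out the deterministic kernel, but this is the same observation.
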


\begin{remark} {\em The terms $I(M; Y_t | X^t, Y^{t-1})$ have analogues in the literature on information-theoretic experimental design (see, e.g.,~\cite{Fed72,MacKay92}). In that context, they represent the average reduction of uncertainty about the unknown variable $M$ after observing the {\em experimental outcome} $Y_t$ based on the {\em design point} $X_t = \Alg_t(X^{t-1},Y^{t-1})$. 
}
\end{remark}

\begin{proof} We have
\begin{align}
& I(M; \wh{M}) \le I(M; X^T,Y^T) \label{eq:step1} \\
&= \sum^T_{t=1} I(M; X_t,Y_t | X^{t-1},Y^{t-1}) \label{eq:step2} \\
&= \sum^T_{t=1} [I(M; X_t | X^{t-1},Y^{t-1}) + I(M; Y_t | X^t, Y^{t-1})] \label{eq:step3} \\
&= \sum^T_{t=1} I(M; Y_t | X^t, Y^{t-1}), \label{eq:step4}
\end{align}
where (\ref{eq:step1}) is a consequence of the data processing inequality; (\ref{eq:step2}) and (\ref{eq:step3}) use the chain rule; and (\ref{eq:step4}) uses the fact that $M \to (X^{t-1},Y^{t-1}) \to X_t$ is a Markov chain. 
\end{proof}

\subsection{Refinement of the upper bounds}

Lemmas~\ref{lm:fano} and \ref{lm:info_bounds} are the two main elements of our approach. In order to apply them, we need to get a handle on the conditional mutual information terms on the right-hand side of \eqref{eq:info_upper_bound}. The following two lemmas, whose proofs can be found in Appendix~\ref{app:proofs}, give us just the right tools for that:

\begin{lemma}\label{lm:oblivious_info_bounds} Consider any estimator $\wh{M} : \cX^T \times \cY^T \to \{0,\ldots,N-1\}$. Then, considering any realization of the oracle $\Ora$ in the form \eqref{eq:SLO}, we have the bound
\begin{align*}
I(M; \wh{M}) \le \sum^T_{t=1} I(U_t; Y_t),
\end{align*}
where $U_t = \psi(f_M,X_t)$ is the output of the ``deterministic part'' of the oracle.
\end{lemma}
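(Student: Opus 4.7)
The plan is to upgrade Lemma~\ref{lm:info_bounds} by replacing each conditional term $I(M; Y_t \mid X^t, Y^{t-1})$ on its right-hand side with the unconditional $I(U_t; Y_t)$, in two steps: a data-processing step that shifts from $M$ to $U_t$, and a chain-rule step that removes the conditioning on the history.

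First I would apply Lemma~\ref{lm:info_bounds} to obtain
\[
I(M; \wh{M}) \le \sum_{t=1}^{T} I(M; Y_t \mid X^t, Y^{t-1}).
\]
Fix $t$ and condition on $(X^t, Y^{t-1})$. Because $X_t = \Alg_t(X^{t-1}, Y^{t-1})$ is a deterministic function of the conditioning, $U_t = \psi(f_M, X_t)$ is (conditionally on $X^t, Y^{t-1}$) a deterministic function of $M$. By realization~\eqref{eq:SLO} we may write $Y_t = \Phi(U_t, W_t)$ with $W_t$ uniform on $[0,1]$ and, crucially, independent of $(M, X^t, Y^{t-1})$ — this is the standing assumption on the i.i.d.\ noise sequence driving the oracle. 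Consequently, conditional on $(X^t, Y^{t-1})$, the variables $M \to U_t \to Y_t$ form a Markov chain, and the conditional data-processing inequality yields
\[
I(M; Y_t \mid X^t, Y^{t-1}) \;\le\; I(U_t; Y_t \mid X^t, Y^{t-1}).
\]

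Second, I would strip away the conditioning. The same independence of $W_t$ from the past also implies the stronger statement $Y_t \perp (X^t, Y^{t-1}) \mid U_t$, so $I(Y_t; X^t, Y^{t-1} \mid U_t) = 0$. Expanding $I(Y_t;\, U_t, X^t, Y^{t-1})$ in the two natural ways via the chain rule gives
\[
I(U_t; Y_t \mid X^t, Y^{t-1}) + I(Y_t; X^t, Y^{t-1}) = I(U_t; Y_t) + 0,
\]
and dropping the nonnegative term $I(Y_t; X^t, Y^{t-1})$ on the left yields $I(U_t; Y_t \mid X^t, Y^{t-1}) \le I(U_t; Y_t)$. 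Summing the resulting bound $I(M; Y_t \mid X^t, Y^{t-1}) \le I(U_t; Y_t)$ over $t$ completes the argument.

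The main subtlety is the second step: conditioning can in general \emph{increase} mutual information, so replacing a conditional MI by its unconditional counterpart is not automatic. It works here only because $U_t$ acts as a sufficient statistic for $Y_t$ — the history $(X^t, Y^{t-1})$ influences $Y_t$ \emph{solely} through $U_t$ — which is in turn a direct consequence of the freshness of the noise $W_t$. I would be careful to state this conditional independence in terms of the regular conditional distributions (legitimate in the standard Borel setting) so that the chain-rule identity above is rigorous. Everything else is a routine assembly of data processing and the chain rule.
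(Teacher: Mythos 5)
Your proposal is correct and takes essentially the same approach as the paper's proof: both invoke Lemma~\ref{lm:info_bounds} and then use the two Markov structures $M \to U_t \to Y_t$ (conditioned on the history, since $U_t$ is determined by $(M,X_t)$) and $(X^t,Y^{t-1}) \to U_t \to Y_t$ (freshness of the oracle noise) to replace each term $I(M;Y_t\mid X^t,Y^{t-1})$ by $I(U_t;Y_t)$. The only cosmetic difference is that the paper establishes the first step as an equality (via $I(M;Y_t\mid\cdot)=I(M,U_t;Y_t\mid\cdot)$ plus the chain rule), while you state it as a conditional data-processing inequality, which suffices.
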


\begin{lemma}\label{lm:refined_info_bounds} Consider any estimator $\wh{M} : \cX^T \times \cY^T \to \{0,\ldots,N-1\}$. For any sequence of conditional probability measures $\{ \ProbQ_{Y_t|X^t,Y^{t-1}} \}^T_{t=1}$ on $(\Omega,\cB)$ satisfying the conditions
\begin{align}\label{eq:AC_condition}
\Prob_{Y_t|X^t,Y^{t-1}} \ll \ProbQ_{Y_t|X^t,Y^{t-1}}, \qquad t=1,\ldots,T
\end{align}
we have the bound
\begin{align}\label{eq:refined_info_bound}
I(M; \wh{M}) \le \sum^T_{t=1} D\big(\Prob_{Y_t|M,X^t,Y^{t-1}} \big\| \ProbQ_{Y_t | X^t,Y^{t-1}} \big| \Prob_{M,X^t,Y^{t-1}} \big)
\end{align}
\end{lemma}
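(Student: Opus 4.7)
The plan is to start from Lemma~\ref{lm:info_bounds}, which already gives
\[
I(M; \wh{M}) \le \sum_{t=1}^T I(M; Y_t \mid X^t, Y^{t-1}),
\]
and then replace each true conditional marginal $\Prob_{Y_t \mid X^t, Y^{t-1}}$ inside the divergence representation by the arbitrary reference kernel $\ProbQ_{Y_t \mid X^t, Y^{t-1}}$, incurring only a non-negative penalty that we discard.

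Concretely, the key identity I would invoke is the variational form of conditional mutual information:
\[
I(M; Y_t \mid X^t, Y^{t-1}) = D\bigl(\Prob_{Y_t \mid M, X^t, Y^{t-1}} \,\big\|\, \Prob_{Y_t \mid X^t, Y^{t-1}} \,\big|\, \Prob_{M, X^t, Y^{t-1}}\bigr),
\]
which is just the definition recorded in \eqref{eq:cond_MI}. For an arbitrary reference kernel $\ProbQ_{Y_t \mid X^t, Y^{t-1}}$ satisfying \eqref{eq:AC_condition} (so that the Radon--Nikodym derivatives used below exist $\Prob$-almost surely), I would then apply the chain-rule / ``compensation'' identity
\[
D\bigl(\Prob_{Y_t \mid M, X^t, Y^{t-1}} \,\big\|\, \ProbQ_{Y_t \mid X^t, Y^{t-1}} \,\big|\, \Prob_{M,X^t,Y^{t-1}}\bigr) = I(M; Y_t \mid X^t, Y^{t-1}) + D\bigl(\Prob_{Y_t \mid X^t, Y^{t-1}} \,\big\|\, \ProbQ_{Y_t \mid X^t, Y^{t-1}} \,\big|\, \Prob_{X^t,Y^{t-1}}\bigr),
\]
which is obtained by splitting $\log(d\Prob_{Y_t \mid M,X^t,Y^{t-1}}/d\ProbQ_{Y_t \mid X^t,Y^{t-1}})$ as the sum of $\log(d\Prob_{Y_t \mid M,X^t,Y^{t-1}}/d\Prob_{Y_t \mid X^t,Y^{t-1}})$ and $\log(d\Prob_{Y_t \mid X^t,Y^{t-1}}/d\ProbQ_{Y_t \mid X^t,Y^{t-1}})$ and integrating. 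Since the last conditional divergence on the right is non-negative, one gets
\[
I(M; Y_t \mid X^t, Y^{t-1}) \le D\bigl(\Prob_{Y_t \mid M, X^t, Y^{t-1}} \,\big\|\, \ProbQ_{Y_t \mid X^t, Y^{t-1}} \,\big|\, \Prob_{M, X^t, Y^{t-1}}\bigr).
\]
Summing over $t$ and chaining with Lemma~\ref{lm:info_bounds} yields \eqref{eq:refined_info_bound}.

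The only real subtlety — and the step I would write out most carefully — is verifying the chain-rule identity under \eqref{eq:AC_condition}: one must check that $\Prob_{Y_t \mid M,X^t,Y^{t-1}} \ll \ProbQ_{Y_t \mid X^t,Y^{t-1}}$ on a set of full $\Prob$-measure so that the relevant Radon--Nikodym derivative exists, which follows by combining \eqref{eq:AC_condition} with $\Prob_{Y_t \mid M,X^t,Y^{t-1}} \ll \Prob_{Y_t \mid X^t,Y^{t-1}}$ (a standard fact in standard Borel spaces via regular conditional probabilities). Once that is in place, the splitting of the log-density and the application of Fubini to integrate against $\Prob_{M,X^t,Y^{t-1}}$ is routine, and the proof reduces to a telescoping observation followed by dropping a non-negative term.
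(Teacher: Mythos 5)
Your proposal is correct and follows essentially the same route as the paper: the paper also invokes Lemma~\ref{lm:info_bounds}, rewrites $I(M; Y_t \mid X^t, Y^{t-1})$ as a conditional divergence via \eqref{eq:cond_MI}, adds and subtracts $\log(d\ProbQ_{Y_t|X^t,Y^{t-1}})$ in the integrand (your ``compensation'' identity is exactly this telescoping rewritten), and then drops the resulting nonnegative conditional divergence $D(\Prob_{Y_t|X^t,Y^{t-1}} \| \ProbQ_{Y_t|X^t,Y^{t-1}} | \Prob_{X^t,Y^{t-1}})$. Your attention to the absolute-continuity chain $\Prob_{Y_t\mid M,X^t,Y^{t-1}} \ll \Prob_{Y_t\mid X^t,Y^{t-1}} \ll \ProbQ_{Y_t\mid X^t,Y^{t-1}}$ is exactly the justification the paper cites for its manipulations under \eqref{eq:AC_condition}.
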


\begin{remark} {\em By hypothesis on the behavior of the oracle, $(X^{t-1},Y^{t-1}) \to (M,X_t) \to Y_t$ is a Markov chain. Hence, $\Prob_{Y_t|M,X^t,Y^{t-1}}$ in \eqref{eq:refined_info_bound} can be replaced with $\Prob_{Y_t|M,X_t}$.}
\end{remark}

The key to using Lemma~\ref{lm:refined_info_bounds} is in the judicious choice of the ``comparison" measures $\ProbQ_{Y_t|X^t,Y^{t-1}}$. In particular, we will use two different strategies of choosing the $\ProbQ$'s, which in turn lead to two different types of bounds:
\begin{itemize}
\item {\bf Information Radius (IR) bound ---} This bound is useful for analyzing arbitrary finite-time algorithms. For each $t$, take $\ProbQ_{Y_t|X^t,Y^{t-1}} = \ProbQ_{Y_t|X_t}$ to be the mixture
\begin{align*}
\ProbQ_{Y_t|X_t} = \frac{1}{N}\sum^{N-1}_{i=0} \Prob_{Y_t|M=i,X_t}.
\end{align*}
Then, letting $M'$ denote an independent copy of $M$ and noting that $\ProbQ_{Y_t|X_t} = \E_{M'} \Prob_{Y_t|M',X_t}$, we obtain
\begin{align}
& I(M; \wh{M}) \nonumber\\
&= \sum^T_{t=1} D(\Prob_{Y_t|M,X_t} \| \E_{M'} \Prob_{Y_t|M',X_t} | \Prob_{M,X_t})  \nonumber \\
& \le \sum^T_{t=1} \E_{M'} D(\Prob_{Y_t|M,X_t} \| \Prob_{Y_t|M',X_t} | \Prob_{M,X_t}) \label{eq:Jens}\\
& = \sum^T_{t=1} \E_{M,X_t} \E_{M'} D(\Prob_{Y_t|M,X_t} \| \Prob_{Y_t|M',X_t}) \nonumber \\
& \le T \max_{i,j} \sup_{x \in \cX} D(\Prob_{Y|M=i,X=x} \| \Prob_{Y|M=j,X=x}),\label{eq:IR_bound}
\end{align}
where \eqref{eq:Jens} follows from Jensen's inequality and convexity of the divergence. The use of the term ``information radius" is inspired by an analogous concept in the theory of information-based complexity \cite{Plaskota}:  the divergence $D(\Prob_{Y|M=i,X=x} \| \Prob_{Y|M=j,X=x})$ quantifies how close, in a statistical sense, the oracle's responses are for a given query point $x \in \cX$ and a given pair $i,j$. Viewing the random variable $Y \sim \Prob_{Y|M=i,X=x}$ as (stochastic, noisy) {\em information} about the function $f_i$ at the point $x$, we can interpret the quantity multiplying $T$ in \eqref{eq:IR_bound} as a measure of {\em ambiguity} of this information. We use IR bounds in Section~\ref{sec:lower_bounds}.
\item {\bf Lyapunov Function (LF) bound ---} This bound is useful for analyzing anytime algorithms. It relies on the idea that, with certain types of problem classes, the oracle responds with ``pure noise" whenever the query point happens to hit upon a minimizer. In other words, there exists a probability measure $Q^*$ on $\cY$, such that
\begin{align*}
P(dy|f,x) = Q^*(dy) \qquad \text{if } x \in \argmin_\cX f,
\end{align*}
where $\argmin_\cX f \deq \{ x \in \cX: f(x) = f^*\}$. Moreover, for an anytime algorithm it is often the case that the conditional divergence $D(\Prob_{Y_t|M,X_t} \| \ProbQ^*_{Y_t}|\Prob_{X_t})$, where $\ProbQ^*_{Y_t}$ is an independent copy of $Q^*$, decreases with $t$, and hence can be thought of as a Lyapunov function for the problem at hand (in fact, Lyapunov functions of the divergence type have been used before to analyze the convergence of specific stochastic optimization algorithms \cite{Vilk}). This leads to the natural choice of $\ProbQ_{Y_t|X^t,Y^{t-1}} = \ProbQ^*_{Y_t}$, and to the bound
\begin{align}
I(M; \wh{M}) \le \sum^T_{t=1} D(\Prob_{Y_t|M,X_t} \| \ProbQ^*_{Y_t} | \Prob_{M,X_t}).
\label{eq:LF_bound}
\end{align}
We apply the LF bounds in Section~\ref{sec:anytime} to the study of anytime algorithms.
\end{itemize}
We should point out that the use of an auxiliary measure $\ProbQ$ has been pioneered by Yang and Barron \cite{YanBar99} (with later refinements by Yang \cite{YangAISTATS}) in the usual setting of statistical estimation from i.i.d.\ data; there, the relevant bound was of the form
\begin{align*}
I(M ; Y^T) &\le D(\Prob_{Y^T|M} \| \ProbQ_{Y^T} | \Prob_M) \\
&\le \max_m D(\Prob_{Y^T|M=m} \| \ProbQ_{Y^T})
\end{align*}
(cf.~\cite[p.~1571]{YanBar99}). Similarly, the ``symmetrization trick'' involving an independent copy of $M$ and an application of Jensen's inequality, as in Eq.~\eqref{eq:Jens} above, is used often in the statistics literature (cf.~\cite{Yu97} and references therein). Our innovation consists in first performing a sequential decomposition of the mutual information $I(M; X^T,Y^T)$, carefully taking into account all the Markov structures that arise due to the causality constraints that must be obeyed by the algorithm, and then choosing an appropriate auxiliary measure for each time $t = 1,\ldots,T$.

\section{Lower bounds for arbitrary algorithms}
\label{sec:lower_bounds}

We now apply the lemmas of the preceding section to the problem of deriving lower bounds on the information-based complexity of several problem classes. These bounds hold for {\em arbitrary} finite-step or infinite-step algorithms.

\subsection{A general information-theoretic lower bound}
\label{ssec:general_bound}

Our first bound applies to {\em any} problem class. However, this generality comes at a price: the bound is nontrivial (i.e.,~tight) only in certain cases. 

\begin{theorem}\label{thm:general_bound} Consider a problem class $\Pro = (\cX,\Fun,\Ora)$, with any realization of the oracle $\Ora$ in the form \eqref{eq:SLO}. Given any $\eta > 0$, define the {\em packing number}
\begin{align*}
& N(\Fun,\disc,\eta) \deq \max\Big\{ N \ge 1: \exists f_0,\ldots,f_{N-1} \in \Fun \nonumber\\
& \qquad \qquad \qquad \emph{\,\,s.t.\,} \disc(f_i,f_j) \ge 2\eta, \forall i \neq j \Big\}.
\end{align*}
Then, for any $r \ge 1$, any $\eps$ such that $N(\Fun,\disc,\eps^{1/r}) > 4$, and any $\delta \in (0,1/2)$, the following bounds hold:
\begin{align}\label{eq:log_bound_1}
K^{(r)}_{\Pro}(\eps,\delta) &\ge \frac{1}{C^*} \left[(1-\delta) \log N(\Fun,\disc,\eps^{1/r}) - \log 2 \right] \\
\label{eq:log_bound_2}
K^{(r)}_{\Pro}(\eps) &\ge \frac{1}{C^*} \left[ \frac{2}{3}\log N(\Fun,\disc,(3\eps)^{1/r}) - \log 2\right]
\end{align}
with
\begin{align*}
C^* \deq \sup_{U \in \cU_{\cX,\Fun}} I(U; Y),
\end{align*}
where the supremum is over all random variables $U$ taking values in $\cU_{\cX,\Fun} = \psi(\Fun,\cX)$, and the mutual information is between $U$ and $Y = \Phi(U,W)$, cf.~Eq.~\eqref{eq:SLO}.
\end{theorem}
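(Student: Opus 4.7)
The plan is to combine the Fano-type lower bound of Lemma~\ref{lm:fano} with the ``oblivious'' mutual information upper bound of Lemma~\ref{lm:oblivious_info_bounds}, and then absorb the dependence on the algorithm's querying strategy into the quantity $C^*$, which can be read as the Shannon capacity of the oracle viewed as a channel from its deterministic part $\cU_{\cX,\Fun}$ to the noisy output $\cY$.

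I would first prove \eqref{eq:log_bound_1}. Fix $\eps$ and $\delta$ so that $N \deq N(\Fun,\disc,\eps^{1/r}) > 4$, pick a maximal $2\eps^{1/r}$-packing $\Fun' = \{f_0,\ldots,f_{N-1}\} \subset \Fun$, and suppose $\Alg \in \Algs_T(\Pro)$ witnesses the definition of $K^{(r)}_\Pro(\eps,\delta)$. Set up the probability space of Section~\ref{ssec:reduction} with $M$ uniform on $\{0,\ldots,N-1\}$, and let $\wh M_T$ be the canonical estimator \eqref{eq:canonical_estimator}. Lemma~\ref{lm:fano} immediately gives the lower bound
\[
I(M;\wh M_T) \ge (1-\delta)\log N - \log 2.
\]

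For the matching upper bound I would realize the oracle in the form \eqref{eq:SLO}, so that each response decomposes as $Y_t = \Phi(U_t,W_t)$ with $U_t = \psi(f_M,X_t) \in \cU_{\cX,\Fun}$ and $W_t$ an independent $\text{Uniform}[0,1]$ noise. Lemma~\ref{lm:oblivious_info_bounds} then yields
\[
I(M;\wh M_T) \le \sum_{t=1}^T I(U_t;Y_t).
\]
Because each $U_t$ is a bona fide random variable taking values in $\cU_{\cX,\Fun}$ and $Y_t = \Phi(U_t, W_t)$ with $W_t$ independent of $U_t$ (the $W_t$ are fresh per query), each summand is at most $C^* = \sup_U I(U;Y)$ by definition. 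Chaining with the Fano bound gives $T\cdot C^* \ge (1-\delta)\log N - \log 2$, which rearranges to \eqref{eq:log_bound_1}.

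The expectation bound \eqref{eq:log_bound_2} then follows immediately from Proposition~\ref{prop:prob_det_bounds} with the choice $\delta = 1/3$: indeed, $K^{(r)}_\Pro(\eps) \ge K^{(r)}_\Pro(3\eps,1/3)$, so applying \eqref{eq:log_bound_1} to the right-hand side with parameters $(3\eps, 1/3)$ and noting that $(1-1/3) = 2/3$ produces the claimed inequality. The only non-trivial conceptual step is recognizing that the marginal law of $U_t$ induced by the algorithm-oracle interaction is an admissible distribution in the supremum defining $C^*$; once that is observed, the rest is bookkeeping. The main ``obstacle,'' if any, is psychological rather than technical: one must resist the temptation to track the complicated feedback-dependent joint law of $(U_t,Y_t)$, and simply note that taking the worst case over input distributions is what precisely lets us decouple the bound from the algorithm.
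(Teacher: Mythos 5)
Your proposal is correct and follows essentially the same route as the paper: Lemma~\ref{lm:fano} for the lower bound on $I(M;\wh M_T)$, Lemma~\ref{lm:oblivious_info_bounds} together with the definition of $C^*$ for the upper bound $TC^*$, and Proposition~\ref{prop:prob_det_bounds} with $\delta=1/3$ to pass from \eqref{eq:log_bound_1} to \eqref{eq:log_bound_2}. Your added remark that the marginal law of $U_t$ is an admissible input and that the fresh noise $W_t$ is independent of $U_t$ (so $I(U_t;Y_t)\le C^*$) is exactly the observation the paper leaves implicit when it writes $\sum_t I(U_t;Y_t)\le TC^*$.
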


\begin{remark} {\em The number $C^*$ is the {\em Shannon capacity} of the random transformation $Y = \Phi(U,W)$ when its input is constrained to lie in the information space of the deterministic oracle $\psi$. When $C^* = \infty$, the bounds \eqref{eq:log_bound_1} and \eqref{eq:log_bound_2} simply say that $K^{(r)}_\Pro(\eps,\delta) \ge 0$ and $K^{(r)}_\Pro(\eps) \ge 0$.}
\end{remark}

\begin{proof} Let $\Fun^{(r)}_\eps = \{f_0,\ldots,f_{N-1}\} \subset \Fun$, $N = N(\Fun,\disc,\eps^{1/r})$, be a maximal packing set in $\Fun$.  Given $\delta \in (0,1/2)$, consider any $T$ and any algorithm $\Alg \in \Algs_T(\Pro)$ such that $\Pr(\err^r_\Alg(T,f) \ge \eps) \le \delta$. Then we can apply Lemma~\ref{lm:fano} to get
\begin{align*}
I(M; \wh{M}_T) \ge (1-\delta) \log N - \log 2.
\end{align*}
On the other hand, from Lemma~\ref{lm:oblivious_info_bounds} and the definition of $C^*$,
\begin{align*}
I(M; \wh{M}_T) \le \sum^T_{t=1} I(U_t; Y_t) \le T C^*.
\end{align*}
Combining these two bounds, we get \eqref{eq:log_bound_1}, while \eqref{eq:log_bound_2} follows after applying Proposition~\ref{prop:prob_det_bounds} with $\delta = 1/3$.
\end{proof}

As an example, let $\cX = B^\Dim_\infty$ and $\Fun= \Fun_\Lip$ (cf.~Example~\ref{ex:lip}). Consider the case $r=1$. Let $\Lambda_\eps$ be a maximal $2\eps$-packing of $\cX$ in $\ell_2$. A simple volume counting argument shows that
\begin{align*}
|\Lambda_\eps| \ge v^{-1}_\Dim (1/\eps)^\Dim,
\end{align*}
where $v_\Dim = \vol(B^\Dim_2)$. Consider the subclass of $\Fun_\Lip$ consisting of all functions of the form $f_\theta(x) = \| x - \theta \|, \theta \in \Lambda_\eps$. Then for any two  distinct functions $f_\theta,f_{\theta'} $ we will have
\begin{align*}
\disc(f_\theta,f_{\theta'}) = \| \theta - \theta' \| \ge 2\eps,
\end{align*}
so $N(\Fun_\Lip,\disc,\eps) \ge v^{-1}_\Dim (1/\eps)^\Dim$. Theorem~\ref{thm:general_bound} then gives the following lower bound for any noisy oracle with $C^* < +\infty$:
\begin{align*}
K_\Pro(\eps) = \Omega\left(\Dim \log \frac{1}{\eps} \right).
\end{align*}
For noiseless first-order oracles, the same lower bound follows from a binary search argument, and can be achieved using the (computationally infeasible) {\em method of centers of gravity} \cite{NemYud83,Nes04}. In order to achieve this bound with a {\em noisy}  oracle, an algorithm must pose queries that reduce the uncertainty by an amount that is independent of $\eps$. This is possible with certain kinds of oracles \cite{BZ,castro2008minimax,NowakGBS}.

\subsection{First-order oracles with Gaussian noise}

If the oracle provides {\em noisy} first-order information, the above logarithmic lower bound can be tightened significantly. We now present lower bounds for two problem classes -- convex Lipschitz functions (cf.~Example~\ref{ex:lip_noisy}) and strongly convex functions (cf.~Example~\ref{ex:sc}) -- when the oracle supplies first-order information corrupted by additive white Gaussian noise. This is an oracle that, for a function $f$ and a query point $x$, responds with
\begin{align}\label{eq:1st_order_oracle}
Y = (f(x) + W, g(x) + Z),
\end{align}
where, as before, $g(x)$ is an arbitrary subgradient in $\partial f(x)$, and $W \sim \Normal(0,\sigma^2), Z \sim \Normal(0,\sigma^2 I_\Dim)$ are mutually independent. We will refer to this oracle as the {\em first-order Gaussian (FOG) oracle}. We will also consider the {\em subgradient-only Gaussian (SOG)} oracle $Y = g(x) + Z$. For simplicity, we will assume that the algorithm knows the structure of the deterministic selector mapping $(f,x) \mapsto g(x) \in \partial f(x)$, since this knowledge can only help. We will see that the $\eps$-complexities of these problem classes have {\em polynomial} dependence on $1/\eps$, but differ in their dependence on the problem dimension $\Dim$. Special cases of these results for linear functions for $\Dim=1$ and $r=1$ can be found, for example, in \cite{ShaNem05}. The exponent of $1/\eps$ in the bounds will, generally, depend on the smoothness of functions in $\Fun$. We remark that noise variance for each coordinate of the subgradient is a constant $\sigma^2$, implying that the expected squared $\ell_2$ norm of the noisy subgradient scales linearly with $\Dim$. We shall keep this in mind when considering achievability of the lower bounds by specific algorithms. It is also straightforward to treat the case when $Z \sim \Normal(0,(\sigma^2/n) I_\Dim)$, i.e., when $\sigma^2$ bounds the {\em total} (as opposed to per-coordinate) noise variance.

We begin by particularizing the IR bound \eqref{eq:IR_bound} to the oracles under consideration (the proof is given in Appendix~\ref{app:proofs}):

\begin{lemma}[IR bounds for Gaussian oracles]\label{lm:1st_order_IR_bound}
\begin{align}\label{eq:1st_order_IR_bound}
I(M; \wh{M}) \le	\begin{cases} \displaystyle\frac{T}{2\sigma^2} \max_{i,j} \sup_{x \in \cX} \Big\{ \left[f_i(x) - f_{j}(x) \right]^2 & \\
\qquad \qquad \qquad  + \| g_i(x) - g_{j}(x) \|^2 \Big\}  & \text{(FOG)} \\
 \displaystyle\frac{T}{2\sigma^2} \max_{i,j} \sup_{x \in \cX} \| g_i(x) - g_{j}(x) \|^2 & \text{(SOG)}
\end{cases}
\end{align}
\end{lemma}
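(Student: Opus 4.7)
The plan is to apply the IR bound (\ref{eq:IR_bound}) directly; the entire content of the lemma reduces to evaluating the Kullback--Leibler divergence between two Gaussian conditional laws $\Prob_{Y|M=i,X=x}$ and $\Prob_{Y|M=j,X=x}$ for the two oracles in question.

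First, I would observe that under hypothesis $M=i$ with query $X=x$, the FOG oracle response
$Y = (f_i(x)+W,\, g_i(x)+Z) \in \Reals^{\Dim+1}$
is a (nondegenerate) Gaussian random vector with mean $(f_i(x), g_i(x))^\tr$ and covariance $\sigma^2 I_{\Dim+1}$, since $W$ and the coordinates of $Z$ are i.i.d.\ $\Normal(0,\sigma^2)$. The same holds under $M=j$, so the two conditional laws differ only in their mean vectors and are mutually absolutely continuous, with the condition (\ref{eq:AC_condition}) of Lemma~\ref{lm:refined_info_bounds} trivially verified.

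Second, I would invoke the standard closed-form expression for the divergence between two Gaussians sharing a common covariance $\Sigma$:
\begin{align*}
D\big(\Normal(\mu_1,\Sigma)\,\big\|\,\Normal(\mu_2,\Sigma)\big) = \tfrac{1}{2}(\mu_1-\mu_2)^\tr \Sigma^{-1}(\mu_1-\mu_2).
\end{align*}
Plugging in $\Sigma = \sigma^2 I_{\Dim+1}$ and $\mu_i - \mu_j = (f_i(x)-f_j(x),\, g_i(x)-g_j(x))^\tr$ gives
\begin{align*}
D\big(\Prob_{Y|M=i,X=x}\,\big\|\,\Prob_{Y|M=j,X=x}\big) = \frac{1}{2\sigma^2}\Big\{[f_i(x)-f_j(x)]^2 + \|g_i(x)-g_j(x)\|^2\Big\}.
\end{align*}
Substituting into (\ref{eq:IR_bound}) and passing the $\max_{i,j}\sup_{x\in\cX}$ inside the $1/(2\sigma^2)$ factor yields the FOG bound. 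The SOG case is identical, with $\Sigma = \sigma^2 I_\Dim$ and mean difference $g_i(x)-g_j(x)$ only, so the $[f_i(x)-f_j(x)]^2$ term drops out.

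There is no substantive obstacle here: this is essentially a one-line computation once the IR bound is in place. The only mild subtlety worth flagging explicitly in the write-up is that the FOG response lives on $\Reals^{\Dim+1}$ with a diagonal covariance because $W$ is independent of $Z$, which is why the two contributions $[f_i-f_j]^2$ and $\|g_i-g_j\|^2$ add rather than interact through off-diagonal covariance terms.
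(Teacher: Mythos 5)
Your proof is correct and matches the paper's argument in all essentials: both reduce to evaluating $D(\Prob_{Y|M=i,X=x}\|\Prob_{Y|M=j,X=x})$ for Gaussians with common covariance $\sigma^2 I$ and substituting into the IR bound \eqref{eq:IR_bound}. The only cosmetic difference is that the paper first factorizes the joint law into the product $\Prob_{V^0|M,X}\otimes\Prob_{V^1|M,X}$ and adds the two one-block divergences, whereas you treat $Y$ as a single $(\Dim+1)$-dimensional Gaussian with diagonal covariance; these are the same computation.
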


We can now address the complexity of minimizing Lipschitz convex functions over a compact domain (cf.~Example~\ref{ex:lip_noisy}):
 
\begin{theorem}\label{thm:Lip_bounds} Consider the problem class $\Pro = (\cX,\Fun_\Lip,\Ora)$ with a Gaussian oracle, where $\cX \subset \Reals^n$ with $n \ge 16$. Define
\begin{align*}
s_\cX \deq \max \left\{ s \ge 0: s B^n_\infty \subseteq \cX \right\}.
\end{align*}
Then for any $r \ge 1$, $\eps \le \left(s_\cX \sqrt{n/8}\right)^r$, and $\delta \in (0,1/2)$, the following bounds hold:
\begin{align}\label{eq:Lip_bounds}
K^{(r)}_\Pro(\eps,\delta) \ge \begin{cases}
\displaystyle  \frac{((1-\delta)n -8) n s^2_\cX \log 2}{128(n s^2_\cX + 1)} \cdot \frac{\sigma^2}{\eps^{2/r}} & \text{(FOG)} \\
\displaystyle \frac{((1-\delta)n - 8) n s^2_\cX \log 2}{128} \cdot \frac{\sigma^2}{\eps^{2/r}} & \text{(SOG)}
\end{cases}
\end{align}
\end{theorem}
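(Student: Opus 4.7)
The plan is to instantiate the general recipe of Section~\ref{sec:HTF}: construct a finite packing $\{f_0,\ldots,f_{N-1}\} \subset \Fun_\Lip$ indexed by a binary code, then combine the Fano bound of Lemma~\ref{lm:fano} with the information-radius bound of Lemma~\ref{lm:1st_order_IR_bound}. For the packing I would invoke Gilbert--Varshamov to produce $\omega_0,\ldots,\omega_{N-1} \in \{-1,+1\}^\Dim$ with pairwise Hamming distance at least $\Dim/4$ and $N \ge 2^{\Dim/8}$, so that $\log N \ge (\Dim\log 2)/8$ and the hypothesis $\Dim \ge 16$ ensures $N > 4$. To each codeword $\omega$ I attach, for a scale $\delta\in(0,1]$ to be chosen later, the linear function $\tilde f_\omega(x) \deq (\delta/\sqrt{\Dim})\,\omega^\tr x$, which is convex and $\delta$-Lipschitz.

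To make all subsequent bounds hold over the whole of $\cX$ (not merely on the inscribed cube $s_\cX B^\Dim_\infty$), I would use the McShane-type infimal-convolution extension
\[
f_\omega(x) \deq \inf_{y \in s_\cX B^\Dim_\infty}\left[\tilde f_\omega(y) + \delta\|x-y\|_2\right],
\]
which is convex, $\delta$-Lipschitz on $\cX$, agrees with $\tilde f_\omega$ on the inscribed cube, and has global minimum $-\delta s_\cX \sqrt{\Dim}$ attained at $-s_\cX\omega$. With the canonical distance $\disc(f,g)=\inf_{x \in \cX}[f(x)+g(x)]-f^*-g^*$ (which satisfies \eqref{eq:pseudometric}), a direct calculation on the cube---together with the observation that $f_{\omega_i}+f_{\omega_j}$ attains its minimum inside the cube---yields $\disc(f_{\omega_i},f_{\omega_j}) \ge (2\delta s_\cX/\sqrt{\Dim})\,d_H(\omega_i,\omega_j) \ge \delta s_\cX\sqrt{\Dim}/2$.

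For the IR ingredients I would use that $\|g_{\omega_i}(x)-g_{\omega_j}(x)\|^2 \le 4\delta^2$ uniformly on $\cX$ (the subgradient of the extension has Euclidean norm at most $\delta$ everywhere), and $(f_{\omega_i}(x)-f_{\omega_j}(x))^2 \le 4\delta^2 s_\cX^2\Dim$ (bound the infimal-convolution difference by the supremum of the linear-form difference on the cube, using $\|y\|_\infty\le s_\cX$ and $d_H\le\Dim$). Substituting into Lemma~\ref{lm:1st_order_IR_bound} gives $I(M;\wh M_T) \le 2T\delta^2(\Dim s_\cX^2+1)/\sigma^2$ in the FOG case and $I(M;\wh M_T) \le 2T\delta^2/\sigma^2$ in the SOG case, while Lemma~\ref{lm:fano} supplies the matching lower bound $I(M;\wh M_T) \ge (1-\delta)\log N - \log 2 \ge [(1-\delta)\Dim - 8](\log 2)/8$.

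The final step is to choose $\delta$ proportional to $\eps^{1/r}/(s_\cX\sqrt{\Dim})$ so that $\disc \ge 2\eps^{1/r}$; the Lipschitz constraint $\delta\le 1$ then translates into the stated upper bound $\eps \le (s_\cX\sqrt{\Dim/8})^r$, and solving the resulting inequality for $T$ gives both cases of \eqref{eq:Lip_bounds}. The main technical obstacle is precisely the extension step: a naive linear extension of $\tilde f_\omega$ from the inscribed cube to all of $\cX$ would allow $\sup_{x\in\cX}(f_{\omega_i}-f_{\omega_j})^2$ to grow with the diameter of $\cX$ and would destroy the IR bound. The infimal-convolution construction is designed exactly to confine the function and subgradient differences to the bounds that hold on the cube while preserving $\delta$-Lipschitzness globally and placing the optimum inside the cube; once that is in place, the combinatorics of Hamming packings and the Gaussian KL computation baked into Lemma~\ref{lm:1st_order_IR_bound} do the rest.
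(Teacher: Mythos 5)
Your proposal follows the same overall architecture as the paper's proof (a Varshamov--Gilbert packing of the hypercube, Lemma~\ref{lm:fano} for the lower bound on $I(M;\wh{M}_T)$, Lemma~\ref{lm:1st_order_IR_bound} for the upper bound), but it uses a genuinely different test-function family: you start from linear functions and extend them to $\cX$ by infimal convolution, whereas the paper works directly with the $\ell_2$-cone functions $f_i(x)=\frac{\eps^{1/r}}{s_\cX}\sqrt{8/n}\,\|x-s_\cX\xi_i\|$ on all of $\cX$. You correctly diagnose the obstacle a naive linear family would face --- $|f_i(x)-f_j(x)|$ would grow with $\diam\cX$ and wreck the IR bound --- and your infimal-convolution fix is sound: the ``observation'' that $f_{\omega_i}+f_{\omega_j}$ attains its infimum inside the cube $C=s_\cX B^\Dim_\infty$ follows because $\|\Pi_C(x)-y\|\le\|x-y\|$ for every $y\in C$, hence $f_\omega(\Pi_C(x))\le f_\omega(x)$ directly from the infimal-convolution formula. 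What the paper's cone construction buys is that none of this is needed: the triangle inequality gives $|f_i(x)-f_j(x)|\le\frac{\eps^{1/r}}{s_\cX}\sqrt{8/n}\cdot s_\cX\|\xi_i-\xi_j\|$ uniformly over $\cX$ with no extension lemma and no minimizer-location argument, and the subgradient bound is automatic from the Lipschitz constant. Your route is workable but adds two nontrivial steps (McShane extension, infimum location) that the paper's choice of functions renders unnecessary.

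Two smaller points. First, you overload the symbol $\delta$, using it both for the theorem's probability parameter and for your construction's scale; rename one of them. Second, your constants do not quite close on the stated range of $\eps$: with $d_H\ge\Dim/4$ you have $\disc\ge(2\delta s_\cX/\sqrt{\Dim})\,d_H\ge\delta s_\cX\sqrt{\Dim}/2$, so enforcing $\disc\ge 2\eps^{1/r}$ requires $\delta\ge 4\eps^{1/r}/(s_\cX\sqrt{\Dim})$, and the Lipschitz constraint $\delta\le 1$ then forces $\eps\le(s_\cX\sqrt{\Dim/16})^r$, which is strictly inside the theorem's range $\eps\le(s_\cX\sqrt{\Dim/8})^r$. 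The paper's scaling hits the stated range exactly because the separation requirement and the $1$-Lipschitz requirement happen to coincide there; with your construction you would either need to re-tune the packing distance or accept a slightly smaller $\eps$-range (and, downstream, a factor of two worse in the final constants).
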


\begin{proof} By the Varshamov--Gilbert bound (see, e.g.,~Lemma~2.9 in \cite{Tsy09}), there exists an $\Dim/8$-packing of size $ N  > 2^{\Dim/8} \geq 4$ of the binary cube $\{-1,+1\}^\Dim$ in the Hamming distance. In other words, there exists a subset $\{ \xi_0,\ldots, \xi_{N-1} \}$ of the vertices of $B^n_\infty$ with $N > 2^{\Dim/8}$, such that
\begin{align}\label{eq:hamming}
\| \xi_i - \xi_{j} \|^2 &= 4 \sum^{\Dim}_{k=1} \ind{\xi_{i,k} \neq \xi_{j,k} } \ge \frac{n}{2}, \qquad \forall i \neq j.
\end{align}
Define the functions
\begin{align*}
f_i(x) \deq \frac{\eps^{1/r}}{s_\cX} \sqrt{\frac{8}{n}} \| x - s_\cX \xi_i \|, \quad i = 0,1,\ldots,N-1.
\end{align*}
Since $\eps \le \big(s_\cX \sqrt{n/8}\big)^{r}$ and $\{ s_\cX \xi_i \}^{N-1}_{i=0} \subset \cX$, these functions lie in $\Fun_\Lip$, and each $f_i$ is uniquely minimized at $x^*_i = s_\cX \xi_i$ with $f^*_i = 0$. Moreover, upon defining
\begin{align*}
d(f_i,f_{j}) \deq \eps^{1/r} \sqrt{\frac{8}{n}} \| \xi_i - \xi_{j} \|,
\end{align*}
it follows from \eqref{eq:hamming} and the triangle inequality that $d(f_i,f_{j}) \ge 2\eps^{1/r}$ for all $i \neq j$, and that this $d$ satisfies the condition \eqref{eq:pseudometric} on the set $\{f_i\}^{N-1}_{i=0}$. Hence, if there exist some $T \in \Naturals$ and an algorithm $\Alg \in \Algs_T(\Pro)$ that attains \eqref{eq:good_alg}, we can apply Lemma~\ref{lm:fano} to obtain
\begin{align}
I(M; \wh{M}_T) &\ge (1-\delta) \log N - \log 2 \nonumber \\
&> \frac{(1-\delta)n - 8}{8} \log 2. \label{eq:Lip_lower_bound}
\end{align}
Next we will bound $I(M; \wh{M}_T)$ from above. For any $x \in \cX$ and any pair $i,j$, we have
\begin{align*}
|f_i(x) - f_{j}(x)|^2 &\le \frac{8\eps^{2/r}}{n} \| \xi_i - \xi_{j} \|^2 \\
&= \frac{8\eps^{2/r}}{n} \sum^n_{k=1} (\xi_{i,k} - \xi_{j,k})^2 \\
&=  \frac{32 \eps^{2/r}}{n} \sum^n_{k=1} \ind{\xi_{i,k} \neq \xi_{j,k}} \\
&\le 32\eps^{2/r},
\end{align*}
and any subgradient of $f_i$ at $x$ has $\ell_2$ norm not exceeding $(\eps^{1/r}/s_\cX)\sqrt{8/n}$. Hence, applying Lemma~\ref{lm:1st_order_IR_bound}, we get
\begin{align}\label{eq:Lip_upper_bound}
I(M;\wh{M}_T) \le
\begin{cases} \displaystyle\frac{16 T \eps^{2/r}}{\sigma^2} \left( 1 + \frac{1}{n s^2_\cX} \right) & \text{(FOG)} \\
\displaystyle\frac{16 T \eps^{2/r}}{n \sigma^2 s^2_\cX} & \text{(SOG)}
\end{cases}
\end{align}
Combining \eqref{eq:Lip_lower_bound} and \eqref{eq:Lip_upper_bound} and rearranging, we get \eqref{eq:Lip_bounds}.
\end{proof}
Upper bounds on stochastic gradient descent -- an algorithm which only uses the subgradient information --  for $r=1$ are of the form $O\left(G^2 D^2_\cX/\eps^2 \right)$, where $G^2$ is an upper bound on the expected squared norm of the noisy gradient \cite{NJLS09}. As we show below, this is matched by our lower bounds. Indeed, $G^2 \propto \Dim\sigma^2$ for the additive Gaussian noise with variance $\sigma^2$. For the unit sphere we thus obtain $\Omega(\Dim\sigma^2/\eps^2)$; for the unit hypercube we obtain $\Omega\left(\Dim^2\sigma^2/\eps^2\right)$ for the SOG oracle:

\begin{corollary} Under the conditions of Theorem~\ref{thm:Lip_bounds}, we have
	\begin{align*}
		\cX = \rho B^n_\infty &\,\, \Longrightarrow \,\,
K^{(r)}_\Pro(\eps) = \begin{cases}
\displaystyle  \Omega \left( \frac{ n^2\rho^2}{1+n\rho^2} \cdot \frac{\sigma^2}{\eps^{2/r}}\right) & \text{(FOG)} \\
\displaystyle \Omega \left( n^2\rho^2 \cdot \frac{\sigma^2}{\eps^{2/r}}\right) & \text{(SOG)}
\end{cases}
\\
\cX = \rho B^n_2 &\,\, \Longrightarrow \,\,
K^{(r)}_\Pro(\eps) = \begin{cases}
\displaystyle  \Omega \left( \frac{ n\rho^2}{1+\rho^2}\cdot \frac{\sigma^2}{\eps^{2/r}}\right) & \text{(FOG)} \\
\displaystyle \Omega \left( n\rho^2  \cdot\frac{\sigma^2}{\eps^{2/r}}\right) & \text{(SOG)}
\end{cases}
\end{align*}
\end{corollary}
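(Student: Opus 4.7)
The plan is to derive this corollary by (i) computing the inscribed $\ell_\infty$-ball radius $s_\cX$ for each of the two domains, and (ii) converting the $(\eps,\delta)$-complexity bound of Theorem~\ref{thm:Lip_bounds} into an $\eps$-complexity bound via Proposition~\ref{prop:prob_det_bounds}.

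First I would compute $s_\cX$ in the two cases. For $\cX = \rho B^n_\infty$, the inclusion $sB^n_\infty \subseteq \rho B^n_\infty$ holds iff $s \le \rho$, so $s_\cX = \rho$ and $s_\cX^2 = \rho^2$. For $\cX = \rho B^n_2$, the inclusion $sB^n_\infty \subseteq \rho B^n_2$ is equivalent (via the worst-case vertex) to $s\sqrt{n} \le \rho$, giving $s_\cX = \rho/\sqrt{n}$ and $s_\cX^2 = \rho^2/n$.

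Next, I would apply Proposition~\ref{prop:prob_det_bounds} at a fixed confidence level, e.g.\ $\delta = 1/3 \in (0,1/2)$, to obtain
\begin{align*}
K^{(r)}_\Pro(\eps) \ge K^{(r)}_\Pro(3\eps, 1/3).
\end{align*}
Invoking Theorem~\ref{thm:Lip_bounds} at accuracy $3\eps$ and confidence $1/3$ (valid for all sufficiently small $\eps$, i.e.\ $3\eps \le (s_\cX\sqrt{n/8})^r$, which is exactly the asymptotic regime captured by $\Omega$-notation), and using that $(1-\delta)n - 8 = \Theta(n)$ when $n \ge 16$ and $\delta = 1/3$, yields
\begin{align*}
K^{(r)}_\Pro(\eps) = \Omega\!\left(\frac{n \cdot n s_\cX^2}{n s_\cX^2 + 1} \cdot \frac{\sigma^2}{\eps^{2/r}}\right) \text{ (FOG)}, \qquad K^{(r)}_\Pro(\eps) = \Omega\!\left(n \cdot n s_\cX^2 \cdot \frac{\sigma^2}{\eps^{2/r}}\right) \text{ (SOG)},
\end{align*}
where the factor $3^{-2/r}$ coming from $(3\eps)^{2/r}$ is absorbed into the constants. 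Substituting $s_\cX^2 = \rho^2$ for the hypercube and $s_\cX^2 = \rho^2/n$ for the Euclidean ball gives the four bounds claimed: in the hypercube case the ratio $\frac{n^2 s_\cX^2}{n s_\cX^2 + 1}$ simplifies to $\frac{n^2\rho^2}{1+n\rho^2}$, while in the Euclidean-ball case it simplifies to $\frac{n\rho^2}{1+\rho^2}$, and analogously for the SOG bounds.

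There is no real obstacle here: the proof is essentially bookkeeping, with the one substantive point being the correct computation of $s_\cX$ for the Euclidean ball, where one must remember that the largest $\ell_\infty$-cube inscribed in $\rho B^n_2$ has half-side $\rho/\sqrt{n}$, not $\rho$. This factor of $n$ is precisely what accounts for the difference in dimension-dependence between the two geometries in the final bounds.
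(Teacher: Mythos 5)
Your proposal is correct and is essentially the only reasonable way to derive this corollary from Theorem~\ref{thm:Lip_bounds}; the paper gives no explicit proof, evidently regarding it as the same bookkeeping you carry out. The two substantive checks are both handled properly: the inscribed-cube computation $s_\cX = \rho$ for $\rho B^n_\infty$ versus $s_\cX = \rho/\sqrt{n}$ for $\rho B^n_2$ (the $\sqrt{n}$ coming from the $\ell_2$ norm of a vertex of the cube), and the passage from $(\eps,\delta)$-complexity to $\eps$-complexity via Proposition~\ref{prop:prob_det_bounds} at $\delta=1/3$, with the constant $3^{-2/r} \ge 1/9$ and the factor $(1-\delta)n - 8 = (2/3)n - 8 = \Theta(n)$ for $n \ge 16$ safely absorbed into the $\Omega(\cdot)$. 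The algebraic simplifications $\frac{n^2 s_\cX^2}{n s_\cX^2 + 1} \mapsto \frac{n^2\rho^2}{1+n\rho^2}$ and $\frac{n\rho^2}{1+\rho^2}$ are correct.
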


When the functions in $\Fun$ are {\em strongly convex} (cf.~Example~\ref{ex:sc}), rather than convex Lipschitz, the complexity of optimization will decrease:

\begin{theorem}\label{thm:SC_bounds} Consider the problem class $\Pro = (\cX,\Fun_\SC,\Ora)$ with a first-order or gradient-only Gaussian oracle, where $\cX \subset \Reals^n$ with $n \ge 16$. Then for any $r \ge 1$, $\eps \le \big(n s^2_\cX/16\big)^r$, $\delta \in (0,1/2)$, the following bounds hold:
\begin{align}\label{eq:SC_bound}
K^{(r)}_\Pro(\eps,\delta) \ge \begin{cases}
\displaystyle \frac{((1-\delta)n - 8) \log 2}{256 (D^2_\cX + 1)} \cdot \frac{\sigma^2}{\eps^{1/r}}, & \text{(FOG)} \\
\displaystyle \frac{((1-\delta)n - 8) \log 2}{256} \cdot \frac{\sigma^2}{\eps^{1/r}}, & \text{(SOG)}
\end{cases}
\end{align}
\end{theorem}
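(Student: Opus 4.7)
The plan is to follow the template of Theorem~\ref{thm:Lip_bounds}, but to replace the cone-shaped Lipschitz test functions by \emph{quadratics} so as to live inside $\Fun_\SC$. Start from the Varshamov--Gilbert subset $\{\xi_0,\ldots,\xi_{N-1}\}\subset\{-1,+1\}^n$ with $N>2^{n/8}\ge 4$ and $\|\xi_i-\xi_j\|^2\ge n/2$, and for a scale $c>0$ to be chosen define
\begin{align*}
f_i(x)=\tfrac12\|x-c\xi_i\|^2,\qquad i=0,\ldots,N-1.
\end{align*}
A direct expansion shows that each $f_i$ is $1$-strongly convex (hence in $\Fun_\SC$), is uniquely minimized at $x^*_i=c\xi_i$ with $f^*_i=0$, and has gradient $g_i(x)=x-c\xi_i$. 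The natural choice $c=4\sqrt{\eps^{1/r}/n}$ requires $c\le s_\cX$, which is exactly the standing hypothesis $\eps\le(ns_\cX^2/16)^r$ and guarantees each $c\xi_i\in s_\cX B_\infty^n\subseteq\cX$.

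Next I verify the pseudometric condition \eqref{eq:pseudometric} at scale $\eps^{1/r}$ on the finite subclass. If $\tfrac12\|x-c\xi_i\|^2<\eps^{1/r}$ then the triangle inequality gives
\begin{align*}
\|x-c\xi_j\|\ge\|c\xi_i-c\xi_j\|-\|x-c\xi_i\|\ge c\sqrt{n/2}-\sqrt{2\eps^{1/r}}\ge\sqrt{2\eps^{1/r}},
\end{align*}
so $\tfrac12\|x-c\xi_j\|^2\ge\eps^{1/r}$, i.e.\ no $\eps^{1/r}$-minimizer of $f_i$ is an $\eps^{1/r}$-minimizer of $f_j$. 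Declaring $\disc(f_i,f_j)\deq\tfrac14\|c\xi_i-c\xi_j\|^2\ge 2\eps^{1/r}$ formalizes this into \eqref{eq:pseudometric}. Letting $M$ be uniform on $\{0,\ldots,N-1\}$ and assuming the existence of $\Alg\in\Algs_T(\Pro)$ satisfying \eqref{eq:good_alg}, Lemma~\ref{lm:fano} yields
\begin{align*}
I(M;\wh M_T)\ge(1-\delta)\log N-\log 2>\frac{(1-\delta)n-8}{8}\log 2.
\end{align*}

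For the matching upper bound I apply Lemma~\ref{lm:1st_order_IR_bound}. The subgradient difference is deterministic and $x$-independent: $g_i(x)-g_j(x)=c(\xi_j-\xi_i)$, so $\|g_i(x)-g_j(x)\|^2\le 4c^2n=64\eps^{1/r}$, which immediately settles the SOG case. For FOG I additionally need the function-value gap; using the identity $|a^2-b^2|=|a-b|(a+b)$,
\begin{align*}
|f_i(x)-f_j(x)|=\tfrac12\bigl|\|x-c\xi_i\|-\|x-c\xi_j\|\bigr|\bigl(\|x-c\xi_i\|+\|x-c\xi_j\|\bigr)\le D_\cX\|c\xi_i-c\xi_j\|,
\end{align*}
since $x,c\xi_k\in\cX$ and therefore $\|x-c\xi_k\|\le D_\cX$. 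Squaring and inserting the bound $\|c\xi_i-c\xi_j\|^2\le 4c^2n=64\eps^{1/r}$ gives $(f_i(x)-f_j(x))^2\le 64\eps^{1/r}D_\cX^2$. Plugging these bounds into \eqref{eq:1st_order_IR_bound} yields $I(M;\wh M_T)\le 32T\eps^{1/r}(D_\cX^2+1)/\sigma^2$ (FOG) and $32T\eps^{1/r}/\sigma^2$ (SOG). Combining with the Fano lower bound and rearranging produces \eqref{eq:SC_bound}.

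The main obstacle — and the reason the FOG bound carries the extra $D_\cX^2+1$ factor absent from the Lipschitz case — is that two quadratics with different minimizers differ by a \emph{linear} function whose magnitude grows with $\|x\|$, so their function-value gap cannot be controlled by $O(\eps^{1/r})$ alone as in Theorem~\ref{thm:Lip_bounds}; this forces the diameter of $\cX$ into the estimate. Everything else in the argument is parallel to the Lipschitz proof, with the sharper $\eps^{1/r}$ (rather than $\eps^{2/r}$) dependence coming directly from the quadratic growth of strongly convex functions near their minimizers.
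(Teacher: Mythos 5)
Your proof is correct and follows essentially the same route as the paper's: the same Varshamov--Gilbert packing, the same quadratic test functions $f_i(x)=\tfrac12\|x-c\xi_i\|^2$ at scale $c=\sqrt{16\eps^{1/r}/n}$, the same factoring trick $|f_i(x)-f_j(x)|\le D_\cX\|x^*_i-x^*_j\|$ for the function-value gap, and the same combination of Lemma~\ref{lm:fano} with the IR bound of Lemma~\ref{lm:1st_order_IR_bound}. The only cosmetic difference is that you spell out the triangle-inequality verification of \eqref{eq:pseudometric} explicitly (note your $\disc(f_i,f_j)=\tfrac14\|c\xi_i-c\xi_j\|^2$ coincides with the paper's $\tfrac{4\eps^{1/r}}{n}\|\xi_i-\xi_j\|^2$), whereas the paper simply asserts it; one small slip is that you write $\|x-c\xi_j\|\ge\sqrt{2\eps^{1/r}}$ where strict inequality is needed and holds, since the hypothesis $\|x-c\xi_i\|<\sqrt{2\eps^{1/r}}$ is strict.
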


\begin{proof} Given $n$, construct the set $\{\xi_0,\ldots,\xi_{N-1}\} \subset \{-1,+1\}^n$ as in the proof of Theorem~\ref{thm:Lip_bounds} and define the functions
\begin{align*}
f_i(x) \deq \frac{1}{2} \left\| x -  \sqrt{\frac{16 \eps^{1/r}}{n}} \xi_i \right\|^2, \,\, i = 0,1,\ldots,N-1.
\end{align*}
Since $\eps \le \big(n s^2_\cX/16)^r$,
\begin{align*}
\left\{ x^*_i \deq  \sqrt{\frac{16 \eps^{1/r}}{n}} \xi_i \right\}^{N-1}_{i=0} \subset  \sqrt{\frac{16 \eps^{1/r}}{n}} B^n_\infty \subseteq s_\cX B^n_\infty \subseteq \cX.
\end{align*}
Thus, the functions $f_i$ lie in $\Fun_\SC$, and each $f_i$ is uniquely minimized by $x^*_i$ with $f^*_i = 0$. Moreover, upon defining
\begin{align*}
d(f_i,f_{j}) \deq \frac{4\eps^{1/r}}{n} \| \xi_i - \xi_{j} \|^2, 
\end{align*}
it follows from \eqref{eq:hamming} and the triangle inequality that $d(f_i,f_{j}) \ge 2\eps^{1/r}$ for all $i \neq j$, and that this $d$ satisfies \eqref{eq:pseudometric} on the set $\{f_i\}^{N-1}_{i=0}$. Hence, if there is some $T \in \Naturals$ and some $\Alg \in \Algs_T(\Pro)$ satisfying \eqref{eq:good_alg}, we can apply Lemma~\ref{lm:fano} to obtain \eqref{eq:Lip_lower_bound}.

Now we will derive an upper bound on $I(M; \wh{M}_T)$. For any $x \in \cX$ and any pair $i,j$, we have
\begin{align*}
	&\left|f_i(x) - f_j(x)\right| \\
	&= \frac{1}{2}\left| \| x - x^*_i \| + \| x - x^*_j \| \right| \cdot \left| \| x - x^*_i \| - \| x - x^*_j \| \right| \\
	&\le D_\cX \| x^*_i - x^*_j \|,
\end{align*}
where the last step uses the definition of $D_\cX$ and the triangle inequality. Thus,
\begin{align*}
|f_i(x) - f_{j}(x)|^2 \le D^2_\cX \| x^*_i - x^*_{j} \|^2 \le 64 D^2_\cX \eps^{1/r}.
\end{align*}
Also,
\begin{align*}
\| \nabla f_i(x) - \nabla f_{j}(x) \|^2 &= \| x^*_i - x^*_{j} \|^2 \le 64 \eps^{1/r}.
\end{align*}
Therefore, applying Lemma~\ref{lm:1st_order_IR_bound}, we get
\begin{align}\label{eq:SC_upper_bound}
I(M; \wh{M}_T) \le \begin{cases} 
\displaystyle \frac{32  (D^2_\cX + 1) T \eps^{1/r}}{\sigma^2} & \text{(FOG)} \\
\displaystyle \frac{32 T \eps^{1/r}}{\sigma^2} &\text{(SOG)}
\end{cases}
\end{align}
Combining \eqref{eq:Lip_lower_bound} and \eqref{eq:SC_upper_bound}, we get \eqref{eq:SC_bound}.
\end{proof}

\begin{corollary} Under the conditions of Theorem~\ref{thm:SC_bounds}, we have
	\begin{align*}
		\cX = \rho B^n_\infty &\,\, \Longrightarrow \,\,
K^{(r)}_\Pro(\eps) = \begin{cases}
\displaystyle  \Omega \left( \frac{n}{n\rho^2+1 } \cdot \frac{\sigma^2}{\eps^{1/r}}\right) & \text{(FOG)} \\
\displaystyle \Omega \left( n \cdot \frac{\sigma^2}{\eps^{1/r}} \right) & \text{(SOG)}
\end{cases} 
\\
\cX = \rho B^n_2 &\,\, \Longrightarrow \,\,
K^{(r)}_\Pro(\eps) = \begin{cases}
\displaystyle  \Omega \left( \frac{n}{\rho^2+1} \cdot \frac{\sigma^2}{\eps^{1/r}} \right) & \text{(FOG)} \\
\displaystyle \Omega \left( n \cdot \frac{\sigma^2}{\eps^{1/r}} \right) & \text{(SOG)}
\end{cases}
\end{align*}
\end{corollary}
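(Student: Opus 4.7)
The plan is to treat the corollary as a direct specialization of Theorem~\ref{thm:SC_bounds}, combined with Proposition~\ref{prop:prob_det_bounds} to convert the $(\eps,\delta)$-complexity bound into one on the expected-error complexity $K^{(r)}_\Pro(\eps)$. Concretely, Proposition~\ref{prop:prob_det_bounds} with $\delta = 1/3$ gives $K^{(r)}_\Pro(\eps) \ge K^{(r)}_\Pro(3\eps, 1/3)$, and for $n \ge 16$ the prefactor $((1-\delta)n - 8)\log 2 = ((2n/3) - 8)\log 2$ is $\Theta(n)$. Substituting into Theorem~\ref{thm:SC_bounds} (with $\eps$ replaced by $3\eps$) yields, valid in the regime $3\eps \le (n s_\cX^2 / 16)^r$,
\begin{align*}
K^{(r)}_\Pro(\eps) = \Omega\!\left(\frac{n}{D_\cX^2 + 1} \cdot \frac{\sigma^2}{\eps^{1/r}}\right) \text{ (FOG)}, \quad K^{(r)}_\Pro(\eps) = \Omega\!\left(n \cdot \frac{\sigma^2}{\eps^{1/r}}\right) \text{ (SOG)}.
\end{align*}

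The remaining work is just the elementary geometry of the two domains $\cX = \rho B^n_\infty$ and $\cX = \rho B^n_2$. For $\cX = \rho B^n_\infty$, the $\ell_2$-diameter is $D_\cX = 2\rho\sqrt{n}$ (attained by opposite vertices $\pm \rho \bone$), so $D_\cX^2 + 1 = 4n\rho^2 + 1 = \Theta(n\rho^2 + 1)$, and the largest inscribed $\ell_\infty$-ball has radius $s_\cX = \rho$. For $\cX = \rho B^n_2$, clearly $D_\cX = 2\rho$ so $D_\cX^2 + 1 = \Theta(\rho^2 + 1)$, and the inscribed $\ell_\infty$-ball radius is determined by the condition $s\sqrt{n} \le \rho$, giving $s_\cX = \rho/\sqrt{n}$.

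Plugging these values into the FOG bound gives the first column of each row of the corollary:
\begin{align*}
\cX = \rho B^n_\infty: \frac{n}{D_\cX^2 + 1} = \frac{n}{4n\rho^2 + 1} = \Omega\!\left(\frac{n}{n\rho^2 + 1}\right), \quad
\cX = \rho B^n_2: \frac{n}{D_\cX^2 + 1} = \frac{n}{4\rho^2 + 1} = \Omega\!\left(\frac{n}{\rho^2 + 1}\right).
\end{align*}
The SOG columns need no adjustment, since the prefactor $1/(D_\cX^2 + 1)$ is absent. The admissibility condition $3\eps \le (n s_\cX^2/16)^r$ becomes $\eps \le (n\rho^2/48)^r$ in the $\ell_\infty$ case and $\eps \le (\rho^2/48)^r$ in the $\ell_2$ case (since $n s_\cX^2 = \rho^2$ there)—both vacuous in the asymptotic small-$\eps$ regime implicit in $\Omega(\cdot)$ notation.

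There is really no hard step here: the corollary is pure bookkeeping once one has Theorem~\ref{thm:SC_bounds}. The only mild care needed is in computing $s_\cX$ for the Euclidean ball (the inscribed $\ell_\infty$-ball shrinks by $1/\sqrt{n}$, which is precisely what makes the $n$-dependence of the FOG bound different between the two domains) and in tracking that passing from $K^{(r)}_\Pro(\eps,1/3)$ to $K^{(r)}_\Pro(\eps)$ only costs a constant factor absorbed by $\Omega(\cdot)$.
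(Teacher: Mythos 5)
Your proposal is correct and takes the approach the paper intends: the corollary is stated without an explicit proof, and the natural route is exactly the one you follow — apply Proposition~\ref{prop:prob_det_bounds} with $\delta = 1/3$ to convert $K^{(r)}_\Pro(\eps)$ into $K^{(r)}_\Pro(3\eps,1/3)$, note that $(1-\delta)n - 8 = (2n/3) - 8 \ge n/6$ for $n \ge 16$, and then substitute the geometric parameters $D_\cX$ and $s_\cX$ for the two domains. Your geometry is right: $D_\cX = 2\rho\sqrt{n}$, $s_\cX = \rho$ for $\rho B^n_\infty$, and $D_\cX = 2\rho$, $s_\cX = \rho/\sqrt{n}$ for $\rho B^n_2$, which is precisely what produces the different $n$-dependence between the two FOG rows while leaving the SOG rows identical.

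One minor slip worth fixing for the record: after substituting $3\eps$ into the admissibility condition of Theorem~\ref{thm:SC_bounds}, the exact constraint is $\eps \le \tfrac{1}{3}(n s_\cX^2/16)^r$, not $\eps \le (n s_\cX^2/48)^r$; these coincide only when $r=1$, and your version is the more restrictive one for $r > 1$. Since either form is a sufficient condition and the conclusion is asymptotic in $\eps$, this does not affect the validity of the corollary, but the stated threshold is not what a careful unwind of the theorem gives.
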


\subsection{Noisy oracles satisfying a moment bound}

Our information-theoretic technique can be used to give a simpler derivation of the lower bounds obtained by Nemirovski and Yudin \cite[Ch.~5]{NemYud83} for Lipschitz convex functions and noisy first-order oracles satisfying a certain moment constraint.

Let $\cX = B^\Dim_\infty$ and $\Fun = \Fun_\Lip$, and consider the class of all noisy first-order oracles whose output $Y = (V^{0},V^{1}) \in  \Reals \times \Reals^\Dim$ satisfies the following two conditions:
\begin{itemize}
\item (C1) It is unbiased, i.e.,\ $\E[V^{0}|f,x] = f(x), \ \E[V^{1}|f,x] \in \partial f(x), \forall f \in \Fun, x \in \cX$.
\item (C2) There exist constants $\alpha > 1, L > 0$, such that
\begin{align*}
\E\big[|V^{0} - f(x)|^\alpha \big| f,x\big] \le L^\alpha, \quad \E \big[ \| V^{1} \|^\alpha \big| f,x \big] \le L^\alpha
\end{align*}
for all $f \in \Fun, x \in \cX$.
\end{itemize}
We will denote the class of all such oracles by $\Pi(\alpha,L)$.

\begin{theorem}\label{thm:rth_moment} There exists an oracle $\Ora \in \Pi(\alpha,L)$, such that the corresponding problem class $\Pro = (\cX, \Fun_\Lip,\Ora)$ satisfies
\begin{equation}\label{eq:rth_moment}
K_{\Pro}(\eps,\delta) \ge \frac{\log 2 - h_2(\delta)}{c \log 2} \eps^{-\alpha/(\alpha-1)}
\end{equation}
for all $\eps \in (0,\min\{L/2^{1/\alpha},1\}), \delta \in (0,1/2)$ with some $c = c(\alpha,L) > 0$.
\end{theorem}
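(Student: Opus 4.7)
The strategy parallels the other lower bounds of this section: reduce optimization to a binary hypothesis test, invoke the binary form of Fano's inequality~\eqref{eq:binary_fano}, and upper-bound the mutual information via the refined bound of Lemma~\ref{lm:refined_info_bounds}. What is new is the design of a worst-case oracle in $\Pi(\alpha,L)$, which replaces Gaussian noise by a discrete ``Bernoulli-scaling'' mechanism properly matched to an $\alpha$-moment budget.

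For the hard pair I would take, essentially in one dimension, $f_0(x) = \eps|x_1-1|$ and $f_1(x) = \eps|x_1+1|$ on $\cX = B^n_\infty$ (ignoring coordinates $x_2,\ldots,x_n$). Both are $\eps$-Lipschitz, hence in $\Fun_\Lip$ for $\eps \le 1$; their unique minimizers on $\cX$ sit at the opposite vertices $\pm e_1$, and a direct check shows that the $\eps$-sublevel sets of $f_0$ and $f_1$ are disjoint, so the pseudometric condition~\eqref{eq:pseudometric} is satisfied. Placing a uniform prior on $M\in\{0,1\}$ and applying Lemma~\ref{lm:fano} yields
\begin{equation*}
I(M;\wh{M}_T) \ge \log 2 - h_2(\delta).
\end{equation*}

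The heart of the argument is the oracle construction. I would let $V^0$ and $V^1$ be conditionally independent given $(f,x)$, each supported on $\{-c,0,c\}$ for a scale $c$ to be tuned, with
\begin{equation*}
\Pr\{V^1 = \pm c\mid f,x\} = \tfrac{p}{2} \pm \tfrac{g(x)}{2c}, \quad \Pr\{V^1 = 0\mid f,x\} = 1-p,
\end{equation*}
and analogously for $V^0$ with $f(x)$ in place of $g(x)$, where $p$ is chosen so that $c^\alpha p \le L^\alpha$; this enforces (C2). Unbiasedness (C1) holds by construction. The calibration I would use is $c^{\alpha-1}\asymp L^\alpha/\eps$, producing $p \asymp (\eps/L)^{\alpha/(\alpha-1)}$ while keeping $|g(x)|/(2c)$ strictly below $p/2$ so that all three probability masses remain positive and the two response laws are mutually absolutely continuous; the condition $\eps < L/2^{1/\alpha}$ is exactly what is needed to enforce positivity together with the sharp moment bound. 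A direct evaluation of the divergence between two three-point measures that differ only in the $\pm c$ masses (using $|f_0(x)-f_1(x)|\le 2\eps$ and $\|g_0(x)-g_1(x)\|\le 2\eps$ on the entire domain) then yields
\begin{equation*}
D(\Prob_{Y|M=0,X=x} \,\|\, \Prob_{Y|M=1,X=x}) \le C(\alpha,L)\,\eps^{\alpha/(\alpha-1)}
\end{equation*}
uniformly in $x\in\cX$.

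Combining the Fano lower bound with the IR inequality~\eqref{eq:IR_bound} gives $T\cdot C(\alpha,L)\,\eps^{\alpha/(\alpha-1)} \ge \log 2 - h_2(\delta)$, which rearranges to \eqref{eq:rth_moment} with $c(\alpha,L)$ absorbing $C(\alpha,L)$ and the ambient $\log 2$. The main obstacle is the coupled tuning of $c$ and $p$: one must simultaneously (i) satisfy (C1), (ii) enforce (C2) with the sharp constant $L^\alpha$ rather than a looser multiple, (iii) maintain positivity of the three probability masses uniformly over $\cX$ and over both hypotheses, and (iv) recover precisely the $\eps^{\alpha/(\alpha-1)}$ scaling of the divergence. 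Once this three-point design is committed to, the remaining steps are elementary applications of the machinery already assembled in Section~\ref{ssec:info_bounds}.
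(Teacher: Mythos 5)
Your proposal is correct in spirit and gives the right rate, but the oracle you build is genuinely different from the paper's, and this forces you onto a different branch of the information-bound machinery.

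The paper works with two \emph{linear} functions $f_{0/1}(x) = \mp\xi^\tr x$ and the Nemirovski--Yudin ``spike'' oracle: with probability $1-p$ it returns $(0,0)$, and with probability $p$ it returns $p^{-1}(f_i(x),\nabla f_i(x))$. That oracle is \emph{not} absolutely continuous across hypotheses --- the two response laws put point masses at different locations whenever $f_0(x)\ne f_1(x)$ or $\nabla f_0(x)\ne\nabla f_1(x)$, so $D(\Prob_{Y|M=0,X=x}\,\|\,\Prob_{Y|M=1,X=x})=+\infty$ and the IR bound~\eqref{eq:IR_bound} (which you invoke) degenerates. The paper sidesteps this by going through the ``oblivious'' decomposition $I(M;\wh M)\le\sum_t I(U_t;Y_t|X_t)$ and then exploiting convexity of mutual information in the channel: since the channel $U\to Y$ is a $(1-p,p)$ mixture of a useless channel and a noiseless one, $I(U_t;Y_t|X_t)\le p\,H(U_t|X_t)\le p\log 2$. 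That one-liner replaces the entire divergence computation and produces the $\log 2$ in the denominator of \eqref{eq:rth_moment} explicitly.

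Your three-point construction is designed precisely so that the two response laws remain mutually absolutely continuous, at which point the IR bound does apply and you recover the same $\eps^{\alpha/(\alpha-1)}$ scaling by a direct divergence calculation. This is a legitimate alternative route and it matches the section's template more literally, but it buys you a harder technical problem: you have to tune $(c,p)$ so that simultaneously (a) $c^\alpha p\le L^\alpha$, (b) $p/2 > |g(x)|/(2c)$ for all $x$, and (c) $p\le 1$, and you have to check the $V^0$-moment condition $\E[|V^0-f(x)|^\alpha]\le L^\alpha$ (with the distribution you describe, supported on $\{-c,0,c\}$, the mass at $0$ contributes $|f(x)|^\alpha$, so you'd do better to center $V^0$ at $f(x)$ with symmetric $\pm c$ jumps, which also restores exact unbiasedness more cleanly). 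None of this is fatal --- the calibration $c^{\alpha-1}\asymp L^\alpha/\eps$, $p\asymp(\eps/L)^{\alpha/(\alpha-1)}$ works, with $pc\asymp\eps$ sitting right at the positivity threshold --- but the constants must be chosen with a strict margin, and your claim that the threshold $\eps<L/2^{1/\alpha}$ is ``exactly'' what positivity requires is not quite right; that particular constant comes out of the paper's choice $c^{1-\alpha}<\min\{L^\alpha/2,1\}$ for the spike oracle and need not be the sharp constant for your design. In short: your route works, the paper's route is slicker because the spike oracle collapses the information bound to a convexity argument rather than a divergence estimate, and the price you pay for using the IR bound is the careful coupled tuning you correctly flag as the main obstacle.
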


\begin{proof} Define two functions $f_0(x) = -\xi^\tr x$ and $f_1(x) = \xi^\tr x$, where $\xi \in \Reals^\Dim$ has all coordinates equal to $\eps/\Dim$, and consider the following noisy oracle defined by Nemirovski and Yudin \cite[p.~198]{NemYud83}. Choose a constant $c > 0$ such that $c^{1-\alpha} < \min\{L^\alpha/2,1\}$, and let $p_{\eps,\alpha} \deq c \eps^{\alpha/(\alpha-1)}$.  On the set $\Fun \backslash \{f_0,f_1\}$, this oracle acts noiselessly, while on the set $\{f_0,f_1\}$ it acts as follows: given $f_i$, $i\in\{0,1\}$, and $x \in \cX$, it outputs
\begin{align*}
Y =  \begin{cases}
(0,0), & \text{with probability } 1-p_{\eps,\alpha} \\
p^{-1}_{\eps,\alpha}(f_i(x), \nabla f_i(x)), & \text{with probability } p_{\eps,\alpha}
\end{cases}
\end{align*}
It is an easy exercise to show that this oracle belongs to $\Pi(\alpha,L)$; moreover, on the set $\{f_0,f_1\}$ this oracle can be realized in the form \eqref{eq:SLO} with $\psi(f_i,x) = (f_i(x),\nabla f_i(x))$, $i \in \{0,1\}$.

Consider an algorithm $\Alg$ that achieves \eqref{eq:good_alg} with $r=1$. Let $U_t = \psi(f_M,X_t)$. Then $
I(U_t; Y_t |X^t, Y^{t-1}) \le I(U_t; Y_t | X_t)$ because $(X^{t},Y^{t-1}) \to U_t \to Y_t$ is a Markov chain. Now, given $X_t = x_t$, $U_t$ can take only two values, namely $(-\xi^\tr x_t, -\xi)$ or $(\xi^\tr x_t, \xi)$. Thus, $H(U_t|X_t) \le \log 2$. Moreover, since the mutual information $I(A;B|C)$ is convex in $P_{B|A,C}$, we have 
\begin{align*}
I(U_t; Y_t | X_t) \le  p_{\eps,\alpha} H(U_t|X_t) \le p_{\eps,\alpha}\log 2.
\end{align*}
Summing over the $T$ rounds and using Lemma~\ref{lm:info_bounds}, we get
\begin{align*}
I(M; \wh{M}_T) \le \sum^T_{t=1} I(U_t; Y_t |X_t) \le T c \eps^{\alpha/(\alpha-1)} \log 2.
\end{align*}
From Lemma~\ref{lm:fano}, we have $I(M; \wh{M}_T) \ge \log 2 - h_2(\delta)$. Combining these bounds and rearranging, we get (\ref{eq:rth_moment}).
 \end{proof}
 
\noindent The statement of Theorem~\ref{thm:rth_moment} should be interpreted in the following sense (cf.~also \cite{NemYud83}): given $\cX$ and $\Fun$ as above,
\begin{align*}
 \sup_{\Ora \in \Pi(\alpha,L)} K_{(\cX,\Fun_\Lip,\Ora)}(\eps) = \Omega(\eps^{-\alpha/(\alpha-1)}).
 \end{align*}
Thus, we have a lower bound which is robust relative to $\Pi(\alpha,L)$. However, this bound is sharp only for $\alpha \in (1,2]$ \cite{NemYud83,Aga09,Aga10}: for $\alpha \ge 2$, the correct bound is $\Omega(1/\eps^2)$. This can be easily seen from the results of the preceding section on the Gaussian first-order oracle.

\subsection{Statistical estimation and sequential experimental design}
\label{ssec:stats}

Finally, let us see how the problems of parametric statistical estimation (Example~\ref{ex:statistical}) and experimental design (Example~\ref{ex:design}) can be viewed through the lens of optimization.

Let us consider statistical estimation first. We will use the notation of Example~\ref{ex:statistical}. Typically, one considers the setting in which the statistician gets $T$ i.i.d.\ samples $Y_1,\ldots,Y_T$ drawn from some $P_\theta$, where $\theta$ is unknown. The quantity of interest is the {\em minimax risk}
\begin{align*}
	R^*_T \deq \inf_{\wh{\theta}_T} \sup_{\theta \in \cX} \E_\theta \left(f_\theta(\wh{\theta}_T) - f_\theta(\theta)\right)^r,
\end{align*}
where the infimum is over all measurable estimators $\wh{\theta}_T : \cY^T \to \cX$, and $f_\theta$ is an element of an appropriate class $\Fun$ of loss functions, such as \eqref{eq:loss} or \eqref{eq:contrast}.

Now, the output of any such estimator can be viewed as the final result of some algorithm $\Alg \in \Algs_T(\Pro)$. Thus, we simply follow our general recipe and isolate a finite subset $\Fun' = \{ f_{\theta_0},\ldots,f_{\theta_{N-1}}\}$ such that, with a suitably defined ``distance'' $\disc(\cdot,\cdot)$ that satisfies \eqref{eq:pseudometric}, we have
\begin{align*}
	\disc(f_{\theta_i},f_{\theta_j}) \ge 2\eps^{1/r}, \qquad \forall i \neq j.
\end{align*}
Suppose that we can arrange things in such a way that the cardinality of such an $\Fun'$ is independent of $\eps$, but may still depend on $\Dim$ and $r$: $N = N(r,\Dim)$ (this is possible in many cases, cf.~the proofs of Theorems~\ref{thm:Lip_bounds} and \ref{thm:SC_bounds}). Then we simply apply the IR bound to get
\begin{align}\label{eq:statistical_complexity}
	K^{(r)}_\Pro(\eps) = \Omega \left( \frac{\log N(r,\Dim)}{\max_{i,j} D(P_{\theta_i}\| P_{\theta_j})}\right).
\end{align}
Assuming, as is often the case, that
\begin{align*}
	\max_{i,j} D(P_{\theta_i}\|P_{\theta_j}) = C(r,\Dim)\eps^{\gamma/r}
\end{align*}
for some $C(r,\Dim) > 0, \gamma > 0$, we can invert \eqref{eq:statistical_complexity} and obtain the minimax lower bound
\begin{align*}
R^*_T = \Omega\left( \left( \frac{\log N(r,\Dim)}{C(r,\Dim)}\right)^{r/\gamma} T^{-r/\gamma}
	\right).
\end{align*}

Similar considerations apply to sequential experimental design as well (Example~\ref{ex:design}), except there we have a {\em design strategy} $\{ \wh{X}_t : \cY^{t-1} \to \cX\}^T_{t=1}$ and the estimator $\wh{\theta}_T : \cY^T \to \cX$, where at time $t = 1,\ldots,T$ we choose the design point $X_t = \wh{X}_t(Y^{t-1})$ and obtain a sample $Y_t \sim P_{\theta,X_t}$, and then at time $T+1$ we process all the samples to get the estimate $X_{T+1} = \wh{\theta}_T(Y^T)$. The minimax risk is then
\begin{align*}
	R^*_T = \inf_{\wh{X}^T,\wh{\theta}_T} \sup_{\theta \in \cX} \E\left( f_\theta(\wh{\theta}_T) - f_\theta(\theta)\right)^r.
\end{align*}
The connection to optimization is even more apparent than in the estimation setting, and the IR bound technique yields
\begin{align*}
	K^{(r)}_\Pro(\eps) = \Omega \left( \frac{\log N(r,\Dim)}{\max_{i,j}\sup_{x \in \cX}D(P_{\theta_i,x}\| P_{\theta_j,x})}\right).
\end{align*}
Just as before, this bound can be inverted to get a lower bound on the minimax risk. Note, however, that what we have is a lower bound on the number of the design points needed to guarantee that the minimax risk is below $\eps$.

\section{Lower bounds for anytime algorithms}
\label{sec:anytime}

Conceptually, our use of the IR bounds is akin to the methods used in statistics to obtain minimax lower bounds through local entropy estimates and a device like the Assouad lemma (cf.~\cite{Yu97} and references therein). In both cases, in order to get the right rates it is essential to arrange things so that the size of the ``packing" set $\Fun'$ is independent of $\eps$. However, one drawback of the IR bounds is that they do not take into account the {\em dynamics} of the algorithm, pertaining to the manner in which its expected error evolves with time. Instead, we must use uniform, worst-case bounds on the uncertainty remaining after each successive oracle call. However, it could be argued on practical grounds that the only optimization algorithms that are of any value are the ones whose performance gradually and monotonically improves with time, as more and more queries are issued --- that is, anytime algorithms. In this section, we show that the LF bounds can be used to track the evolution of the mutual information over time. As a consequence, we will be able to derive upper bounds on the anytime exponent for certain problem classes.

We will show that the amount of information extracted by an anytime algorithm at each time step obeys a {\em law of diminishing returns:} as the queries $X_t$ approach the minimizer, the rate at which the algorithm can reduce its uncertainty about the objective function slows down. Moreover, assuming that the worst-case expected error of such an algorithm decays polynomially with time, we will obtain lower bounds on the rate of this decay. We will also show that, in some cases, insisting on the anytime property may mean that the algorithm will take {\em longer} to get to the point after which its expected error drops below some desired level. This seemingly strange conclusion reflects the fact that, without placing any restrictions on the algorithm's trajectory, we are allowing ``bizarre'' (and not very practical) strategies that wander around the problem domain for a while, gathering information without much regard to how close they are to a minimizer, and then --- boom! --- produce an excellent solution. With such algorithms, it is certainly no surprise that they may hit upon a good solution more quickly than an ``honest'' anytime algorithm that must proceed incrementally and inexorably towards a minimizer.

In contrast to the local technique based on the IR bounds, our use of the LF bounds in this section can be thought of as a {\em global technique} \cite{YanBar99,YangAISTATS,Guntuboyina}. The main idea is as follows. Suppose we have an anytime algorithm whose worst-case expected errors decay at some rate $\eps_t \to 0$. Then, for each $T$, we consider an $\eps_T$-packing of the problem domain (with respect to a suitable metric, typically just the usual Euclidean norm $\| \cdot \|$), which will induce a packing of the function class $\Fun$. This packing will be of size $\Omega(\eps^{-\Dim}_T)$. Since the algorithm does well on every single function in $\Fun$, it must necessarily do well on every function in this large packing set. Thus, if the objective function is drawn uniformly at random from this set, then combining the lower bound of Lemma~\ref{lm:fano} with the LF upper bound will result in a relation of the form
\begin{align*}
\Dim	\log \left(\frac{1}{\eps_T}\right) \preceq \sum^T_{t=1} \eps^\gamma_t,
\end{align*}
where $\gamma > 0$ depends on the smoothness of the functions in $\Fun$. This relation must hold for all but finitely many values of $T$. The optimal rate is then derived by balancing the entropy $\Dim \log \eps^{-1}_T$ and the sum of diminishing mutual information terms.

\subsection{Strongly convex functions}
\label{ssec:strongly_convex}

We first consider the case of strongly convex functions with Lipschitz-continuous gradients (an often made assumption \cite{Vilk,NJLS09}). Given $\cX$, let $\Fun_{\kappa,L}$ denote the set of all functions $f : \cX \to \Reals$ that satisfy the following conditions:
\begin{itemize}
\item Each $f \in \Fun_{\kappa,L}$ is $\kappa$-strongly convex (cf.~Example~\ref{ex:sc}).
\item For each $f$, the mapping $x \mapsto \nabla f(x)$ is $L$-Lipschitz:
\begin{align*}
\| \nabla f(x) - \nabla f(y) \| \le L \| x - y \|, \,\, \forall x,y \in \cX.
\end{align*}
\end{itemize}
Consider the noisy first-order oracle $Y = (f(x) + W, \nabla f(x) + Z)$ as in \eqref{eq:1st_order_oracle}. Let $\{f_0,\ldots,f_{N-1}\} \subset \Fun_{\kappa,L}$ be a finite set of functions such that $f^*_0 = \ldots = f^*_{N-1} = c^*$ and $\nabla f_i(x^*_i) = 0$, where $x^*_i$ is the (unique) minimizer of $f_i$ on $\cX$.  Let $M$ denote the uniform random variable on $\{0,\ldots,N-1\}$. Then we have the following (the proof is in Appendix~\ref{app:proofs}):

\begin{lemma}\label{lm:LF_SC} At every time $t=1,2,\ldots$, any algorithm $\Alg \in \Algs_\infty(\Pro)$ satisfies
\begin{align}\label{eq:LF_SC}
 I(M; Y_t | X^t,Y^{t-1})  \le \frac{(L/\kappa)^2\left(D_\cX^2 + 1\right)}{\sigma^2}   \max_i \E \err_\Alg(t,f_i).
\end{align}
Moreover, if $\Alg$ is $1$-anytime (cf.~Definition~\ref{def:anytime}), then
\begin{align}
I(M; Y_t | X^t,Y^{t-1}) &\le \frac{(L/\kappa)^2}{\sigma^2} (D^2_\cX + 1) \cdot \Err_\Alg(t,\Fun_{\kappa,L}) \nonumber \\
&\xrightarrow{t \to \infty} 0. \label{eq:dim_returns}
\end{align} 
\end{lemma}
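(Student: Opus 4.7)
The strategy is to invoke the Lyapunov--Function variant of Lemma~\ref{lm:refined_info_bounds} with the reference kernel chosen to be the ``noise-only'' output that the oracle would emit at any minimizer. By construction $f_i(x^*_i)=c^*$ and $\nabla f_i(x^*_i)=0$ for every $i$, so a query at any $x^*_i$ produces
$$\ProbQ^* \deq \Normal(c^*,\sigma^2)\otimes \Normal(0,\sigma^2 I_\Dim).$$
I take $\ProbQ_{Y_t|X^t,Y^{t-1}}\equiv \ProbQ^*_{Y_t}$ (independent of everything) and apply Lemma~\ref{lm:refined_info_bounds} at a single time $t$, which yields
$$I(M;Y_t\mid X^t,Y^{t-1})\;\le\; D\bigl(\Prob_{Y_t\mid M,X_t}\,\big\|\, \ProbQ^*_{Y_t}\,\big|\,\Prob_{M,X^t,Y^{t-1}}\bigr),$$
after using the Markov chain $(X^{t-1},Y^{t-1})\to(M,X_t)\to Y_t$ to reduce $\Prob_{Y_t\mid M,X^t,Y^{t-1}}$ to $\Prob_{Y_t\mid M,X_t}$.

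For fixed $(M{=}i,X_t{=}x)$ the response is a product of Gaussians with the same variances as $\ProbQ^*$ but shifted means, so the inner divergence is simply $\bigl[(f_i(x)-c^*)^2+\|\nabla f_i(x)\|^2\bigr]/(2\sigma^2)$. The crucial step is to absorb both squared terms into a multiple of $f_i(x)-c^*$. Lipschitz gradient together with $\nabla f_i(x^*_i)=0$ gives $\|\nabla f_i(x)\|^2\le L^2\|x-x^*_i\|^2$; strong convexity gives $\|x-x^*_i\|^2\le (2/\kappa^2)(f_i(x)-c^*)$. Combining, $\|\nabla f_i(x)\|^2\le (2L^2/\kappa^2)(f_i(x)-c^*)$. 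Similarly, Lipschitz gradient gives $f_i(x)-c^*\le (L/2)\|x-x^*_i\|^2\le (L/2)D_\cX^2$, hence $(f_i(x)-c^*)^2\le (LD_\cX^2/2)(f_i(x)-c^*)$. Since $\kappa^2\le L$ is automatic for $\Fun_{\kappa,L}$ (both bounds on the Hessian must be compatible), a little constant-juggling yields
$$(f_i(x)-c^*)^2+\|\nabla f_i(x)\|^2\;\le\; 2(L/\kappa)^2(D_\cX^2+1)\,(f_i(x)-c^*).$$

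Taking expectation over $\Prob_{M,X_t}$ and dividing by $2\sigma^2$ produces the bound
$$I(M;Y_t\mid X^t,Y^{t-1})\;\le\;\frac{(L/\kappa)^2(D_\cX^2+1)}{\sigma^2}\,\E[f_M(X_t)-c^*],$$
and the outer expectation is controlled uniformly by $\max_i \E\err_\Alg(t,f_i)$ (each $f_i$ has minimum value $c^*$), proving \eqref{eq:LF_SC}. The second claim is then immediate: since $\{f_0,\ldots,f_{N-1}\}\subset\Fun_{\kappa,L}$, we have $\max_i \E\,\err_\Alg(t,f_i)\le \Err_\Alg(t,\Fun_{\kappa,L})$, which vanishes as $t\to\infty$ by the 1-anytime hypothesis. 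The only real obstacle is the arithmetic of Step~3 (reducing both squared terms to a single linear term in $f_i(x)-c^*$ with the correct packaging of $L$, $\kappa$, and $D_\cX$); conceptually, everything else is a direct and natural application of the refined information bound with the minimizer-centered reference measure.
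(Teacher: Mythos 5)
Your proof is correct and follows essentially the same route as the paper: apply the Lyapunov-function bound with $\ProbQ^*$ the noise-only Gaussian emitted at a minimizer, compute the Gaussian divergence, and dominate $(f_i(x)-c^*)^2 + \|\nabla f_i(x)\|^2$ by a multiple of $f_i(x)-c^*$ via strong convexity and the Lipschitz gradient. The only (immaterial) difference is in the elementary chain used for the $(f_i(x)-c^*)^2$ term: you use the descent lemma $f_i(x)-c^* \le (L/2)\|x-x^*_i\|^2 \le (L/2)D_\cX^2$ and then the automatic fact $\kappa^2\le L$ to match the stated constant, whereas the paper uses convexity plus Cauchy--Schwarz to write $f_i(x)-c^*\le LD_\cX\|x-x^*_i\|$ and then strong convexity, which lands exactly on $2(L/\kappa)^2(D_\cX^2+1)(f_i(x)-c^*)$ with no slack.
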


\noindent Thus, the decay of the expected error in minimizing a strongly convex function is accompanied by the decay of the average information gain, and, moreover,  the two quantities decay at the same rate. In other words, anytime algorithms for strongly convex programming obey a {\em law of diminishing returns}. Evidently, this phenomenon is due to the fact that, as the algorithm zeroes in on the minimizer, the signal-to-noise ratio keeps dropping because the mean-square error and the mean-square norm of the gradient both decrease as $O(\eps_t)$. Using Lemma~\ref{lm:LF_SC} in conjunction with the information bounds of Section~\ref{sec:lower_bounds}, we establish the following upper bound on the anytime exponent of strongly convex programming problems:

\begin{theorem}\label{thm:anytime_SC} Consider the problem class $\Pro = (\cX,\Fun,\Ora)$ with $\cX = B^\Dim_\infty$, $\Fun = \Fun_{\kappa,L}$ with $\kappa = 1$ and $L \ge 1$, and the Gaussian first-order oracle \eqref{eq:1st_order_oracle}. Then
$\gamma_\Pro \le 1$. In other words, on this problem class, $O(t^{-1})$ is the optimal error decay rate for all $1$-anytime algorithms whose errors decay polynomially with $t$.
\end{theorem}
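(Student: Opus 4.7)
The plan is to argue by contradiction in the spirit of the global technique outlined at the start of Section~\ref{sec:anytime}: I will pit the decay of the total information any $1$-anytime algorithm can accumulate (controlled by the LF bound of Lemma~\ref{lm:LF_SC}) against the growing number of indistinguishable hypotheses it must discriminate at time $T$ (controlled by Lemma~\ref{lm:fano}). Suppose, for contradiction, that $\gamma_\Pro > 1$: then there exist $\gamma > 1$, an $\Alg \in \Algs_\infty(\Pro)$, and constants $C, t_0$ with $\Err_\Alg(t, \Fun_{1,L}) \le C t^{-\gamma}$ for every $t \ge t_0$.

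For each sufficiently large $T$ I set $\eta_T = 4C T^{-\gamma}$ and construct a finite subclass $\Fun'_T = \{f^{(T)}_0, \ldots, f^{(T)}_{N-1}\}$ of $\Fun_{1,L}$ consisting of quadratics $f^{(T)}_i(x) = \tfrac{1}{2}\|x - x_i\|^2$, whose centers $x_0, \ldots, x_{N-1}$ form a maximal $\ell_2$-packing of $\cX = B^n_\infty$ at separation $2\sqrt{2\eta_T}$. A standard volume-counting argument yields such a packing with $N \ge c_1 \eta_T^{-n/2}$ for some $c_1 > 0$ depending only on $n$, hence $\log N \ge (n\gamma/2)\log T - O(1)$. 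Each $f^{(T)}_i$ lies in $\Fun_{1,L}$ (since $\kappa = 1$ and $L \ge 1$), shares the minimum value $f^{(T)*}_i = 0$, and has zero gradient at its unique minimizer $x_i$; a triangle-inequality check shows that $\disc(f^{(T)}_i, f^{(T)}_j) = \tfrac{1}{4}\|x_i - x_j\|^2 \ge 2\eta_T$ satisfies~\eqref{eq:pseudometric} on $\Fun'_T$, and the hypotheses of Lemma~\ref{lm:LF_SC} are met.

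Applying the reduction of Section~\ref{ssec:reduction} to $\Fun'_T$ with the estimator $\wh{M}_T$ of~\eqref{eq:canonical_estimator}, Markov's inequality on $\Err_\Alg(T, \Fun_{1,L}) \le C T^{-\gamma}$ gives $\sup_i \Pr(\err_\Alg(T, f^{(T)}_i) \ge \eta_T) \le 1/4$, whence Lemma~\ref{lm:fano} with $\delta = 1/4$ yields the lower bound
\begin{align*}
I(M; \wh{M}_T) \ge \tfrac{3}{4}\log N - \log 2 \ge \tfrac{3n\gamma}{8}\log T - O(1).
\end{align*}
In the opposite direction, combining Lemma~\ref{lm:info_bounds} with Lemma~\ref{lm:LF_SC}, splitting the sum at $t_0$, and bounding the early terms by the trivial estimate $\max_i f^{(T)}_i \le 2n$ on $B^n_\infty$ and the tail by $\Err_\Alg(t, \Fun_{1,L}) \le C t^{-\gamma}$, produces
\begin{align*}
I(M; \wh{M}_T) \le \frac{L^2(D_\cX^2 + 1)}{\sigma^2}\Big(2 n t_0 + C \textstyle\sum_{t \ge t_0} t^{-\gamma}\Big),
\end{align*}
a constant in $T$. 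Combining the two estimates forces $\tfrac{3n\gamma}{8}\log T = O(1)$, which is absurd for large $T$; thus no such $\gamma > 1$ can exist and $\gamma_\Pro \le 1$.

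The main obstacle I expect is engineering the packing so that $\log N$ diverges with $T$: the coarser Varshamov--Gilbert construction used in Theorem~\ref{thm:SC_bounds} only gives $\log N = \Theta(n)$, which would not suffice here because the LF upper bound is itself already $\Theta(n)$. The volume-counting packing above fixes this, and the ``law of diminishing returns'' embodied in \eqref{eq:dim_returns} is precisely what keeps the cumulative mutual information bounded when $\gamma > 1$, sealing the contradiction.
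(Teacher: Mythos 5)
Your proof is correct and essentially reproduces the paper's: a maximal $\ell_2$-packing of $\cX$ by quadratics at the $\sqrt{\eps_T}$ scale, a Fano lower bound of order $\Dim\log(1/\eps_T)$ from Lemma~\ref{lm:fano}, and a cumulative Lyapunov-function upper bound of order $\sum_{t\le T}\eps_t$ from Lemmas~\ref{lm:info_bounds} and~\ref{lm:LF_SC}. The only difference is in the endgame: you close with a direct contradiction via the convergence of $\sum_t t^{-\gamma}$ for $\gamma>1$, while the paper invokes the functional-recurrence Lemma~\ref{lm:fun_rec} (whose own proof is the same contradiction) and thereby extracts the slightly stronger statement that $\eps_t=\Omega(t^{-1})$ along an infinite subsequence.
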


\begin{proof} Consider any algorithm $\Alg \in \Algs_\infty(\Pro)$ whose worst-case errors $\eps_t \deq \Err_\Alg(t,\Fun)$  satisfy
\begin{align*}
\limsup_{t \to \infty} t^\gamma \eps_t < \infty
\end{align*}
for some $\gamma \ge 0$. In other words, $\eps_t = O(t^{-\gamma})$. By Markov's inequality, we have, for every $T,$
\begin{align*}
\sup_{f \in \Fun} \Pr \Big( \err_\Alg(T,f) \ge 3\eps_T \Big) \le \frac{\sup_{f \in \Fun} \E \err_\Alg(T,f)}{3\eps_T} \le \frac{1}{3}.
\end{align*}
Let us fix some $T$, let $\Lambda_T = \{\theta_0,\ldots,\theta_{N-1}\}$ denote a maximal $2\sqrt{3 \eps_T}$-packing set in $\cX$ (w.r.t.\ $\| \cdot \|$), and define
\begin{align*}
f_i(x) \deq \frac{1}{2} \|x-\theta_i \|^2, \qquad i=0,\ldots,N-1.
\end{align*}
By volume counting, $N \ge v^{-1}_\Dim (1/3\eps_T)^{\Dim/2}$. We also have $\disc(f_i,f_{j}) = \frac{1}{2} \| \theta_i - \theta_{j} \|^2 \ge 6\eps_T$. By Lemma~\ref{lm:fano},
\begin{align}\label{eq:step1d}
I(M; \wh{M}_T) \ge  \frac{\Dim}{3} \log \left(\frac{1}{\eps_T}\right) + c_\Dim,
\end{align}
where $c_\Dim = \frac{1}{3}\log \left(\frac{1}{3^\Dim 8 v^2_\Dim}\right)$. On the other hand, applying Lemma~\ref{lm:LF_SC}, we obtain
\begin{align}\label{eq:step2d}
I(M; \wh{M}_T) \le \frac{\Dim+1}{\sigma^2} \sum^T_{t=1} \eps_t.
\end{align}
Combining (\ref{eq:step1d}) and (\ref{eq:step2d}), we see that the sequence $\{\eps_t\}$ must satisfy the following inequalities:
\begin{align*}
\frac{\Dim \sigma^2}{3(\Dim+1)}\log \left(\frac{1}{\eps_T}\right) + c'_\Dim \le  \sum^T_{t=1} \eps_t, \quad \forall T
\end{align*}
where $c'_\Dim = \sigma^2 c_\Dim/(\Dim+1)$. From Lemma~\ref{lm:fun_rec} we therefore conclude that there exists an infinite subsequence of times $1 \le t_1 < t_2 < \ldots$, such that $\eps_{t_j} \ge c t^{-1}_j$ for some $c > 0$. Since $\eps_t = O(t^{-\gamma})$ by hypothesis, we must have $\gamma \le 1$.
\end{proof}

\noindent The bound $\Omega(t^{-1})$ is tight and can be achieved by stochastic gradient descent \cite{NJLS09}. Note that the methods of Section~\ref{sec:lower_bounds} can be used to explicitly identify the dependence of the lower bound on the problem dimension $\Dim$.

\subsection{Comparison of IR and LF bounds}

A natural question is whether LF bounds for anytime algorithms provide tighter lower bounds when compared to IR bounds without the anytime assumption. This is indeed the case, as we demonstrate through an example. Interestingly, the difference is not present for linear and quadratic functions, but appears for higher degree polynomials. Consider a simple set-up with $\cX = [-1,1]$,
\begin{align*}
\Fun = \Fun_m = \left\{ f_\theta(x) = (x-\theta)^m : \theta \in \cX \right\}
\end{align*}
for some even $m \in \Naturals$, and the noisy first-order oracle
\begin{align*}
Y = (f(x) + W, f'(x) + Z)
\end{align*}
where $W,Z$ are mutually independent $\Normal(0,\sigma^2)$ random variables.

\begin{theorem}\label{thm:poly_arbitrary} On this problem class, for any $\eps \le 1$ and any $\delta \in (0,1/2)$ we have
\begin{align}\label{eq:rpower_arbitrary}
K^{(2)}_\Pro(\eps) = \Omega \left( \frac{\sigma^2}{ 4^m m^4 \eps^{1/m}} \right).
\end{align}
\end{theorem}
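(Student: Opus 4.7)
The plan is to specialize the two-function information-theoretic scheme from Section~\ref{sec:HTF} to this one-dimensional polynomial setting. First I would apply Proposition~\ref{prop:prob_det_bounds} with $\delta = 1/3$ to reduce the in-expectation complexity $K^{(2)}_\Pro(\eps)$ to the high-probability complexity $K^{(2)}_\Pro(3\eps,1/3)$; this lets me work with any $\Alg\in\Algs_T(\Pro)$ satisfying $\sup_\theta \Pr(\err^2_\Alg(T,f_\theta)\ge 3\eps) \le 1/3$.

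I would then take a two-function packing $\Fun' = \{f_{\theta_0}, f_{\theta_1}\}$ with $\theta_0 = -(3\eps)^{1/(2m)}$ and $\theta_1 = +(3\eps)^{1/(2m)}$, both of which lie in $\cX=[-1,1]$ (after trivial adjustment of the leading constant to handle $\eps$ near $1$). Declaring $\disc(f_{\theta_0}, f_{\theta_1}) \deq 2(3\eps)^{1/2}$, the separation condition \eqref{eq:pseudometric} holds because $f^*_{\theta_i} = 0$ and $f_{\theta_0}(x) < (3\eps)^{1/2}$ forces $|x-\theta_0| < (3\eps)^{1/(2m)}$, so $|x-\theta_1| > (3\eps)^{1/(2m)}$ by the triangle inequality, whence $f_{\theta_1}(x) > (3\eps)^{1/2}$. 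Lemma~\ref{lm:fano} in the binary form then gives
\begin{equation*}
I(M;\wh{M}_T) \;\ge\; \log 2 - h_2(1/3) \;>\; 0.
\end{equation*}

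For the matching upper bound on $I(M;\wh{M}_T)$ I would invoke the FOG case of Lemma~\ref{lm:1st_order_IR_bound}, which requires controlling $\sup_{x\in[-1,1]}\bigl\{(f_{\theta_0}(x)-f_{\theta_1}(x))^2 + (f'_{\theta_0}(x)-f'_{\theta_1}(x))^2\bigr\}$. Writing $u = x-\theta_0$ and $v = x-\theta_1$, the factorization $u^m - v^m = (u-v)\sum_{k=0}^{m-1} u^k v^{m-1-k}$ combined with $|u|,|v|\le 2$ and $|u-v|=2(3\eps)^{1/(2m)}$ yields $|f_{\theta_0}(x) - f_{\theta_1}(x)| \le m\cdot 2^m (3\eps)^{1/(2m)}$; the same trick applied to $u^{m-1} - v^{m-1}$ and scaled by $m$ yields $|f'_{\theta_0}(x) - f'_{\theta_1}(x)| \le m(m-1)\cdot 2^{m-1}(3\eps)^{1/(2m)}$. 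Squaring, summing, and absorbing constants gives a pointwise bound of order $m^4 4^m \eps^{1/m}$, whence Lemma~\ref{lm:1st_order_IR_bound} supplies $I(M;\wh{M}_T) = O\bigl(T\, m^4 4^m \eps^{1/m}/\sigma^2\bigr)$. Combining this with the Fano lower bound and rearranging yields $T = \Omega\bigl(\sigma^2/(4^m m^4 \eps^{1/m})\bigr)$, as required.

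The only real obstacle is the polynomial bookkeeping that produces the $m^4\cdot 4^m$ factor in the correct form; the structural skeleton (a binary packing plus the Gaussian IR bound) is entirely in the spirit of the Lipschitz and strongly convex lower bounds already established. One point worth emphasizing is that $\disc$ need not be a genuine metric --- requirement \eqref{eq:pseudometric} is local to the chosen packing, so declaring $\disc(f_{\theta_0}, f_{\theta_1})$ by fiat on this two-element set is legitimate and avoids any need to compare with a third function.
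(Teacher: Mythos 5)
Your proof is correct and follows essentially the same strategy as the paper: a two-function packing $\{f_{\theta_0}, f_{\theta_1}\}$ with $\theta_1 - \theta_0 \asymp \eps^{1/(2m)}$, Fano via Lemma~\ref{lm:fano}, and the FOG case of the IR bound (Lemma~\ref{lm:1st_order_IR_bound}) with a $O(m^4 4^m \eps^{1/m})$ bound on the pointwise divergence. The only differences are cosmetic: you verify the separation condition~\eqref{eq:pseudometric} by the triangle inequality on $|x-\theta_i|$ rather than the paper's convexity inequality $(\theta-\theta')^m \le 2^{m-1}[(x-\theta)^m+(x-\theta')^m]$, and you bound $|u^m-v^m|$ via the algebraic factorization rather than the mean value theorem --- both yield identical constants; your explicit invocation of Proposition~\ref{prop:prob_det_bounds} with $\delta=1/3$ to pass from $K^{(2)}_\Pro(\eps)$ to the high-probability complexity is in fact a bit more careful than the paper, which elides that step.
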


\begin{proof} Let us define
\begin{align}\label{eq:disc_rpower}
d(f_\theta,f_{\theta'}) \deq 2^{1-m} (\theta - \theta')^m, \qquad \forall \theta, \theta' \in \cX.
\end{align}
By convexity of $x \mapsto x^m$ (recall that $m$ is even), we have for any $x,\theta,\theta' \in \cX$
\begin{align*}
\left( \theta - \theta' \right)^m &= 2^m \left( \frac{\theta - x}{2} + \frac{ x - \theta'}{2}\right)^m \\
&\le 2^{m-1} \left[ (x-\theta)^m + (x-\theta')^m\right].
\end{align*}
Hence, the $d(\cdot,\cdot)$ defined in \eqref{eq:disc_rpower} satisfies \eqref{eq:pseudometric}. In particular, if we fix the functions
\begin{align*}
f_0(x) = (x-\eps^{1/2m})^m \,\, \text{and} \,\, f_1(x) = (x + \eps^{1/2m})^m
\end{align*}
then $d(f_0,f_1) = 2\eps^{1/2}$. Let $M$ have uniform distribution on $\{0,1\}$. Consider now any algorithm $\Alg \in \Algs_T(\Pro)$ that attains $\err^2_\Alg(T,f_\theta) < \eps$ with probability at least $1-\delta$ for every $\theta \in \cX$. Applying Lemma~\ref{lm:fano}, we obtain
\begin{align}\label{eq:rpower_lower}
I(M; \wh{M}_T) \ge \log 2 - h_2(\delta).
\end{align}
On the other hand, from Lemma~\ref{lm:1st_order_IR_bound} we have
\begin{align*}
 I(M; \wh{M}_T) \le \frac{1}{2\sigma^2} \sup_{x \in \cX} \Big\{ [f_0(x) - f_1(x)]^2  + [f'_0(x) - f'_1(x)]^2 \Big\}.
\end{align*}
From convexity, 
\begin{align*}
[f_0(x) - f_1(x)]^2 &= [ (x - \eps^{1/2m})^m - (x+\eps^{1/2m})^m ]^2 \\
&\le m^2 4^m \eps^{1/m}.
\end{align*}
Likewise, applying the mean-value theorem to the function $x \mapsto x^{m-1}$, we get
\begin{align*}
	& [f'_0(x) - f'_1(x)]^2 \\
	&= m^2 \left[ (x-\eps^{1/2m})^{m-1} - (x+\eps^{1/2m})^{m-1}\right]^2\\
	& \le [m(m-1)]^2 2^{2(m-1)} \eps^{1/m}.
\end{align*}
Thus, we obtain the upper bound
\begin{align}\label{eq:rpower_upper}
I(M; \wh{M}_T) = O \left( \frac{T 4^m m^4 \eps^{1/m}}{\sigma^2}\right).
\end{align}
Combining \eqref{eq:rpower_lower} and \eqref{eq:rpower_upper}, we obtain \eqref{eq:rpower_arbitrary}.
\end{proof}

\begin{theorem}\label{thm:poly_anytime} Consider the same problem class. Then $\gamma^{(2)}_\Pro \le \frac{m}{m-1}$.
\end{theorem}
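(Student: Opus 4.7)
The plan follows the template of Theorem~\ref{thm:anytime_SC}, but because the gradient and function value scale with different powers of $x - \theta$, a moment-comparison argument is needed to pin down the exponent $m/(m-1)$.

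Suppose $\Alg \in \Algs_\infty(\Pro)$ achieves $\eps_t \deq \Err^{(2)}_\Alg(t,\Fun_m) = O(t^{-\gamma})$; the goal is to show $\gamma \le m/(m-1)$. By Markov's inequality, $\sup_f \Pr(\err^2_\Alg(T,f) \ge 3\eps_T) \le 1/3$ for every $T$. Using the distance $d(f_\theta,f_{\theta'}) = 2^{1-m}(\theta-\theta')^m$ from the proof of Theorem~\ref{thm:poly_arbitrary}, which satisfies \eqref{eq:pseudometric}, I pick a maximal $2 \cdot 3^{1/(2m)}\eps_T^{1/(2m)}$-packing $\{\theta_0,\ldots,\theta_{N-1}\}$ of $[-1,1]$. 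Volume counting gives $N = \Omega(\eps_T^{-1/(2m)})$, and the associated functions $f_i(x)=(x-\theta_i)^m \in \Fun_m$ satisfy $d(f_i,f_j) \ge 2(3\eps_T)^{1/2}$. Then Lemma~\ref{lm:fano} (with $\delta=1/3$) yields
\begin{align*}
I(M;\wh{M}_T) \ge \tfrac{2}{3}\log N - \log 2 \ge \tfrac{1}{3m}\log(1/\eps_T) - C_0.
\end{align*}

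For the matching upper bound I use the LF strategy. Since $f_i^* = 0$ is attained at $\theta_i$ and $f_i'(\theta_i)=0$, the oracle output at the minimizer is pure Gaussian noise $Q^* = \Normal(0,\sigma^2)\otimes\Normal(0,\sigma^2)$. The divergence between the actual oracle law and $Q^*$ is
\begin{align*}
D(\Prob_{Y_t|M=i,X_t=x}\|\ProbQ^*) = \frac{(x-\theta_i)^{2m} + m^2(x-\theta_i)^{2(m-1)}}{2\sigma^2},
\end{align*}
so \eqref{eq:LF_bound} gives
\begin{align*}
I(M;\wh{M}_T) \le \frac{1}{2\sigma^2}\sum_{t=1}^T \E\!\left[(X_t-\theta_M)^{2m} + m^2 (X_t-\theta_M)^{2(m-1)}\right].
\end{align*}
Because $X_t$ is the candidate minimizer after $t-1$ queries and $M$ is uniform on $\Fun' \subset \Fun_m$, the first summand is bounded by $\eps_{t-1}$. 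For the cross-term, apply Jensen's inequality to the concave map $u\mapsto u^{(m-1)/m}$ with $Z=(X_t-\theta_M)^2\ge 0$ to obtain $\E Z^{m-1} \le (\E Z^m)^{(m-1)/m} \le \eps_{t-1}^{(m-1)/m}$. Since $\eps_t \le \eps_t^{(m-1)/m}$ eventually, we arrive at
\begin{align*}
\tfrac{1}{3m}\log(1/\eps_T) - C_0 \le \frac{C_m}{\sigma^2}\sum_{t=0}^{T-1}\eps_t^{(m-1)/m}
\end{align*}
for some constant $C_m$ depending only on $m$. If $\gamma > m/(m-1)$, then $\gamma(m-1)/m > 1$, the right-hand series converges, while the left-hand side grows like $\gamma \log T$, a contradiction. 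Hence $\gamma \le m/(m-1)$.

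The main obstacle is Step 3 combined with the moment comparison in Step 4: the gradient component contributes a term of order $(x-\theta)^{2(m-1)}$ which, in the low-error regime, dominates the function-value contribution $(x-\theta)^{2m}$. Controlling this term via Jensen's inequality is precisely what produces the exponent $(m-1)/m$ in the recursive bound on $\eps_t$ --- and hence the upper bound $m/(m-1)$ on the anytime exponent, which would coincide with $1$ (the strongly convex case) when $m=2$ but is strictly larger for higher degrees, reflecting the weaker curvature of these polynomials near the minimum.
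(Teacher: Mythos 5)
Your proposal is correct and mirrors the paper's argument step for step: same maximal packing scaled by $\eps_T^{1/(2m)}$, same Fano lower bound, same LF bound with $Q^*=\Normal(0,\sigma^2 I_2)$, and the same Jensen-based moment comparison converting $\E(X_t-\theta_M)^{2(m-1)}$ into $\eps_{t-1}^{(m-1)/m}$. The only deviation is the tail step: where the paper invokes its Lemma~\ref{lm:fun_rec} on functional recurrences to extract an infinite subsequence with $\eps_{t_j}\gtrsim t_j^{-m/(m-1)}$, you close the argument more directly by observing that $\gamma>m/(m-1)$ would make $\sum_t\eps_t^{(m-1)/m}$ convergent while $\log(1/\eps_T)\to\infty$, a shortcut that is valid and slightly cleaner here.
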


\begin{proof} As before, consider any algorithm $\Alg \in \Algs_\infty(\Pro)$ whose worst-case errors $\eps_t \deq \Err^{(2)}_\Alg(t,\Fun)$  satisfy
\begin{align*}
\limsup_{t \to \infty} t^\gamma \eps_t < \infty
\end{align*}
for some $\gamma \ge 0$. Applying the same argument based on Markov's inequality as in the proof of Theorem~\ref{thm:anytime_SC}, we see that, for any $T$, $\Alg$ attains $\err^2_\Alg(T,f_\theta) < 3\eps_T$ with probability at least $2/3$ on every $f_\theta \in \Fun$. Given $T$, let $\Lambda_T = \{\theta_0,\ldots,\theta_{N-1}\}$ denote the largest finite subset of $\cX = [-1,1]$, such that
\begin{align*}
| \theta_i - \theta_{j} | \ge 2(3\eps_T)^{1/2m}, \qquad \forall i \neq j.
\end{align*} 
A simple counting argument shows that $N \ge \left(\frac{1}{3\eps_T}\right)^{1/2m}$.
Moreover, the functions $f_i(x) \deq (x-\theta_i)^m,  i=0,\ldots,N-1$, satisfy $d(f_i,f_j) \ge 2\sqrt{3\eps_T}$ for $i\neq j$, where $d(\cdot,\cdot)$ is defined in \eqref{eq:disc_rpower}.

By Lemma~\ref{lm:fano},
\begin{align*}
I(M; \wh{M}_T) \ge  \frac{1}{3m} \log \left(\frac{1}{\eps_T}\right) + c_m,
\end{align*}
where $c_m = \frac{1}{3}\log \left( \frac{1}{3^{1/m} 8}\right)$. We will now combine this lower bound with an appropriate LF bound. Let $Q^*$ denote the bivariate normal distribution $\Normal(0, \sigma^2 I_2)$. Then, for every $i = 0,\ldots,N-1$ we have
\begin{align*}
Y^{(i)} = (f_i(\theta_i) + W, f'_i(\theta_i) + Z) = (W,Z) \sim Q^*.
\end{align*}
Hence, applying the LF bound with $\ProbQ^*_{Y_t} = Q^*$  as in the proof of Lemma~\ref{lm:LF_SC}, we obtain
\begin{align}
& I(M; \wh{M}_T) \nonumber\\
&\le \frac{1}{2\sigma^2} \sum^T_{t=1} \E \left\{ f^2_M(X_t) + [f'_M(X_t)]^2\right\}  \nonumber\\
&= \frac{1}{2\sigma^2} \sum^T_{t=1} \E \left\{ (X_t - \theta_M)^{2m} + m^2 (X_t - \theta_M)^{2(m-1)} \right\} \nonumber\\
&= \frac{1}{2\sigma^2} \sum^T_{t=1} \E \Big\{ [f_M(X_t) - f^*_M]^2 \nonumber\\
& \qquad \qquad + m^2 \left[ f_M(X_t) - f^*_M \right]^{\frac{2(m-1)}{m}} \Big\} \nonumber\\
&\le \frac{1}{2\sigma^2} \sum^T_{t=1} \Big\{\E [f_M(X_t) - f^*_M]^2 \nonumber\\
& \qquad \qquad + m^2 \left( \E[f_M(X_t)-f^*_M]^2\right)^{\frac{m-1}{m}} \Big\} \label{eq:LF_IR_comp_1}\\
&\le \frac{m^2 + 1}{2\sigma^2} \sum^T_{t=1} \left( \E[f_M(X_t)-f^*_M]^2\right)^{\frac{m-1}{m}} \nonumber\\
&= \frac{m^2+1}{2\sigma^2}\sum^T_{t=1} \left( \E \err^2_\Alg(t,f_M)\right)^{\frac{m-1}{m}} \nonumber\\
&\le \frac{m^2+1}{2\sigma^2}\sum^T_{t=1} \eps^{\frac{m-1}{m}}_t, \nonumber
\end{align}
where in \eqref{eq:LF_IR_comp_1} we have used the concavity of the function $u \mapsto u^{(m-1)/m}$. Therefore, we conclude that the sequence $\{\eps_t\}$ must satisfy
\begin{align*}
	\frac{2\sigma^2}{3(m^3 + m^2)}\log \left(\frac{1}{\eps_T}\right) + \frac{2\sigma^2 c_m}{m^2 + 1} \le \sum^T_{t=1}\eps^{\frac{m-1}{m}}_t
\end{align*}
for all sufficiently large $T$. Applying Lemma~\ref{lm:fun_rec}, we conclude that there exists an infinite subsequence of times $t_1 < t_2 < \ldots$, such that $\eps_{t_j} \ge ct^{-m/(m-1)}_j$ for some constant $c > 0$. Since $\eps_t = O(t^{-\gamma})$ by hypothesis, we must have $\gamma \le \frac{m}{m-1}$.
\end{proof}

For $m=2$, the two results indicate the same order of complexity, $T \succeq \eps^{-1/2}$; however, for $m=4$ and larger, the bounds differ, giving $T \succeq \eps^{-1/m}$ for arbitrary algorithms and $T \succeq \eps^{(1-m)/m}$ for anytime algorithms, which is larger. We conclude that, in general, the LF bounding technique leads to tighter bounds for optimization algorithms which actually converge monotonically to the optimal solution.

\subsection{Active learning}
\label{ssec:active}

Our technique for analyzing anytime optimization algorithms  can also be used to give a particularly simple derivation of the minimax lower bound for active learning of a threshold function on the unit interval \cite{castro2008minimax}. In general, active learning is more difficult than (convex) optimization. However, for the case below, we can apply the tools developed in this paper. The reason for including this example is twofold: first, to show that problems beyond convex optimization can be attacked with our information-theoretic method, and second to exhibit a problem with a noise model more complicated than those encountered so far in the paper. 

The active learning problem is stated as follows. We have a pair $(X,Z)$ of jointly distributed random variables $X \in \cX = [0,1]$ and $Z \in \{0,1\}$, where the marginal distribution $P_X$ is uniform on $[0,1]$, while the conditional distribution $P_{Z|X}$ is unknown. We do, however, have some prior knowledge about $P_{Z|X}$. Define $\eta(x) \deq \E[Z|X=x]$. Then we assume the following:
\begin{itemize}
\item There exists some $\theta \in [0,1]$, such that $\eta(x) < 1/2$ for $x < \theta$ and $\eta(x) \ge 1/2$ otherwise. In other words, the {\em Bayes classifier} $G^*(x) \deq \ind{\eta(x) \ge 1/2}$ for this problem \cite{DGL_yellowbook} is of the form $G^*(x) = G_{\theta}(x) = \ind{x \ge \theta}$.
\item For some $0 < c < C < 1/2$ and $\kappa \in [1,\infty)$, we have
\begin{align}\label{eq:tsybakov}
c|x-\theta|^{\kappa-1} \le |\eta(x) - 1/2| \le C|x-\theta|^{\kappa-1},
\end{align}
where the first inequality (known as the {\em Tsybakov noise condition} \cite{Tsybakov_aggregation}) holds for all $x$ in a sufficiently small neighborhood of $\theta$.
\end{itemize}

Let $\Pi(\kappa,c,C)$ denote the class of all conditional probability distributions $P_{Z|X}$ satisfying these two conditions. We wish to determine the unknown {\em threshold} $\theta$ using an {\em active strategy}: at time $t$, we request a label $Z_t \in \{0,1\}$ at a point $X_t \in \cX$, chosen as a function of the history $(X^{t-1},Z^{t-1})$. Given our query $X_t$, the label $Z_t$ is generated at random according to $P_{Z|X}(\cdot|X_t)$. At time $t$, the candidate classifier is $G_{X_t}(x) = \ind{x \ge X_t}$. The performance of the strategy after $t$ time steps is measured by the {\em excess risk} relative to  $G^*$:
\begin{align}\label{eq:excess_risk}
R(G_{X_t}) - R(G^*) = \int_{[X_t,1] \triangle [\theta,1]} |2\eta(x)-1| dx,
\end{align}
where $\triangle$ denotes symmetric difference between sets. (The risk of a classifier $G : \cX \to \{0,1\}$ is defined as $R(G) \deq \Pr(G(X) \neq Z)$, and the Bayes risk is $R(G^*) \deq \inf_G R(G)$ \cite{DGL_yellowbook}.)

Castro and Nowak \cite{castro2008minimax} have shown that any active strategy will have excess risks of $\Omega(t^{-\kappa/(2\kappa-2)})$, and gave an explicit scheme that achieves the rate $O(t^{-\kappa/(2\kappa-2)})$. Their proof of the lower bound relies on an intricate construction of two distributions $P^{(1)}_{Z|X},P^{(2)}_{Z|X} \in \Pi(\kappa,c,C)$ that are close in a statistical sense, but far apart in the sense of their Bayes risks. We now show that the same lower bound can be derived using our machinery without any careful function tuning. To that end, we will cast this problem in the optimization setting, as alluded to in Example~\ref{ex:active}. Let $\cX$ and $\Fun$ be as described there, and associate to each $P_{Z|X} \in \Pi(\kappa,c,C)$ a noisy oracle with $\cY = \{-1,+1\}$ and $P(Y=1|f,x) = P(Y=1|\theta,x) = \eta(x)$. With this correspondence in place, we can now prove the following:

\begin{theorem} Let $\kappa \in (1,2]$. Suppose that there exists an active learning strategy satisfying
$$
\sup_{P_{Z|X} \in \Pi(\kappa,c,C)} \E[R(G_{X_t}) - R(G^*)] = O(t^{-\gamma})
$$
for some $\gamma > 0$. Then $\gamma \le \kappa/(2\kappa -2)$.  Thus, $O(t^{-\kappa/(2\kappa-2)})$ is the optimal decay rate for all active learning strategies whose excess risks decay as ${\rm Poly}(t^{-1})$. If $\kappa = 1$, then the excess risk is $\Omega(2^{-6C^2t})$.\footnote{The exponent in this lower bound is not tight, since there exists a specific strategy that achieves the excess risk of $O(2^{-c^2 t \log e/2 })$ when $\kappa =1$ \cite{castro2008minimax}.}
\end{theorem}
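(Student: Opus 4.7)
The plan is to mimic the anytime-algorithm template from the proofs of Theorems~\ref{thm:anytime_SC} and \ref{thm:poly_anytime}, replacing the convex objective by the excess risk and choosing the Lyapunov reference so that the Tsybakov noise condition \eqref{eq:tsybakov} kicks in. The very first move is translation: plugging the two-sided bound \eqref{eq:tsybakov} into the formula \eqref{eq:excess_risk} and integrating gives constants $0<c_1\le c_2$ (depending only on $c,C,\kappa$) such that, for every $X$ in a fixed small neighborhood of $\theta$,
\begin{align*}
c_1|X-\theta|^\kappa \;\le\; R(G_X)-R(G^*) \;\le\; c_2|X-\theta|^\kappa.
\end{align*}
In particular, $\disc(f_\theta,f_{\theta'})\deq c_1|\theta-\theta'|^\kappa$ satisfies \eqref{eq:pseudometric} on any family of thresholds at pairwise distance $\ge 2(2\eps/c_1)^{1/\kappa}$. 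The Markov-inequality argument used in the proof of Theorem~\ref{thm:anytime_SC} then shows that any strategy with $\sup_\theta\E[R(G_{X_T})-R(G^*)]\le\eps_T$ satisfies $\Pr(R(G_{X_T})-R(G^*)\ge 3\eps_T)\le 1/3$ for every $\theta$.

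For $\kappa\in(1,2]$ the next step is to produce an LF bound with the Lyapunov reference $Q^*=\text{Ber}(1/2)$. The upper half of \eqref{eq:tsybakov} forces $|\eta(x)-1/2|\le C|x-\theta|^{\kappa-1}$, and the elementary inequality $D(\text{Ber}(p)\|\text{Ber}(1/2))\le(2p-1)^2/[1-(2p-1)^2]$ combined with $C<1/2$ yields $D(\Prob_{Y_t|M,X_t}\|Q^*)\le C_3|X_t-\theta_M|^{2(\kappa-1)}$. Applying Lemma~\ref{lm:refined_info_bounds} with $\ProbQ^*_{Y_t}=Q^*$ and then Jensen's inequality on the concave map $u\mapsto u^{(2\kappa-2)/\kappa}$ (concave because $2(\kappa-1)/\kappa\in(0,1]$), exactly as in \eqref{eq:LF_IR_comp_1}, gives
\begin{align*}
I(M;\wh{M}_T) \;\le\; C_4\sum_{t=1}^T \eps_t^{(2\kappa-2)/\kappa}.
\end{align*}
For the matching lower bound, build a maximal $2(3\eps_T/c_1)^{1/\kappa}$-packing $\{\theta_0,\ldots,\theta_{N-1}\}\subset[0,1]$ of size $N\gtrsim\eps_T^{-1/\kappa}$. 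With $M$ uniform over this packing and the Markov tolerance from Step~1, Lemma~\ref{lm:fano} delivers $I(M;\wh{M}_T)\ge c_5\log(1/\eps_T)-O(1)$. Feeding the two bounds into Lemma~\ref{lm:fun_rec} (the functional-recursion tool used in Theorems~\ref{thm:anytime_SC} and \ref{thm:poly_anytime}) produces an infinite subsequence $t_j$ with $\eps_{t_j}\gtrsim t_j^{-\kappa/(2\kappa-2)}$, forcing $\gamma\le\kappa/(2\kappa-2)$.

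For $\kappa=1$ the Lyapunov trick fails because $\eta$ now jumps at $\theta$, so I fall back on the IR bound. Since every relevant $\eta(x)$ lies in $[1/2-C,1/2+C]$ with $C<1/2$, an elementary KL-divergence estimate between two such Bernoullis gives $\sup_{x,\theta,\theta'}D(P_{Y|\theta,x}\|P_{Y|\theta',x})\le 6C^2\log 2$. I then use the two-point packing $\theta_0=0,\theta_1=2\eps_T/c_1$; the binary form \eqref{eq:binary_fano} of Fano gives $\log 2-h_2(1/3)\le 6C^2 T\log 2$, i.e.\ $T\ge c_6\log(1/\eps_T)/C^2$, which after tuning $c_6$ rearranges to $\eps_T\gtrsim 2^{-6C^2 T}$.

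The main obstacle is bookkeeping on constants, especially matching the advertised coefficient $6C^2$ in the $\kappa=1$ exponent and propagating $c,C,\kappa$ through the Lyapunov bound. A subtler point is that $\E|X_t-\theta_M|^{2(\kappa-1)}$ in the LF bound involves the prior average over $M$ while $\eps_t$ is a worst-case over $\theta$; the Markov-plus-packing reduction of Step~1 aligns these two quantities, and Jensen handles the exponent, so the route is the same as in \eqref{eq:LF_IR_comp_1}. The rest is arithmetic.
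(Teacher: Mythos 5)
Your argument for $\kappa\in(1,2]$ follows the paper's route almost step for step: translate the excess risk into $|X_t-\theta|^\kappa$ via the lower half of \eqref{eq:tsybakov}, apply Markov to turn expected-risk decay into an in-probability guarantee, pack $[0,1]$ at scale $\propto\eps_T^{1/\kappa}$, use Fano for the $\Omega(\log(1/\eps_T))$ lower bound, and use the LF bound with $Q^*=\mathrm{Ber}(1/2)$ plus concavity of $u\mapsto u^{(2\kappa-2)/\kappa}$ to get $\sum_t\eps_t^{(2\kappa-2)/\kappa}$ on the other side, then invoke Lemma~\ref{lm:fun_rec}. The constants differ slightly (the paper uses $d(p\|1/2)\le 4(p-1/2)^2$ rather than the $\chi^2$-type bound, and works in the variable $\eps_t\propto r_t^{1/\kappa}$ rather than $r_t$ itself), and your assertion that $\disc(f_\theta,f_{\theta'})=c_1|\theta-\theta'|^\kappa$ ``satisfies \eqref{eq:pseudometric}'' is not literally true for $\kappa>1$ (that power is not subadditive), but the extra separation factor you impose makes the distinguishability argument go through, so this is a wording issue rather than a mathematical one.

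The $\kappa=1$ case, however, has a genuine gap. With a two-point packing $\{\theta_0,\theta_1\}$, the binary Fano bound \eqref{eq:binary_fano} gives only the \emph{constant} lower bound $\log 2-h_2(1/3)$ on $I(M;\wh{M}_T)$; combining that with $I(M;\wh{M}_T)\le 6C^2 T\log 2$ yields $T\gtrsim 1/C^2$, a vacuous statement with no $\eps_T$-dependence. The claimed ``rearrangement'' to $T\ge c_6\log(1/\eps_T)/C^2$ does not follow --- nothing in a two-hypothesis test forces the lower bound to grow as $\log(1/\eps_T)$. To get the exponential rate you must use the \emph{large} packing of size $N\gtrsim 1/\eps_T$, exactly as in the $\kappa>1$ case, so that Lemma~\ref{lm:fano} delivers $I(M;\wh{M}_T)\gtrsim\log(1/\eps_T)$. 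Also, your reason for switching to an IR bound at $\kappa=1$ --- that ``the Lyapunov trick fails because $\eta$ jumps'' --- is incorrect: the LF bound \eqref{eq:info_bound_2} specializes to $I(M,X_t;Y_t)\le 4C^2$ for $\kappa=1$, a constant per step, which is precisely what is needed. Pairing that constant per-step bound with the $\log(1/\eps_T)$-sized Fano lower bound gives $\frac{1}{6C^2}\log(1/\eps_T)-O(1)\le T$, hence $\eps_T=\Omega(2^{-6C^2T})$.
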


\begin{proof} For each $\theta \in [0,1]$, find some $P^{\theta}_{Z|X} \in \Pi(\kappa,c,C)$, such that the inequalities in (\ref{eq:tsybakov}) hold for {\em all} values of $x \in \cX$. Given a candidate classifier $G_{X_t}$, consider the excess risk $R(G_{X_t}) - R(G_{\theta})$. Assume for now that $\theta > X_t$. Then from (\ref{eq:excess_risk}) and (\ref{eq:tsybakov}) we get
\begin{align*}
R(G_{X_t}) - R(G_\theta) &\ge 2c \int^{\theta}_{X_t} (\theta-x)^{\kappa - 1} dx = \frac{2c}{\kappa} (\theta - X_t)^\kappa.
\end{align*}
The case $X_t < \theta$ is similar. Thus, the expected excess risk of any strategy at time $t$ can be bounded as
\begin{align}\label{eq:risk_bound}
\E[R(G_{X_t}) - R(G_\theta)] \ge (2c/\kappa) \E |X_t - \theta|^\kappa.
\end{align}
Now suppose we have a learning strategy whose worst-case excess risks decay at a prescribed rate $\{r_t\}$:
\begin{align*}
\sup_{P_{Z|X} \in \Pi(\kappa,c,C)} \E[R(G_{X_t}) - R(G^*)] = r_t, \,\, t=1,2,\ldots
\end{align*}
Then from this and (\ref{eq:risk_bound}) we have that, for every $P^\theta_{Z|X}$, this strategy satisfies
\begin{align}\label{eq:dist_bound}
 \E |X_t - \theta|^\kappa \le \kappa r_t/2c, \qquad t=1,2,\ldots
\end{align}
Let $\eps_t \deq (3\kappa r_t/2c)^{1/\kappa}$. Then using (\ref{eq:dist_bound}) and Markov's inequality, we see that for this strategy we must have
\begin{align*}
\sup_{\theta \in [0,1]} \Pr \big( |X_t - \theta| \ge \eps_t \big| \theta \big) \le 1/3, \,\, \forall t=1,2,\ldots.
\end{align*}
In other words, this active learning strategy gives rise to an {\em optimization algorithm} $\Alg$ for the problem class $\Pro=(\cX,\Fun,\Ora)$, where $\Ora$ is specified by $P(Y=1|\theta,x) = \E_{\theta}[Z|X=x]$, and there exists some $T_0 \ge 1$ such that $\Pr(\err_\Alg(t,f) \ge \eps_t) \le 1/3$ for all $t \ge T_0$.

Now for each $T \ge T_0$ let $\Lambda_T = \{\theta_0,\ldots,\theta_{N-1}\} $ be a maximal $2\eps_T$-packing of $[0,1]$. Simple counting shows that $N \ge 1/2\eps_T$. Consider the set $\Fun' = \{f_m = f_{\theta_m} : \theta \in \Lambda_T \} \subset \Fun$, and denote $\eta_m(x) \deq \E_{\theta_m}[Z|X=x]$. Then, in our usual notation, we have from Lemma~\ref{lm:fano} that
\begin{align}\label{eq:info_lower}
I(M; \wh{M}_T) \ge \frac{2}{3}\log \left(\frac{1}{\eps_T}\right) - \frac{5}{3}\log 2.
\end{align}
Next we apply Lemma~\ref{lm:info_bounds}. To that end, let us inspect the terms $I(M; Y_t | X^t, Y^{t-1})$:
\begin{align*}
& I(M; Y_t | X^t, Y^{t-1}) \nonumber\\
&= I(M,X_t; Y_t |X^{t-1},Y^{t-1}) - I(X_t; Y_t | X^{t-1},Y^{t-1}) \\
&\le I(M,X_t; Y_t |X^{t-1},Y^{t-1}) \le I(M,X_t; Y_t),
\end{align*}
where the first step uses the chain rule, the second is because mutual information is nonnegative, and the third is because $(X^{t-1},Y^{t-1}) \to (M,X_t) \to Y_t$ is a Markov chain. Now we use the LF bound with $Q^*$ the uniform distribution on $\{-1,+1\}$. Then
\begin{align}
I(M,X_t; Y_t) &\le D(\Prob_{Y_t|M,X_t} \| \ProbQ^*_{Y_t} | \Prob_{M,X_t}) \nonumber \\
&\le 4\E_{M,X_t} \left\{(\Prob(Y_t=1|M,X_t)-1/2)^2\right\}  \nonumber \\
&= 4\E_{M,X_t} \left\{ |\eta_M(X_t) - 1/2|^2 \right\} \nonumber \\
&\le 4C^2 \E_{M,X_t} |X_t - \theta_M|^{2(\kappa-1)},\label{eq:info_bound_2}
\end{align}
where in the second step we used the fact that
\begin{align*}
d(p\|1/2) \deq p\log 2p + (1-p)\log [2(1-p)] \le 4(p-1/2)^2
\end{align*}
for all $p \in [0,1]$, and in the last step we used (\ref{eq:tsybakov}). Suppose first that $\kappa > 1$. Because $\kappa \le 2$, the function $x \mapsto x^{(2\kappa-2)/\kappa}$ is concave, and we can write
\begin{align*}
\E |X_t - \theta_M|^{2(\kappa-1)} \le \left(\E | X_t - \theta_M|^\kappa \right)^{2(\kappa-1)/\kappa}.
\end{align*}
Using this in conjunction with (\ref{eq:dist_bound}) and Lemma~\ref{lm:info_bounds}, we can bound the mutual information $I(M; \wh{M}_T)$ as
\begin{align}
I(M; \wh{M}_T) \le 4C^2 \sum^T_{t=1}\left(\frac{\kappa r_t}{2c}\right)^{\frac{2(\kappa-1)}{\kappa}} = \frac{4C^2}{3^{\frac{2\kappa-2}{\kappa}}} \sum^T_{t=1} \eps_t^{2(\kappa-1)}.\label{eq:info_upper}
\end{align}
Combining (\ref{eq:info_lower}) and (\ref{eq:info_upper}), we have
\begin{align*}
\frac{3^{(\kappa-2)/\kappa}}{2C^2} \log \left(\frac{1}{\eps_T}\right) - \frac{5 \cdot 3^{(\kappa-2)/\kappa}}{4 C^2} \log 2 \le \sum^T_{t=1} \eps^{2(\kappa-1)}_t.
\end{align*}
An inequality like this must hold for all $T \ge T_0$. Lemma~\ref{lm:fun_rec} then states that there exists an infinite subsequence of times $1 \le t_1 < t_2 < \ldots$, such that $\eps_{t_j} = \Omega\left(t^{-1/(2\kappa-2)}_j\right)$, or, equivalently, that $r_{t_j} = \Omega\left(t^{-\kappa/(2\kappa-2)}_j\right)$. Since by hypothesis $r_t = O(t^{-\gamma})$, we must have $\gamma \le \kappa/(2\kappa-2)$.

When $\kappa=1$, from (\ref{eq:info_bound_2}) we have $I(M,X_t; Y_t) \le 4C^2$ for all $t$. This, together with (\ref{eq:info_lower}), gives
\begin{align*}
\frac{1}{6C^2} \log \left(\frac{1}{\eps_T}\right) - \frac{5}{12}\log 2 \le T, \qquad \forall T \ge T_0.
\end{align*}
which gives $\eps_T = \Omega(2^{-6C^2T})$ and $r_T = \Omega(2^{-6C^2T})$.
\end{proof}

\section{Concluding remarks}

Sequential optimization algorithms operating in the presence of uncertainty must be able to accumulate information in order to reduce uncertainty. As we have shown in this paper, there are fundamental limitations on the rate at which this uncertainty can be reduced, depending on the richness of the class of objective functions faced by the algorithm, the noisiness and the structure of the oracle that supplies information to the algorithm, and the manner in which the algorithm may approach the optimum (i.e.,~monotonically or not). In order to derive these fundamental limitations, we have developed a comprehensive information-theoretic machinery that makes use of the fact (which we have proved) that the problem of sequential optimization is, in a certain sense, at least as hard as hypothesis testing with feedback (or with controlled observations). This observation then leads to quantitative estimates that relate the minimum number of oracle queries needed to achieve a given level of accuracy to the overall reduction of uncertainty about the objective function being optimized. The latter is measured by the {\em mutual information} between the random choice of the objective and the history of algorithm's queries and oracle's responses. Carefully taking into account all the Markovian structures that are imposed by the sequential and the adaptive nature of the algorithm, we can obtain different upper bounds on this mutual information.

Using this machinery, we have derived tight lower bounds in several settings in optimization, both for arbitrary and for anytime optimization algorithms (in some cases improving upon existing results), and beyond, e.g.,~for experimental design and active learning. One promising direction for future work is to consider algorithms with {\em query costs}, i.e.,~when issuing each query incurs a cost that may depend on the query, and the goal is to balance the total cost of querying with the final optimization error. Recent work by Naghshvar and Javidi \cite{Tara} considers a hypothesis testing problem of this kind by relating it to optimal stopping for a Markov decision process, and the techniques developed in that work may be useful for deriving information-theoretic lower bounds for optimization problems with query costs.


\begin{appendices}
	\renewcommand{\theequation}{\Alph{section}.\arabic{equation}}
	\setcounter{lemma}{0}
	\setcounter{equation}{0}
	\renewcommand{\thedefinition}{\Alph{section}.\arabic{definition}}
	\renewcommand{\theproposition}{\Alph{section}.\arabic{proposition}}
	\setcounter{proposition}{0}
	\setcounter{definition}{0}

	\renewcommand{\thelemma}{\Alph{section}.\arabic{lemma}}
	
	\section{Finite-step vs.\ strong infinite-step algorithms}
	\label{app:finite_vs_infinite}
	
As we pointed out in Section~\ref{sec:IBC}, our definition of an infinite-step algorithm is somewhat restrictive, as it allows only the algorithms that use their most recently computed candidate minimizer as the next query. The following definition removes this restriction:
	
	\begin{definition}A {\em strong infinite-step algorithm} for a problem class $\Pro = (\cX,\Fun,\Ora)$ is a sequence of mappings $\tilde{\Alg} = \{\tilde{\Alg}_t : \cX^{t-1} \times \cY^{t-1} \to \cX \times \cX \}^{\infty}_{t=1}$. The set of all infinite-step algorithms for $\Pro$ will be denoted by $\tilde{\Algs}_\infty(\Pro)$.
	\end{definition}
	\noindent The interaction of any $\tilde{\Alg} \in \tilde{\Algs}_\infty(\Pro)$  with $\Ora$ is described recursively as follows:

	\begin{enumerate}

	\item At time $t=0$, a problem instance $f \in \Fun$ is selected by Nature and revealed to $\Ora$, but not to $\tilde{\Alg}$.

	\item At each time $t = 1,2,\ldots$:
	\begin{itemize}
	\item  $\tilde{\Alg}$ computes
	$$
	(X_t,\wh{X}_t) = \tilde{\Alg}_t(X^{t-1},Y^{t-1}),
	$$
	where $X_\tau$ and $\wh{X}_\tau$ are, respectively, the query and the candidate minimizer at time $\tau$.
	\item $\Ora$ responds with a random element $Y_t \in \cY$ according to $P( dY_t | f,X_t)$.
	\end{itemize}
	\end{enumerate}
In other words, both $\wh{X}_t$, the candidate minimizer at time $t$, and $X_t$, the query at time $t$, are computed on the basis of all currently available data, i.e., $(X^{t-1},Y^{t-1})$, yet the algorithm has more freedom, since at time $t+1$ it can query the oracle with an arbitrary point, rather than just $\wh{X}_t$. The error of $\tilde{\Alg}$ on $f \in \Fun$ at time $t$ is given by
$$
\err_{\tilde{\Alg}}(t,f) = f(\wh{X}_t) - \inf_{x \in \cX}f(x) = f(\wh{X}_{t}) - f^*.
$$

	\begin{definition} Fix a problem class $\Pro = (\cX,\Fun,\Ora)$. For any $r \ge 1$, $\eps > 0$, and $\delta \in (0,1)$, we define the $r$th-order infinite-step {\em $(\eps,\delta)$-complexity} and the {\em $\eps$-complexity} of $\Pro$, respectively, as
	\begin{align*}
	& K^{(r),\infty}_\Pro(\eps,\delta) \deq \inf \Big\{ T \ge 1: \exists \tilde{\Alg} \in \tilde{\Algs}_\infty(\Pro) \nonumber\\
	& \qquad \qquad \emph{\,\,s.t.\,\,}\sup_{f \in \Fun} \Pr \big( \err^r_{\tilde{\Alg}}(t,f) \ge \eps \big) \le \delta, \forall t > T \Big\} ; \\
	& K^{(r),\infty}_\Pro(\eps) \deq \inf \Big\{ T \ge 1: \exists {\tilde{\Alg}} \in \tilde{\Algs}_\infty(\Pro) \nonumber\\
	& \qquad \qquad \emph{\,\,s.t.\,\,}\sup_{f \in \Fun} \E \err^r_{\tilde{\Alg}}(t,f) < \eps, \forall t > T \Big\}.
	\end{align*}
	\end{definition}

	\noindent It turns out that these notions of complexity are equivalent to the ones introduced earlier:

	\begin{proposition}\label{prop:inf_and_finite_complexity} For any problem class $\Pro$ and all $r \ge 1$, $\eps > 0$, $\delta \in (0,1)$, we have
	\begin{align}
	K^{(r),\infty}_\Pro(\eps,\delta) &= K^{(r)}_\Pro(\eps,\delta) \label{eq:finite_vs_anytime} \\
	K^{(r),\infty}_\Pro(\eps) &= K^{(r)}_\Pro(\eps). \label{eq:finite_vs_anytime_2}
	\end{align}
	\end{proposition}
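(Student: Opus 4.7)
The plan is to establish both equalities via two matching inequalities each, constructing a strong infinite-step algorithm from a finite-step one and vice versa. The constructions are essentially trivial, so the work consists entirely in carefully matching definitions and verifying that the queries/outputs live on the correct time indices.

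First I would handle the inequality $K^{(r),\infty}_\Pro(\eps,\delta) \le K^{(r)}_\Pro(\eps,\delta)$. Given a finite $T$-step algorithm $\Alg = \{\Alg_t\}_{t=1}^{T+1} \in \Algs_T(\Pro)$ that achieves $\sup_{f\in\Fun}\Pr(\err^r_\Alg(T,f)\ge\eps)\le\delta$, I define a strong infinite-step algorithm $\tilde{\Alg}$ as follows: for $1\le t\le T$, let $\tilde{\Alg}_t(X^{t-1},Y^{t-1}) = (\Alg_t(X^{t-1},Y^{t-1}), x_0)$ where $x_0\in\cX$ is an arbitrary fixed point; for $t = T+1$, let $\tilde{\Alg}_{T+1}(X^T,Y^T) = (x_0, \Alg_{T+1}(X^T,Y^T))$; and for $t>T+1$, let $\tilde{\Alg}_t(X^{t-1},Y^{t-1}) = (x_0, \wh{X}_{T+1})$, where $\wh{X}_{T+1}$ is the candidate minimizer produced at step $T+1$ (which is $\sigma(X^T,Y^T)$-measurable and hence available). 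Then $\wh{X}_t = \wh{X}_{T+1} = \Alg_{T+1}(X^T,Y^T)$ for every $t\ge T+1$, so $\err_{\tilde{\Alg}}(t,f) = \err_\Alg(T,f)$ for all $t>T$ and every $f\in\Fun$, yielding the desired tail bound.

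For the reverse inequality $K^{(r)}_\Pro(\eps,\delta) \le K^{(r),\infty}_\Pro(\eps,\delta)$, suppose $\tilde{\Alg} \in \tilde{\Algs}_\infty(\Pro)$ witnesses the right-hand side with value $T$, so $\sup_f \Pr(\err^r_{\tilde{\Alg}}(t,f)\ge\eps)\le\delta$ for every $t>T$, in particular for $t=T+1$. Write $\tilde{\Alg}_t = (\tilde{\Alg}^{(1)}_t, \tilde{\Alg}^{(2)}_t)$, where the components produce the query $X_t$ and the candidate $\wh{X}_t$, respectively. Define a $T$-step algorithm $\Alg = \{\Alg_t\}_{t=1}^{T+1}$ by $\Alg_t = \tilde{\Alg}^{(1)}_t$ for $1\le t\le T$, and $\Alg_{T+1} = \tilde{\Alg}^{(2)}_{T+1}$. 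Both $\Alg$ and $\tilde{\Alg}$ generate the same query-response sequence $(X^T,Y^T)$, since the queries depend only on the past, and the output $X_{T+1}$ of $\Alg$ coincides by construction with $\wh{X}_{T+1}$. Hence $\err_\Alg(T,f) = \err_{\tilde{\Alg}}(T+1,f)$ for every $f$, and so $\sup_f \Pr(\err^r_\Alg(T,f)\ge\eps)\le\delta$, proving $K^{(r)}_\Pro(\eps,\delta) \le T$.

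The two constructions carry over verbatim to the expectation criterion, since they preserve the error random variable identically for all times $t\ge T+1$; this yields \eqref{eq:finite_vs_anytime_2}. I expect no substantive obstacle: the only place to be careful is in respecting the timing convention (the candidate for the $(\eps,\delta)$-complexity after $T$ queries is $\wh{X}_{T+1}$, matching the indexing $X_{T+1}=\Alg_{T+1}(X^T,Y^T)$), which is exactly why the inequality uses $t>T$ rather than $t\ge T$ in the infinite-step definition.
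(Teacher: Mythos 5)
Your proposal is correct and takes essentially the same route as the paper: both directions of each inequality are shown by transforming one type of algorithm into the other (truncating and freezing the output to pass from finite-step to strong infinite-step, and projecting onto the query/candidate coordinates to pass back), with the error random variable matched exactly at the relevant times. The only cosmetic difference is your choice of filler values (a fixed $x_0$ for the dummy queries/candidates, versus the paper reusing $\Alg_{T+1}(X^T,Y^T)$ in both slots for $t>T$), which is immaterial.
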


\begin{IEEEproof}	We only prove \eqref{eq:finite_vs_anytime}, since the proof of \eqref{eq:finite_vs_anytime_2} is similar. Likewise, we will only consider the $r=1$ case.

	First we prove that $K^\infty_\Pro(\eps,\delta) \le K_\Pro(\eps,\delta)$. We can assume that $K_\Pro(\eps,\delta) < \infty$, for otherwise the inequality holds {\em a fortiori}. Given $\eps$ and $\delta$, consider any $T$ for which there exists some $T$-step algorithm $\Alg \in \Algs_T(\Pro)$, such that
	\begin{align*}
	\sup_{f \in \Fun} \Pr\big( \err_\Alg(T,f) \ge \eps \big) \le \delta.
	\end{align*}
	Given $\Alg$, we can construct a strong infinite-step algorithm $\tilde{\Alg} \in \tilde{\Algs}_\infty(\Pro)$ as follows. Choose an arbitrary $T$-tuple $(\wh{x}_1,\ldots,\wh{x}_T) \in \cX^T$ and let
	\begin{align*}
	& \tilde{\Alg}_t(x^{t-1},y^{t-1}) \\
	&= \begin{cases} (\Alg_t(x^{t-1},y^{t-1}),\wh{x}_t),  & t = 1,\ldots,T; \\
	 (\Alg_{T+1}(x^{T},y^{T}),\Alg_{T+1}(x^T,y^T)), & t > T.
	\end{cases}
	\end{align*} 
Then it's clear that for any $t > T$
	\begin{align*}
&	\sup_{f \in \Fun} \Pr\big( \err_{\tilde{\Alg}}(t,f) \ge \eps \big)  \nonumber\\
&\qquad = \sup_{f \in \Fun} \Pr\big( f(\wh{X}_{t}) - f^* \ge \eps\big) \\
&\qquad = \sup_{f \in \Fun} \Pr\big( f(\Alg_{T+1}(X^T,Y^T)) - f^* \ge \eps \big) \\
&\qquad = \sup_{f \in \Fun} \Pr\big( \err_{\Alg}(T,f) \ge \eps \big) \\
&\qquad \le \delta.
	\end{align*}
	Hence, $K^\infty_\Pro(\eps,\delta) \le K_\Pro(\eps,\delta)$.

	Next, we prove $K_\Pro(\eps,\delta) \le K^\infty_\Pro(\eps,\delta)$. Again, we can assume that $K^\infty_\Pro(\eps,\delta) < \infty$. Consider an algorithm $\tilde{\Alg} \in \tilde{\Algs}_\infty(\Pro)$, such that
	\begin{align}\label{eq:infinite_assumption}
	\sup_{t > T} \sup_{f \in \Fun} \Pr \big( \err_{\tilde{\Alg}}(t,f) \ge \eps \big) \le \delta
	\end{align}
	for some $T$. Let $\Pi_1$ and $\Pi_2$ denote the two coordinate projection mappings from $\cX \times \cX$ onto $\cX$, i.e., $\Pi_1(x,x') = x$ and $\Pi_2(x,x') = x'$, and define $\Alg \in \Algs_{T}(\Pro)$ by setting
\begin{align*}
	\Alg_t = \begin{cases}
	\Pi_1 \circ \tilde{\Alg}_t, & t = 1,\ldots,T \\
	\Pi_2 \circ \tilde{\Alg}_{T+1}, & t = T+1.
\end{cases}
\end{align*}
Then from \eqref{eq:infinite_assumption}
	\begin{align*}
	& \sup_{f \in \Fun} \Pr \big( \err_\Alg(T,f) \ge \eps \big) \\
	& \qquad = \sup_{f \in \Fun} \Pr \big( f(X_{T+1}) - f^* \ge \eps\big) \\
	& \qquad = \sup_{f \in \Fun} \Pr \big( f(\Alg_{T+1}(X^T,Y^T)) - f^* \ge \eps\big) \\
	& \qquad = \sup_{f \in \Fun} \Pr \big(\err_{\tilde{\Alg}}(T+1,f) \ge \eps\big) \\
	& \qquad \le \delta,
	\end{align*}
	which implies $K_\Pro(\eps,\delta) \le K^\infty_\Pro(\eps,\delta)$.
\end{IEEEproof}

\section{Miscellaneous proofs}
\label{app:proofs}

\subsection{Proof of Proposition~\ref{prop:prob_det_bounds}}

Given $\eps$ and $\Pro$, consider any $T$ for which there exists some algorithm $\Alg \in \Algs_T(\Pro)$ that satisfies $\sup_{f \in \Fun} \E \err^r_\Alg(T,f) \le \eps$. Then Markov's inequality gives
\begin{align*}
\Pr \big( \err^r_\Alg(T,f) \ge \eps/\delta \big) \le \frac{\E \err^r_\Alg(T,f)}{\eps/\delta} \le \delta, \quad \forall f \in \Fun.
\end{align*}
Hence, $T \ge K^{(r)}_\Pro(\eps/\delta,\delta)$. Taking the infimum over all such $T$, we arrive at the proof.

\subsection{Proof of Lemma~\ref{lm:oblivious_info_bounds}}

First, we modify the construction of the probability space $(\Omega,\cB,\Prob)$ in Section~\ref{sec:HTF} by introducing the random variables $U^T \in \cU^T$ that describe the responses of the ``clean'' (deterministic) oracle $\psi : \Fun \times \cX \to \cU$ to the queries $X^T$. The relevant causal ordering is
\begin{align*}
	M,X_1,U_1,Y_1,\ldots,X_t,U_t,Y_t,\ldots,X_T,U_T,Y_T,X_{T+1},
\end{align*}
where, $\Prob$-almost surely, we have \eqref{eq:Nature_choice} and
\begin{align*}
	&\Prob(X_t \in A | M, X^{t-1},U^{t-1},Y^{t-1}) = \ind{\Alg_t(X^{t-1},Y^{t-1}) \in A}\\
	&\Prob(U_t \in C | M, X^t,U^{t-1},Y^{t-1})   = \ind{\psi(f_M,X_t) \in C} \\
	&\Prob(Y_t \in B | M, X^t, U^t, Y^{t-1}) = Q(B|U_t)
\end{align*}
for all $A \in \cB_\cX, B \in \cB_\cY, C \in \cB_\cU$. That is, $(M,U^{t-1}) \to (X^{t-1},Y^{t-1}) \to X_t$, $(X^{t-1},U^{t-1},Y^{t-1}) \to (M,X_t) \to U_t$, and $(M,X^t,U^{t-1},Y^{t-1}) \to U_t \to Y_t$ are Markov chains for each $t$. Then we can write
\begin{align*}
I(M; Y_t|X^t,Y^{t-1}) = I(M, U_t ; Y_t | X^t, Y^{t-1})
\end{align*}
because $U_t$ is completely determined by $M$ and $X_t$ via $U_t = \psi(f_M,X_t)$. Moreover,
\begin{align*}
& I(M, U_t ; Y_t | X^t, Y^{t-1}) \nonumber\\
& = I(U_t ; Y_t | X^t, Y^{t-1}) + I(M; Y_t | U_t, X^t, Y^{t-1}) \\
& = I(U_t; Y_t | X^t, Y^{t-1}),
\end{align*}
where the first step is by the chain rule and the second step is due to the fact that $M \to (U_t,X^t,Y^{t-1}) \to Y_t$ is a Markov chain. This follows by applying the weak union and the decomposition properties of conditional independence \cite[p.~11]{Pearl} to the Markov chain $(M,X^t,U^{t-1},Y^{t-1}) \to U_t \to Y_t$. By the same token, $(X^t,Y^{t-1}) \to U_t \to Y_t$ is also a Markov chain, so we have $I(U_t ; Y_t | X^t , Y^{t-1}) \le I(U_t; Y_t)$.

\subsection{Proof of Lemma~\ref{lm:refined_info_bounds}}

Let us fix some $t$ and consider the conditional mutual information term $I(M; Y_t |X^t,Y^{t-1})$ in the summation in Lemma~\ref{lm:info_bounds}:
\begin{align}
&I(M; Y_t | X^t,Y^{t-1}) \nonumber\\
&= D(\Prob_{Y_t|M,X^t,Y^{t-1}} \| \Prob_{Y_t|X^t,Y^{t-1}} | \Prob_{M,X^t,Y^{t-1}}) \label{eq:div_bound_0}\\
&= \E \left[ \log \frac{d\Prob_{Y_t|M,X^t,Y^{t-1}}}{d\Prob_{Y_t|X^t,Y^{t-1}}} \right] \nonumber \\
&= \E \left[ \log \frac{d\Prob_{Y_t|M,X^t,Y^{t-1}}}{d\ProbQ_{Y_t|X^t,Y^{t-1}}}\right] - \E \left[ \log \frac{d\Prob_{Y_t|X^t,Y^{t-1}}}{d\ProbQ_{Y_t|X^t,Y^{t-1}}}\right] \label{eq:div_bound_1} \\
&= D\big(\Prob_{Y_t|M,X^t,Y^{t-1}} \big\| \ProbQ_{Y_t | X^t,Y^{t-1}} \big| \Prob_{M,X^t,Y^{t-1}} \big) \nonumber\\
& \qquad \qquad - D\big( \Prob_{Y_t|X^t,Y^{t-1}} \big\| \ProbQ_{Y_t|X^t,Y^{t-1}} \big| \Prob_{X^t,Y^{t-1}} \big) \label{eq:div_bound_2} \\
& \le D\big(\Prob_{Y_t|M,X^t,Y^{t-1}} \big\| \ProbQ_{Y_t | X^t,Y^{t-1}} \big| \Prob_{M,X^t,Y^{t-1}} \big), \label{eq:div_bound_3}
\end{align}
where \eqref{eq:div_bound_0} follows from \eqref{eq:cond_MI}, \eqref{eq:div_bound_1} and \eqref{eq:div_bound_2} are justified by virtue of \eqref{eq:AC_condition}, while \eqref{eq:div_bound_3} follows from the fact that the divergence is nonnegative.

\subsection{Proof of Lemma~\ref{lm:1st_order_IR_bound}}

The random variables $V^0 = f_M(X) + W$ and $V^1 = g_M(X) + Z$ are conditionally independent given $M=i$ and $X=x$:
\begin{align*}
\Prob_{Y|M=i,X=x} = \Prob_{V^0|M=i,X=x} \otimes \Prob_{V^1|M=i,X=x},
\end{align*}
where
\begin{align*}
\Prob_{V^0|M=i,X=x} &= \Normal(f_i(x),\sigma^2) \\
\Prob_{V^1|M=i,X=x} &= \Normal(g_i(x),\sigma^2 I_n).
\end{align*}
Therefore,
\begin{align*}
& D(\Prob_{Y|M=i,X=x} \| \Prob_{Y|M=j,X=x}) \nonumber\\
&= D(\Prob_{V^0|M=i,X=x} \| \Prob_{V^0|M=j,X=x}) \nonumber\\
& \qquad \qquad + D(\Prob_{V^1|M=i,X=x} \| \Prob_{V^1|M=j,X=x}) \\
&= D\big(\Normal(f_i(x),\sigma^2) \big\| \Normal(f_{j}(x),\sigma^2)\big) \nonumber\\
& \qquad \qquad + D\big(\Normal(g_i(x),\sigma^2 I_n) \big\| \Normal(g_{j}(x),\sigma^2 I_n)\big) \\
&= \frac{1}{2\sigma^2} \left\{ \left[ f_i(x) - f_{j}(x) \right]^2 + \| g_i(x) - g_{j}(x) \|^2 \right\}.
\end{align*}
Plugging this into \eqref{eq:IR_bound}, we get \eqref{eq:1st_order_IR_bound}.

\subsection{Proof of Lemma~\ref{lm:LF_SC}}

Let $Q^*$ denote the product normal distribution $\Normal(c^*,\sigma^2) \otimes \Normal(0,\sigma^2 I_\Dim)$. Observe that, for every $i=0,1,\ldots,N-1$,
\begin{align*}
Y^{(i)} = \big(f_i(x^*_i) + W, \nabla f_i(x^*_i) + Z\big) = (c^* + W,Z) \sim Q^*.
\end{align*}
Let $X_t$ denote the query of $\Alg$ at time $t$ and let $Y_t$ be the corresponding oracle response. Then
\begin{align*}
\Prob_{Y_t|M=i,X_t=x_t} = \Normal(f_i(x_t),\sigma^2) \otimes \Normal(\nabla f_i(x_t),\sigma^2 I_\Dim).
\end{align*}
Hence, applying the LF bound \eqref{eq:LF_bound} with $\ProbQ^*_{Y_t} = Q^*$, we can write
\begin{align}\label{eq:LF_bound_SC}
& I(M; Y_t | X^t, Y^{t-1}) \nonumber\\
&  \le \frac{1}{2\sigma^2}  \max_i \E \left\{ \left[ f_i(X_t) - c^* \right]^2 + \| \nabla f_i(X_t) \|^2 \right\}.
\end{align}
We now relate the right-hand side of \eqref{eq:LF_bound_SC} to the performance of $\Alg$. First of all, by convexity of $f_i$,
\begin{align}
f_i(X_t) - c^* &= f_i(X_t) - f_i(x^*_i) \nonumber\\
&\le \nabla f_i(X_t)^\tr (X_t - x^*_i) \nonumber\\
& \le \| \nabla f(X_t) \|  \| X_t - x^*_i \| \nonumber\\
&\le L D_\cX \| X_t - x^*_i \|, \label{eq:SC_DR_1}
\end{align}
where in the last step we have used the fact that
\begin{align*}
\| \nabla f_i(X_t) \| = \| \nabla f_i(X_t) - \nabla f_i(x^*_i) \|  \le L \| X_t - x^*_i \| \le L D_\cX.
\end{align*}
On the other hand, from  strong convexity we have that
\begin{align}\label{eq:SC_DR_2}
f_i(X_t) \ge c^* + (\kappa^2/2) \| X_t - x^*_i \|^2.
\end{align}
Combining \eqref{eq:SC_DR_1} and \eqref{eq:SC_DR_2}, we therefore obtain
\begin{align}
\left[ f_i(X_t) - c^* \right]^2 & \le 2 D^2_\cX (L/\kappa)^2[f_i(X_t) - c^* ] \nonumber \\
&= 2 D^2_\cX (L/\kappa)^2 \err_\Alg(t,f_i). \label{eq:SC_DR_3}
\end{align}
Moreover, because $\nabla f_i(x^*_i) = 0$,  we can write
\begin{align}
\| \nabla f_i(X_t) \|^2 &=  \| \nabla f_i(X_t) - \nabla f_i(x^*_i) \|^2  \nonumber \\
&\le L^2  \| X_t - x^*_i \|^2 \nonumber \\
&\le 2(L/\kappa)^2 
\err_\Alg(t,f_i).\label{eq:SC_DR_4}
\end{align}
Substituting  \eqref{eq:SC_DR_3} and \eqref{eq:SC_DR_4} into \eqref{eq:LF_bound_SC}, we get (\ref{eq:LF_SC}). Eq.~\eqref{eq:dim_returns} is immediate from definitions.

\section{Lemma on functional recurrences}

\begin{lemma}\label{lm:fun_rec} Suppose that $\{\eps_t\}$ is a sequence of nonnegative reals satisfying 
\begin{align*}
K \log \left(\frac{1}{\eps_T}\right) - L \le \sum^T_{t=1} \eps^\alpha_t, \qquad \forall T
\end{align*}
for some $K,L,\alpha > 0$. Then there exists some constant $0 < c < (K/\alpha)^{1/\alpha}$, such that $\eps_t \ge ct^{-1/\alpha}$ for infinitely many values of $t$. \end{lemma}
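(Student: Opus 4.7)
The plan is to argue by contradiction. Fix any $c$ with $0 < c < (K/\alpha)^{1/\alpha}$, and suppose that the set $\{t : \eps_t \ge c t^{-1/\alpha}\}$ is finite. Then there exists $T_0$ such that $\eps_t < c t^{-1/\alpha}$ for every $t > T_0$. The goal is to derive, for large $T$, a contradiction between the hypothesis of the lemma and the upper bound on $\sum_{t=1}^T \eps_t^\alpha$ implied by the assumption.

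First I would upper-bound the right-hand side of the recurrence. Splitting the sum at $T_0$ and using the standing assumption,
\begin{equation*}
\sum_{t=1}^T \eps_t^\alpha \;\le\; \underbrace{\sum_{t=1}^{T_0}\eps_t^\alpha}_{=: C_1} \;+\; c^\alpha \sum_{t=T_0+1}^T \frac{1}{t} \;\le\; C_1 + c^\alpha \log T + C_2,
\end{equation*}
where $C_2$ is an absolute constant coming from the standard bound on the harmonic sum. Next I would lower-bound the left-hand side by using $\eps_T < cT^{-1/\alpha}$ (which is allowed for $T > T_0$) to get
\begin{equation*}
K\log(1/\eps_T) - L \;>\; (K/\alpha)\log T - K\log c - L.
\end{equation*}

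Combining the two inequalities and rearranging yields
\begin{equation*}
\bigl((K/\alpha) - c^\alpha\bigr)\log T \;\le\; C_1 + C_2 + L + K\log c,
\end{equation*}
valid for every $T > T_0$. The choice $c < (K/\alpha)^{1/\alpha}$ makes the coefficient on the left strictly positive, so the left-hand side tends to $+\infty$ while the right-hand side is bounded; contradiction. Hence the set $\{t : \eps_t \ge c t^{-1/\alpha}\}$ is infinite, as required.

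No step looks like a real obstacle: the argument is just a straightforward accounting of logarithms once we separate out the finite initial segment of the sequence. The only subtlety to watch is to make sure $c$ is chosen strictly below $(K/\alpha)^{1/\alpha}$ so that the coefficient $(K/\alpha) - c^\alpha$ is positive; the strict inequality in the conclusion of the lemma is precisely what gives us this slack.
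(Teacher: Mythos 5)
Your proof is correct and follows essentially the same route as the paper's: fix $c < (K/\alpha)^{1/\alpha}$, assume the exceptional set is finite past some $T_0$, upper-bound the tail of $\sum\eps_t^\alpha$ by $c^\alpha$ times a harmonic tail (hence $O(\log T)$), lower-bound $K\log(1/\eps_T)$ by $(K/\alpha)\log T - K\log c$ via $\eps_T < cT^{-1/\alpha}$, and extract a bounded multiple of $\log T$ as a contradiction. The only difference is that the paper's proof first runs a redundant preliminary step (assuming $\eps_t < ct^{-1/\alpha}$ for \emph{all} $t$ to conclude the exceptional set is nonempty) before doing the finite-set argument; you go directly to the latter, which already subsumes it.
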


\begin{proof}The proof is by contradiction. Suppose first that $t^{1/\alpha}\eps_t < c$ for all $t$. Then
\begin{align*}
K \log \frac{T^{1/\alpha}}{c} - L < c^\alpha \sum^T_{t=1} \frac{1}{t} \le c^\alpha (\log T + 1), \,\, \forall T.
\end{align*}
Rearranging, we get
\begin{align*}
\left(\frac{K}{\alpha} - c^\alpha\right) \log T \le K \log c + L + c^\alpha, \qquad \forall T.
\end{align*}
Since $K/\alpha - c^\alpha > 0$, this implies that $\log T$ is bounded for all large and positive $T$, which is impossible. Hence there exists some set $S \subseteq \Naturals$, such that $\eps_t \ge ct^{-1/\alpha}$ for all $t \in S$. We now show that $S$ must necessarily be countably infinite. Suppose, to the contrary, that it's finite.  Then there is some $T_0$, such that $\eps_t < ct^{-1/\alpha}$ for all $t \ge T_0$. In that case, for $T > T_0$, we can write
\begin{align*}
K \log \frac{T^{1/\alpha}}{c}  - L &\le c^\alpha \left(\sum^{T_0}_{t=1} \eps^\alpha_t + \sum^T_{t=T_0 + 1} \frac{1}{t}\right) \\
&\equiv c^\alpha \left( K(T_0,\alpha) + \sum^T_{t=T_0+1} \frac{1}{t}\right) \\
& \le c^\alpha \left(K(T_0,\alpha) + \log T - \log T_0 \right).
\end{align*}
Rearranging, we see that the inequality
\begin{align*}
& \left(\frac{K}{\alpha} - c^\alpha\right) \log T \\
& \qquad \qquad \le K\log c + L + c^\alpha(K(T_0,\alpha) - \log T_0)
\end{align*}
must hold for all $T > T_0$. Since $K/\alpha > c^\alpha$ by hypothesis, this implies that $\log T$ is bounded for $T > T_0$, which is, again, impossible. Thus, $\eps_t \ge ct^{-1/\alpha}$ for infinitely many values of $t$. \end{proof}

\end{appendices}

\section*{Acknowledgment}

The authors would like to thank the anonymous reviewers for their probing questions and numerous suggestions, which have greatly improved the paper. In particular, we would like to thank one reviewer for suggesting the definition of a strong infinite-step algorithm.

\bibliography{opt_feedback_IT.bbl}

\begin{IEEEbiographynophoto}{Maxim Raginsky} (S'99--M'00) received the B.S. and M.S. degrees in 2000 and the
Ph.D. degree in 2002 from Northwestern University, Evanston, IL, all in electrical engineering. From 2002 to 2004 he was a Postdoctoral
Researcher at the Center for Photonic Communication and Computing at
Northwestern University, where he pursued work on quantum cryptography
and quantum communication and information theory. From 2004 to 2007 he
was a Beckman Foundation Postdoctoral Fellow at the University of
Illinois in Urbana-Champaign, where he carried out research on
information theory, statistical learning and computational
neuroscience. Since September 2007 he has been with Duke University, where he is now Assistant Research Professor of Electrical and Computer Engineering. His interests include statistical signal processing,
information theory, statistical learning and nonparametric
estimation. He is particularly interested in problems that combine the
communication, signal processing and machine learning components in a
novel and nontrivial way, as well as in the theory and practice
of robust statistical inference with limited information.
\end{IEEEbiographynophoto}

\begin{IEEEbiographynophoto}{Alexander Rakhlin} received the B.A. degrees in mathematics and
computer science from Cornell University in 2000, and Ph.D. in
computational neuroscience from MIT in 2006. From 2006 to 2009 he was
a postdoctoral fellow at the Department of Electrical Engineering and
Computer Sciences, UC Berkeley. Since 2009, he has been on the faculty
in the Department of Statistics, University of Pennsylvania. He has
been a co-director of Penn Research in Machine Learning (PRiML) since
2010. His interests include machine learning, statistics,
optimization, and game theory.
\end{IEEEbiographynophoto}

\end{document}